\documentclass[11pt,letterpaper]{article}
\usepackage{fullpage}
\usepackage[top=2cm, bottom=4.5cm, left=2.5cm, right=2.5cm]{geometry}
\usepackage{amsmath,amsthm,amsfonts,amssymb,amscd}
\usepackage{lastpage}
\usepackage{enumerate}
\usepackage{fancyhdr}
\usepackage{mathrsfs}
\usepackage{xcolor}
\usepackage{graphicx}
\usepackage{listings}
\usepackage{framed}
\usepackage[colorlinks=true, urlcolor=blue, linkcolor=blue, citecolor=magenta]{hyperref}
\usepackage{cite}
\usepackage{multirow}
\usepackage{comment}
\usepackage{footnote}
\usepackage{tikz}

\hypersetup{%
  colorlinks=true,
  linkcolor=red,
  linkbordercolor={0 0 1}
}

\lstdefinestyle{Matlab}{
    language        = matlab,
    frame           = lines, 
    basicstyle      = \footnotesize,
    keywordstyle    = \color{blue},
    stringstyle     = \color{green},
    commentstyle    = \color{red}\ttfamily
}
\setlength{\parskip}{0.05in}
\setlength{\parindent}{2em}

\fancyhf{}

\newtheorem{theorem}{Theorem}
\newtheorem{definition}[theorem]{Definition}

\newtheorem{remark}[theorem]{Remark}
\newtheorem{lemma}[theorem]{Lemma}
\newtheorem{hypothesis}[theorem]{Hypothesis}
\newtheorem{corollary}[theorem]{Corollary}

\newtheorem{openproblem}[theorem]{Open Problem}

\headheight 35pt
\lfoot{}
\cfoot{\small\thepage}
\rfoot{}
\headsep 1.5em

\allowdisplaybreaks[4]

\newcommand{\wtwo}{\mathsf{W[2]}}
\newcommand{\wone}{\mathsf{W[1]}}
\newcommand{\fpt}{\mathsf{FPT}}

\newcommand{\F}{\mathbb{F}}
\newcommand{\N}{\mathbb{N}}
\newcommand{\WOne}{\textsf{W[1]}}

\newcommand{\FPT}{\textsf{FPT}}
\newcommand{\NP}{\textsf{NP}}

\newcommand{\nc}{\newcommand}

\newlength{\probwidth}
\setlength{\probwidth}{4cm}

\nc{\prob}[3][9]{
\begin{center}
  \normalfont\fbox{
   \begin{tabular}[t]{
     rp{#1cm}}\textit{Instance:}&#2. \\
     \textit{Problem:}&#3
   \end{tabular}}
\end{center}}

\nc{\pprob}[4][9]{
\begin{center}
   \normalfont\fbox{
    \begin{tabular}[t]{
     rp{#1cm}}\textit{Instance:}&#2. \\
     \textit{Parameter:}&#3. \\
     \textit{Problem:}&#4
   \end{tabular}}
\end{center}}

\nc{\nprob}[4][9]{
\begin{center}
  \normalfont\fbox{

\addtolength{\probwidth}{#1cm}\parbox{\probwidth}{\textsc{#2}\\\hspace*{1.5em}
     \begin{tabular}[t]{
      rp{#1cm}}\textit{Instance:}&#3. \\
      \textit{Problem:}&#4
     \end{tabular}}}
\end{center}}

\nc{\npprob}[5][9]{
\begin{center}
  \normalfont\fbox{

\addtolength{\probwidth}{#1cm}\parbox{\probwidth}{\textsc{#2}\\\hspace*{1.5em}
    \begin{tabular}[t]{
     rp{#1cm}}\textit{Instance:}&#3. \\
     \textit{Parameter:}&#4. \\
     \textit{Problem:}&#5
    \end{tabular}}}
\end{center}}

\nc{\nppxrob}[5][9]{ \normalfont\fbox{

\addtolength{\probwidth}{#1cm}\parbox{\probwidth}{\textsc{#2}\\\hspace*{1.5em}
   \begin{tabular}[t]{
    rp{#1cm}}\textit{Instance:}&#3. \\
    \textit{Parameter:}&#4. \\
    \textit{Problem:}&#5
   \end{tabular}}}}

\nc{\nppprob}[5][4]{
\begin{center}
  \normalfont\fbox{

\addtolength{\probwidth}{#1cm}\parbox{\probwidth}{\textsc{#2}\\\hspace*{1.5em}
    \begin{tabular}[t]{
     rp{#1cm}}\textit{Instance:}&#3. \\
     \textit{Parameter:}&#4. \\
     \textit{Problem:}&#5
    \end{tabular}}}
\end{center}}

\nc{\noptprob}[6][9]{
\begin{center}
  \normalfont\fbox{

\addtolength{\probwidth}{#1cm}\parbox{\probwidth}{\textsc{#2}\\\hspace*{1.5em}
    \begin{tabular}[t]{
     rp{#1cm}}\textit{Instance:}&#3. \\
     \textit{Solution:}&#4. \\
     \textit{Cost:}&#5. \\
     \textit{Goal:}&#6.
    \end{tabular}}}
\end{center}}

\nc{\npprobyn}[6][9]{
\begin{center}
  \normalfont\fbox{

\addtolength{\probwidth}{#1cm}\parbox{\probwidth}{\textsc{#2}\\\hspace*{1.5em}
    \begin{tabular}[t]{
     rp{#1cm}}\textit{Instance:}&#3. \\
     \textit{Parameter:}&#4. \\
     \textit{Problem:}&Distinguish between the following two cases: \\
     \textbf{(YES)}&#5. \\
     \textbf{(NO)}&#6.
    \end{tabular}}}
\end{center}}


\begin{document}

\title{Improved Lower Bounds for Approximating Parameterized Nearest Codeword  and Related Problems under ETH}

\author{
Shuangle Li\thanks{Nanjing University. Email: \texttt{shuangleli@smail.nju.edu.cn}}
\and
Bingkai Lin\thanks{Nanjing University. Email: \texttt{lin@nju.edu.cn}}
\and
Yuwei Liu\thanks{Shanghai Jiao Tong University. Email: \texttt{yuwei.liu@sjtu.edu.cn}}
}

\maketitle

\begin{abstract}
In this paper we present a new gap-creating randomized self-reduction for parameterized Maximum Likelihood Decoding problem over $\mathbb{F}_p$ (\textsc{$k$-MLD$_p$}). 
The reduction takes a \textsc{$k$-MLD$_p$} instance with $k\cdot n$ vectors as input, runs in time $f(k)n^{O(1)}$ for some computable function $f$, outputs a \textsc{$(3/2-\varepsilon)$-Gap-$k'$-MLD$_p$} instance for any $\varepsilon>0$, where $k'=O(k^2\log k)$.
Using this reduction, we show that assuming the randomized Exponential Time Hypothesis (ETH), no algorithms can approximate \textsc{$k$-MLD$_p$} (and therefore its dual problem \textsc{$k$-NCP$_p$}) within factor $(3/2-\varepsilon)$ in $f(k)\cdot n^{o(\sqrt{k/\log k})}$ time for any $\varepsilon>0$.

We then use reduction by Bhattacharyya, Ghoshal, Karthik and Manurangsi (ICALP 2018) to amplify the $(3/2-\varepsilon)$-gap to any constant. As a result, we show that assuming ETH, no algorithms can approximate \textsc{$k$-NCP$_p$} and \textsc{$k$-MDP$_p$} within $\gamma$-factor in $f(k)n^{o(k^{\varepsilon_\gamma})}$ time for some constant $\varepsilon_\gamma>0$. Combining with the gap-preserving reduction by Bennett, Cheraghchi, Guruswami and Ribeiro (STOC 2023), we also obtain similar lower bounds for \textsc{$k$-MDP$_p$}, \textsc{$k$-CVP$_p$} and \textsc{$k$-SVP$_p$}.

These results improve upon the previous $f(k)n^{\Omega(\mathsf{poly} \log k)}$ lower bounds for these problems under ETH using reductions by Bhattacharyya et al. (J.ACM 2021) and Bennett et al. (STOC 2023).
\end{abstract}

\section{Introduction}
The study of linear error correcting codes has drawn attention to two dual fundamental computational problems called \textsc{Nearest Codeword Problem (NCP)} and \textsc{Maximum Likelihood Decoding (MLD)}.
Given a matrix $A\in\mathbb{F}_p^{m\times n}$ and a vector $\vec{t}\in\mathbb{F}_p^m$, the \textsc{Nearest Codeword Problem (NCP)} asks for a vector $\vec{x}\in\mathbb{F}_p^n$ such that $||A\vec x-\vec t||_0$ is minimized. Here $||\cdot||_0$ denotes the Hamming weight. While in the \textsc{Maximum Likelihood Decoding (MLD)}, we are given 
a matrix $A\in\mathbb{F}_p^{m\times n}$ and a vector $\vec{t}\in\mathbb{F}_p^m$, the goal is to minimize $||\vec{x}||_0$ subject to $A\vec x=\vec t$.
Another fundamental problem related to a linear code is the homogeneous version of \textsc{NCP}, known as \textsc{Minimum Distance Problem (MDP)}, where the task is to find a non-zero vector $\vec x$ such that $||A\vec x||_0$ is minimized. 

The computational complexity of \textsc{MLD}, \textsc{NCP} and \textsc{MDP} has been studied with great effort throughout the past several decades.
It is known that \textsc{MLD}, \textsc{NCP} and \textsc{MDP} are not only \NP-hard~\cite{BMT78,Vardy97}, but also \NP-hard to approximate within any constant ratio~\cite{Ste93,ABSS97,DMS03,CW12,AK14,Mic14}. Moreover, the variant of \textsc{MLD} that allows the code being preprocessed by unbounded computational resource is also \NP-hard to approximate within a factor of $(3-\varepsilon)$\cite{FM04,Regev04}.
Also it is proven that assuming $\NP\not\subseteq\mathsf{DTIME}(n^{\mathsf{poly}(\log n)})$, no polynomial time algorithm can approximate \textsc{NCP} up to $2^{\log^{1-\epsilon}n}$ factor for any $\epsilon>0$~\cite{ABSS97,Raz98} and no polynomial time algorithm can approximate \textsc{MDP} up to $2^{\log^{1-\epsilon}n}$ for any $\epsilon>0$~\cite{DMS03,CW12,AK14,Mic14}.
For some specific codes, \textsc{MLD} is also shown to be \NP-hard, e.g. product code\cite{Barg1994}, Reed-Solomon code\cite{GV05}, algebraic geometry code\cite{Cheng08a}.
On the algorithmic side, it is known that NCP can be approximate to $O(n/\log n)$ in polynomial time~\cite{AlonPY09}.

The lattice version of \textsc{NCP} and \textsc{MDP} are known as \textsc{Closest Vector Problem (CVP)} and \textsc{Shortest Vector Problem (SVP)}.
In these problems, a lattice $\mathcal L$ is given instead of a linear code.
For \textsc{CVP} a target $\vec t$ is additionally given and the goal is to find a vector $\vec v\in \mathcal L$ such that $||\vec v-\vec t||_p$ is minimized, where $||\cdot||_p$ denotes the $\ell_p$-norm. 
And for \textsc{SVP} the goal is to find a non-zero vector $\vec v\in\mathcal{L}$ with minimum $\ell_p$ norm.
The study for \textsc{CVP} and \textsc{SVP} also has long history~\cite{Ste93,ABSS97,Ajtai98,GMSS99,Mic00,Mic01,DKRS03,Khot05,HR12,Mic14}.
For \textsc{CVP}, it is \NP-hard to approximate within factor $n^{c/\log\log n}$ for some constant $c>0$~\cite{DKRS03}.
As for \textsc{SVP}, it was shown that no polynomial time algorithm can approximate \textsc{SVP} within any constant factor assuming $\NP\not\subseteq\mathsf{RP}$~\cite{Khot05}, and no polynomial time algorithm can approximate \textsc{SVP} up to $2^{\log ^{1-\epsilon} n}$ factor assuming $\NP\not\subseteq\mathsf{RTIME}(n^{\mathsf{poly}(\log n)})$~\cite{HR12}. Lattice problems have many applications in  cryptography~\cite{Reg09JACM, Reg10}. 
Due to their importance, lattice problems are also extensively studied in the fine-grained complexity area, see, e.g.,~\cite{AS18,ABGS21,BPT22,ABBG0LPS23} and a very recent survey by Bennett~\cite{Bennett23} for more details on hardness of \textsc{SVP}.

Over the past three decades, parameterized complexity, a new framework to address \NP-hard problems, has been rapidly developed and drawing growing attention.
The study in the field of parameterized complexity focuses on whether a problem can be solved in $f(k)\cdot n^{O(1)}$ time (\FPT~time), where $k$ is a parameter given along with the instance.
In the parameterized version of \textsc{$k$-MLD}, \textsc{$k$-NCP}, \textsc{$k$-MDP}, \textsc{$k$-CVP} and \textsc{$k$-SVP}, an integer $k$ is additionally given and the task is to decide whether the optimal value is no greater than $k$.
Downey, Fellows, Vardy and Whittle \cite{DFVW99} showed that \textsc{$k$-MLD} (and therefore \textsc{$k$-NCP}) is $\wone$-hard and belongs to $\wtwo$. They asked if \textsc{$k$-CVP} and \textsc{$k$-SVP} (in $\ell_2$ norm) is $\wone$-hard.
20 years later in recent breakthroughs~\cite{BBE+21,BCGR23}, the parameterized intractability of \textsc{$k$-NCP}, \textsc{$k$-MDP}, \textsc{$k$-CVP} and \textsc{$k$-SVP} are  settled.
Notably they ruled out not only exact \FPT~algorithms, but also \FPT~approximation algorithms as well.
Specifically, 
\cite{BBE+21} first presented a gap-creating reduction for \textsc{$k$-NCP} and then showed gap-preserving reductions from \textsc{$k$-NCP} towards \textsc{$k$-MDP}, \textsc{$k$-CVP} and \textsc{$k$-SVP}. Soon afterwards, Bennett, Cheraghchi, Guruswami and Ribeiro~\cite{BCGR23} improved the gap-preserving reductions for more general cases (general fields and general $\ell_p$ norm). These two works jointly showed that it is $\wone$-hard to approximate \textsc{$k$-NCP} and \textsc{$k$-MDP} within any constant factor over any finite field $\F_p$, and it is $\wone$-hard to approximate \textsc{$k$-CVP} in the $\ell_p$ norm within any constant factor for any $p\geq 1$.
And they showed hardness for \textsc{$k$-SVP} to approximate within any constant factor in the $\ell_p$ norm for any $p>1$, and some constant approaching $2$ for $p=1$ .

After obtaining \FPT-inapproximability results, it is natural to study fine-grained time lower bounds for parameterized approximability of these problems. 
Assuming Gap-ETH \cite{dinur2016mildly,manurangsi2016birthday}, Manurangsi~\cite{Man20} showed that no $f(k)\cdot n^{o(k)}$ time algorithm can approximate \textsc{$k$-NCP} and \textsc{$k$-CVP} to any constant factor.  With  the gap-preserving reduction in~\cite{BCGR23}, one can further show that no $f(k)\cdot n^{o(k)}$ time algorithm can approximate \textsc{$k$-MDP} and \textsc{$k$-SVP} to any constant under the randomized Gap-ETH. All these results are based on an assumption with a gap. This raises the following open question:
\begin{itemize}
    \item[(1)] Can we establish similar lower bounds for these problems under the weaker and gap-free assumption 
of ETH?
\end{itemize}
We note that the gap-preserving reduction in~\cite{BCGR23} from \textsc{Gap-$k$-NCP} (\textsc{Gap-$k$-CVP}) to \textsc{Gap-$k'$-MDP} (\textsc{Gap-$k'$-SVP}) has $k'=O(k)$. So, it suffices to prove constant \textsc{Gap-$k$-NCP} (\textsc{Gap-$k$-CVP}) has no $f(k)\cdot n^{o(k)}$-time algorithm assuming ETH \cite{IP01}. Unfortunately, the gap-creating reduction in~\cite{BBE+21} causes an exponential growth of the parameter and only gives an $\Omega(n^{(\log k)^{1/(2+\epsilon)}})$-time lower bound for constant \textsc{Gap-$k$-NCP} under ETH (See the analysis in Section~\ref{sec:prework}). Therefore, finding better reductions for \textsc{Gap-$k$-NCP} and \textsc{Gap-$k$-CVP} is the crux of improving  lower bounds for \textsc{Gap-$k$-MDP} and \textsc{Gap-$k$-SVP}.

\subsection{Our Contributions}
We take a step forward on closing the gap between results under gap-free assumption (ETH) and gap assumption (Gap-ETH). Our main result is a new direct gap-creating self reduction for \textsc{$k$-MLD}, which is the dual problem of \textsc{$k$-NCP}, with polynomial growth of the parameter.
\begin{theorem}[informal; See Theorem~\ref{theorem: gap creating (main)} for a formal statement]
    For any constant $1<\gamma<\frac{3}{2}$ and prime power $p>1$, there is a reduction runs in $O_k(n^{O(1)})$ that on input a \textsc{$k\text{-MLD}_p$} instance $(V, \vec{t})$, output a \textsc{Gap-$k$-MLD$_p$} instance $(V', \vec{t}')$ satisfies:
    \begin{itemize}
        \item (Completeness) If there exists $k$ vectors in $V$ with their sum \footnote{The definition of \textsc{$k$-MLD} used in our proof is a slightly different variant, where the vectors directly sum up to the target in the YES case, but they are essentially equivalent, see Section~\ref{prelim:probs} for more details.} being $\vec{t}$, then there exists $k'$ vectors in $V'$ with their sum being $\vec{t}'$.
        \item (Soundness) If for any set $S\subseteq V$ with size at most $k$, $\vec{t}\notin\text{Span}(S)$, then for any set $S'\subseteq V'$ with size at most $\gamma k'$, $\vec{t}'\notin\text{Span}(S')$.
        \item Polynomial parameter growth $k'=O(k^2\log k)$. (And $k'=O(k^3)$ if not allowing randomness).
    \end{itemize}
\end{theorem}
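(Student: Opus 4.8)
The plan is to first put the input into a convenient ``multicolored'' normal form and then glue together many labelled copies of it through consistency coordinates, so that a globally consistent solution stays cheap while any inconsistent one is punished multiplicatively. Concretely, I would first reduce (in $f(k)\cdot n^{O(1)}$ time, keeping the parameter equal to $k$) to the partitioned version of \textsc{$k$-MLD$_p$}: the ground set is $V=V_1\sqcup\dots\sqcup V_k$ with $|V_i|=n$ (which is exactly the ``$k\cdot n$ vectors'' in the hypothesis), a \textbf{YES} instance has one vector per part whose sum is $\vec t$, and a \textbf{NO} instance has no set of at most $k$ vectors whose span contains $\vec t$. Since both the hypothesis and the conclusion of the theorem are phrased in terms of spans, this normalization is lossless.

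Next, the construction. Fix a family $T$ of ``tests'' together with a connected ``consistency graph'' $H$ on $T$ (think of $H$ as a bounded-degree expander, obtained either at random with degree $O(\log k)$ or as an explicit object of size $\mathrm{poly}(k)$). Assign to each $v\in V_i$ an injective label $\ell(v)$ lying in a linear code over $\mathbb{F}_p$ of good relative distance. For every triple $(\tau,i,v)$ with $\tau\in T$, $i\in[k]$, $v\in V_i$, put into $V'$ a vector $w_{\tau,i,v}$ that (i) places $v$ in the $\tau$-th ``block'', so the $\tau$-block of $\vec t'$ equals $\vec t$, and (ii) writes $\pm\ell(v)$ into the check coordinates attached to the edges of $H$ incident to $\tau$ (for colour $i$), with $\vec t'$ set to $0$ there, the two signs chosen so that two copies of the \emph{same} $V_i$-vector sitting in adjacent tests cancel. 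Set $k':=k\cdot|T|$, which gives $k'=O(k^2\log k)$ in the randomized version and $k'=O(k^3)$ when $T$ and $H$ are taken explicit.

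Completeness and the parameter bound are then immediate: if $v_i\in V_i$ and $\sum_i v_i=\vec t$, the set $\{w_{\tau,i,v_i}: \tau\in T,\ i\in[k]\}$ has exactly $k|T|=k'$ vectors, sums to $\vec t$ in every block, and has all its check coordinates cancel in pairs, so its sum is $\vec t'$. Soundness is where the work is. Suppose $S'\subseteq V'$, $|S'|\le\gamma k'$, and $\vec t'=\sum_{w\in S'}c_w w$ with $c_w\in\mathbb{F}_p$. Restricting this identity to the $\tau$-block shows that the vectors of $S'$ lying in test $\tau$, with coefficients, already span $\vec t$; hence if \emph{some} test contains at most $k$ vectors of $S'$ we contradict the \textbf{NO} property of the original instance, and we are done. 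So it suffices to show that $|S'|<\gamma k'$, for any fixed $\gamma<\tfrac{3}{2}$, forces the existence of such a ``light'' test. The consistency coordinates give that for each colour $i$ the weighted label-sum $\sum_{v\in V_i}c_{w_{\tau,i,v}}\ell(v)$ is the same for all $\tau$ joined by an edge of $H$, hence (by connectivity) across all of $T$; combined with the distance of the label code this forces the ``excess'' incidences of colour $i$ (beyond one honest representative per test) to propagate along $H$ rather than concentrate on a few tests. Feeding the expansion of $H$ into a counting argument, one shows that if every test were ``heavy'' (more than $k$ vectors of $S'$), then the total excess would be at least $(\tfrac{1}{2}-o(1))k'$, i.e. $|S'|\ge(\tfrac{3}{2}-o(1))k'$, the desired contradiction, where the error term can be driven below any $\varepsilon>0$ by taking the test family large enough.

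I expect the soundness counting to be the main obstacle, in two respects. First, one must honestly extract the constant $\tfrac{3}{2}$ rather than the trivial $1+1/k$ that a naive averaging over tests yields: this is precisely where the expansion of $H$ and the distance of $\ell$ enter, to argue that each unit of ``excess'' in a test is forced to pay for itself in a constant fraction of its $H$-neighbours. Second, the coefficients $c_w$ need not be $0/1$, so ``equal weighted label-sums'' is genuinely weaker than ``equal chosen vector''; handling this requires either a preliminary reduction of the spanning identity to a $0/1$ support (standard, but delicate over small $\mathbb{F}_p$) or a direct argument that a low-weight codeword of the label code has bounded support with controlled coefficients. The randomness in the reduction is confined to sampling $T$ and $H$ so that the expansion/catching property holds with high probability; replacing them by explicit expanders and codes is exactly what turns $O(k^2\log k)$ into $O(k^3)$.
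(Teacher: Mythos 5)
Your construction has a genuine soundness gap, and it is not a fixable detail but the heart of the matter. The NO hypothesis of the input instance only says that no set of at most $k$ original vectors spans $\vec t$; it says nothing about $k+1$ vectors, and in the hard instances (e.g.\ those produced from \textsc{3-SAT}) a spanning set of size $k+1$ may well exist. Take such a set $u_1,\dots,u_{k+1}$ with coefficients $\alpha_1,\dots,\alpha_{k+1}$, $\sum_r\alpha_r u_r=\vec t$, and replicate it \emph{identically} in every test $\tau$: pick $w_{\tau,i_r,u_r}$ with coefficient $\alpha_r$ for all $\tau\in T$. Each $\tau$-block then sums to $\vec t$, and every check coordinate attached to an edge $(\tau,\tau')$ of $H$ vanishes, because the weighted label-sum $\sum_r\alpha_r\ell(u_r)$ is literally the same on both endpoints and the $\pm$ signs cancel it. This solution is perfectly consistent, so neither the expansion of $H$ nor the distance of the label code imposes any penalty, yet its size is $(k+1)|T|=(1+\tfrac1k)k'<\gamma k'$ for any fixed $\gamma>1$ once $k$ is large. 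Hence the ``naive $1+1/k$'' you hoped to beat is in fact tight for your construction: the consistency mechanism punishes only \emph{inconsistency across tests}, while the dangerous cheat is a globally consistent over-budget solution repeated everywhere. Your claimed counting step (``if every test is heavy, the total excess is at least $(\tfrac12-o(1))k'$'') is therefore false as stated, and no choice of expander or code can rescue it within this architecture.

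The paper gets around exactly this obstacle with a structurally different mechanism. Besides the ``$A$-side'' vectors (your $w_{\tau,i,v}$, duplicated $m/k$ times), it introduces a second family $B=B_1\dot\cup\dots\dot\cup B_m$, one group per coordinate of a code $C$ with large $\varepsilon$-collision number, whose members are indexed by all $k$-tuples $\vec\sigma\in\Sigma^k$ and carry negated one-hot encodings; the target forces at least one $B_j$-vector per coordinate $j$, so the $B$-side accounts for half of the budget $k'=2m$. The point is that if some color class contains two distinct chosen vectors, their codewords disagree on all but an $\varepsilon$-fraction of coordinates (this is what the collision number quantifies), and on every coordinate where they disagree a single $B_j$-vector cannot cancel both one-hot positions, so a second $B_j$-vector is needed. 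Thus the ``repeat a size-$(k{+}1)$ spanning set everywhere'' cheat is charged roughly $2(1-\varepsilon)m$ on the $B$-side, giving total $\ge(\tfrac32-\varepsilon)k'$; the alternative is $\ge 2k$ vectors per copy on the $A$-side, again $\ge\tfrac32 k'$. Your normalization step, completeness argument, and parameter bookkeeping are fine, but without a component playing the role of the $B$-side (a per-coordinate family whose multiplicity is forced up by within-color duplication), the $(\tfrac32-\varepsilon)$ soundness cannot be reached.
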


Combining this gap-creating reduction with the $f(k)n^{\Omega(k)}$-time ETH lower bound for \textsc{$k$-MLD} in~\cite[Theorem 11]{LinRSW22}, we obtain improved lower bounds for \textsc{Gap-$k$-NCP} assuming ETH and randomized ETH.

\begin{corollary}
    Assuming randomized ETH, for any prime power $p>1$ and real number $\gamma\in(1,\frac{3}{2})$, no $f(k)n^{o(\sqrt{k/\log k})}$ time algorithm can solve \textsc{$\gamma$-Gap-$k$-NCP$_p$}.
\end{corollary}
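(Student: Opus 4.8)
The plan is to derive the corollary by composing the gap-creating self-reduction (formally Theorem~\ref{theorem: gap creating (main)}) with the known ETH-hardness of \emph{exact} \textsc{$k$-MLD$_p$}. The three ingredients I would use are: (i) the result of Lin, Ren, Sun and Wang~\cite[Theorem 11]{LinRSW22} that, assuming ETH, no $f(\kappa)\cdot n^{o(\kappa)}$-time algorithm solves exact \textsc{$\kappa$-MLD$_p$}; (ii) the self-reduction above, which is randomized, runs in time $f(\kappa)\cdot n^{O(1)}$, maps YES/NO instances to YES/NO instances of a $\gamma$-gap problem, and blows the parameter up only to $\kappa'=O(\kappa^2\log\kappa)$; and (iii) the classical, polynomial-time and parameter-preserving equivalences between \textsc{$k$-MLD$_p$} and its dual \textsc{$k$-NCP$_p$} (moving between a generator matrix and a parity-check matrix) and between the exact-sum variant used inside the reduction and the standard formulation (cf.\ Section~\ref{prelim:probs}). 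Chaining a hypothetical fast algorithm for the gap problem through (iii) and (ii) should contradict (i).

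Concretely, I would argue by contradiction. Suppose some algorithm $\mathcal A$ solves \textsc{$\gamma$-Gap-$k$-NCP$_p$} in time $f(k)\cdot n^{o(\sqrt{k/\log k})}$ for a fixed prime power $p$ and fixed $\gamma\in(1,\tfrac32)$. By (iii), $\mathcal A$ also solves \textsc{$\gamma$-Gap-$k$-MLD$_p$} within essentially the same bound. Given a \textsc{$\kappa$-MLD$_p$} instance $(V,\vec t)$ of size $n$, first run the self-reduction of (ii) to obtain, in time $f(\kappa)\cdot n^{O(1)}$, a \textsc{$\gamma$-Gap-$\kappa'$-MLD$_p$} instance of size $N\le f(\kappa)\cdot n^{O(1)}$ with $\kappa'=O(\kappa^2\log\kappa)$, and then call $\mathcal A$ on it. The key arithmetic is that $\log\kappa'=\Theta(\log\kappa)$, hence
\[
\sqrt{\kappa'/\log\kappa'}=\sqrt{\frac{O(\kappa^2\log\kappa)}{\Theta(\log\kappa)}}=\Theta(\kappa),
\]
so $\mathcal A$ runs in time $f(\kappa')\cdot N^{o(\sqrt{\kappa'/\log\kappa'})}=f'(\kappa)\cdot n^{o(\kappa)}$. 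Adding the cost of the reduction gives an $f''(\kappa)\cdot n^{o(\kappa)}$-time algorithm for exact \textsc{$\kappa$-MLD$_p$}, contradicting (i); since the reduction is randomized, the contradiction is with \emph{randomized} ETH, which is exactly the hypothesis.

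The substantive content is all in Theorem~\ref{theorem: gap creating (main)}; what remains for the corollary is bookkeeping, and the points I would be most careful about are the following. First, I want the reduction's running time to be $f(\kappa)\cdot n^{c}$ for a \emph{universal} exponent $c$, so that $N^{o(\kappa)}=(f(\kappa)\,n^{c})^{o(\kappa)}$ is still of the form $f''(\kappa)\,n^{o(\kappa)}$; this should follow from the stated ``$O_k(n^{O(1)})$'' bound. Second, I would confirm that the reduction's (constant, or polynomially small) error probability can be pushed below any fixed constant by $O(1)$ independent repetitions, so that the composed randomized algorithm genuinely refutes randomized ETH. Third, I would check that the duality $\textsc{MLD}_p\leftrightarrow\textsc{NCP}_p$ and the passage to the exact-sum variant each cost only a constant-factor change in the parameter, so they do not spoil the $\sqrt{k/\log k}$ bound. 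None of these is expected to be a real obstacle — the genuine difficulty, namely creating a constant gap with only polynomial parameter blow-up, is entirely encapsulated in the main theorem.
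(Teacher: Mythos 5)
Your proposal is correct and follows essentially the same route as the paper: combine the ETH-hardness of exact \textsc{$k$-MLD$_p$} from \cite[Theorem 11]{LinRSW22} with the randomized gap-creating self-reduction of Theorem~\ref{theorem: gap creating (main)}, use $\sqrt{k'/\log k'}=\Theta(k)$ for $k'=O(k^2\log k)$, and transfer to \textsc{NCP} via the parameter-preserving equivalence of Appendix~\ref{appendix: equivalence between MLD and NCP}. The only cosmetic difference is that the paper handles the coefficient issue by directly strengthening the soundness of the \cite{LinRSW22} reduction (appending unit vectors so only all-one coefficients can work), rather than invoking a generic equivalence between the exact-sum and general-coefficient variants, but this is bookkeeping of the kind you already flagged.
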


\begin{corollary}
    Assuming ETH, for any prime power $p>1$ and real number $\gamma\in(1,\frac{3}{2})$, no $f(k)n^{o(k^{1/3})}$ time algorithm can solve \textsc{$\gamma$-Gap-$k$-NCP$_p$}.
\end{corollary}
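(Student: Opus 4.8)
The plan is to obtain this corollary by composing, with essentially only bookkeeping, three known reductions: (i) the $f(k)n^{\Omega(k)}$-time ETH lower bound for \emph{exact} \textsc{$k$-MLD$_p$} from \cite[Theorem 11]{LinRSW22}; (ii) the \emph{deterministic} version of the gap-creating self-reduction of Theorem~\ref{theorem: gap creating (main)}, which runs in time $f(k)n^{O(1)}$, outputs a \textsc{$\gamma$-Gap-$k'$-MLD$_p$} instance, and has cubic parameter growth $k'=O(k^3)$; and (iii) the standard parameter- and gap-preserving equivalence between \textsc{$k$-MLD$_p$} and its dual \textsc{$k$-NCP$_p$}, obtained by passing between a generator matrix and a parity-check matrix of the same code (see Section~\ref{prelim:probs}), which is an exact translation of the optimum value and hence changes neither the parameter nor the gap, only the instance size (polynomially).

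Write $\kappa$ for the parameter in the statement (the parameter of the gap problem). First I would suppose, for contradiction, that some algorithm $\mathcal A$ solves \textsc{$\gamma$-Gap-$\kappa$-NCP$_p$} in time $f(\kappa)N^{o(\kappa^{1/3})}$ on instances of size $N$. By ingredient (iii) this gives an algorithm $\mathcal A'$ for \textsc{$\gamma$-Gap-$\kappa$-MLD$_p$} of the same asymptotic running time. Next, given an instance $(V,\vec t)$ of exact \textsc{$k$-MLD$_p$} on $kn$ vectors, I would run the deterministic reduction of Theorem~\ref{theorem: gap creating (main)} to get, in time $f(k)n^{O(1)}$, a \textsc{$\gamma$-Gap-$k'$-MLD$_p$} instance $(V',\vec t')$ with $k'\le Ck^3$ and $|V'|\le g(k)n^{O(1)}=:N$; by completeness and soundness, running $\mathcal A'$ on $(V',\vec t')$ decides whether $(V,\vec t)$ is a YES instance.

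The composition runs in time $f(k)n^{O(1)}+f'(k')\,N^{o((k')^{1/3})}$. Since $k'\le Ck^3$, we get $(k')^{1/3}\le C^{1/3}k$, and as $k'\to\infty$ with $k$ any function that is $o((k')^{1/3})$ becomes $o(k)$ after substituting $k'=k'(k)$; therefore $N^{o((k')^{1/3})}\le (g(k)n^{O(1)})^{o(k)}=g(k)^{o(k)}\cdot n^{o(k)}$, while $f'(k')$ depends on $k$ only. Collapsing all $k$-dependent factors into a single function $h$, the whole procedure solves exact \textsc{$k$-MLD$_p$} in time $h(k)n^{o(k)}$, contradicting \cite[Theorem 11]{LinRSW22} under ETH. (If \cite{LinRSW22} states its bound for a slightly different formulation of \textsc{$k$-MLD$_p$}, matching it to the ``vectors sum to the target'' variant used in Theorem~\ref{theorem: gap creating (main)} is the routine equivalence recorded in Section~\ref{prelim:probs}.)

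I do not expect a real obstacle: the mathematical content is contained in Theorem~\ref{theorem: gap creating (main)}, and what remains is to keep the three parameter conventions straight --- in particular, to see that the cubic blow-up $k'=O(k^3)$ is exactly what turns an $n^{\Omega(k)}$ bound for exact \textsc{$k$-MLD$_p$} into an $n^{\Omega(k^{1/3})}$ bound for the gap problem, and that the size blow-up $N=g(k)n^{O(1)}$ affects only a $k$-dependent multiplicative constant in the exponent rather than the base $n$. No gap amplification is used here, since Theorem~\ref{theorem: gap creating (main)} already produces every gap $\gamma<3/2$, which is exactly the range asserted; reaching larger constant gaps (as in the paper's subsequent corollaries) would instead invoke the amplification of Bhattacharyya, Ghoshal, Karthik and Manurangsi, but that is not needed for this statement.
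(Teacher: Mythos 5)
Your proposal is correct and follows essentially the same route the paper intends for this corollary: combine the exact $f(k)n^{\Omega(k)}$ ETH lower bound for \textsc{$k$-MLD$_p$} from \cite[Theorem 11]{LinRSW22} (with the routine soundness strengthening to nonzero coefficients) with the deterministic, cubic-blow-up variant of the gap-creating reduction and the parameter- and gap-preserving \textsc{MLD}/\textsc{NCP} equivalence, so that $k'=O(k^3)$ turns $n^{\Omega(k)}$ into $n^{\Omega(k^{1/3})}$. The only point worth keeping in mind is that the deterministic $k'=O(k^3)$ version is not a separately stated theorem in the paper but is obtained by instantiating Lemma~\ref{lemma: gap creating (warm up)} and the duplication step with the Reed--Solomon-based code of Theorem~\ref{thm:collision number in rs code} in place of the random code, exactly as indicated in the technical overview.
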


By applying the gap amplification procefure in~\cite{BGKM18} ($\gamma \rightarrow \Omega(\gamma^2), k\rightarrow O(k^2)$, see Theorem \ref{theorem: gap amplification for mld in BGKM18} for a formal statement) sufficiently many (but still constant) times, we obtain a reduction for \textsc{Gap-$k$-MLD} with any constant gap with still polynomial growth of parameter. Therefore we obtain the following improved ETH lower bound for \textsc{$k$-NCP}.

\begin{corollary}
    Assuming ETH, for any prime power $p>1$ and real number $\gamma>1$, no $f(k)n^{o(k^\epsilon)}$ time algorithm can solve \textsc{$\gamma$-Gap-$k$-NCP$_p$} where $\epsilon=\frac{1}{\textsf{polylog}(\gamma)}$ is a constant.
\end{corollary}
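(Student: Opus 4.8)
The plan is to bootstrap the $(3/2-\varepsilon)$-gap hardness of \textsc{$k$-MLD$_p$} to an arbitrary constant gap by composing the gap-creating reduction of Theorem~\ref{theorem: gap creating (main)} with a constant number of rounds of the gap amplification of~\cite{BGKM18} (Theorem~\ref{theorem: gap amplification for mld in BGKM18}), and then to transfer the resulting lower bound to \textsc{$k$-NCP$_p$} using the parameter-, field-, and gap-preserving duality between \textsc{MLD} and \textsc{NCP} recalled in Section~\ref{prelim:probs}. The whole argument is a composition of already-established reductions, so the only real work is the parameter bookkeeping.

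\textbf{The composed reduction.} Fixing a prime power $p>1$, the target constant $\gamma>1$, and a base gap $\gamma_0\in(1,3/2)$ close enough to $3/2$, I would proceed as follows. Starting from a \textsc{$k$-MLD$_p$} instance on $kn$ vectors, Theorem~\ref{theorem: gap creating (main)} produces (deterministically) a \textsc{$\gamma_0$-Gap-$k_1$-MLD$_p$} instance with $k_1=O(k^3)$ in time $O_k(n^{O(1)})$. Then, since one application of Theorem~\ref{theorem: gap amplification for mld in BGKM18} maps a \textsc{$\gamma'$-Gap-$\kappa$-MLD$_p$} instance to a \textsc{$\Omega((\gamma')^2)$-Gap-$O(\kappa^2)$-MLD$_p$} instance in polynomial time, and the gap is (at least) squared per round, $t=O(\log\log\gamma)$ rounds suffice to raise $\gamma_0$ above $\gamma$. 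Because $\gamma$ is a constant, $t$ is a constant; hence $2^{O(t)}=\mathsf{polylog}(\gamma)$, the parameter after $t$ rounds is $k_1^{2^{O(t)}}=k^{c(\gamma)}$ with $c(\gamma)=\mathsf{polylog}(\gamma)$, and the number of vectors is $N=g(k)\,n^{O(1)}$. Composing these steps and dualizing at the end yields a reduction, running in $f(k)\,n^{O(1)}$ time, from a \textsc{$k$-MLD$_p$} instance on $kn$ vectors to a \textsc{$\gamma$-Gap-$k'$-NCP$_p$} instance with $k'=k^{c(\gamma)}$ and $N=g(k)\,n^{O(1)}$ vectors.

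\textbf{Deriving the lower bound.} I would then set $\epsilon=\epsilon_\gamma:=\tfrac{1}{c(\gamma)+1}=\tfrac{1}{\mathsf{polylog}(\gamma)}$ and argue by contradiction: if some algorithm solved \textsc{$\gamma$-Gap-$k$-NCP$_p$} in time $h(k)\,N^{o(k^{\epsilon})}$, then running it on the output of the composed reduction would solve \textsc{$k$-MLD$_p$} on $kn$ vectors in time
\[
f(k)\,n^{O(1)}\;+\;h\!\big(k^{c(\gamma)}\big)\cdot\big(g(k)\,n^{O(1)}\big)^{\,o\left(k^{c(\gamma)\epsilon}\right)}.
\]
Since $c(\gamma)\epsilon=\tfrac{c(\gamma)}{c(\gamma)+1}<1$, the exponent $o(k^{c(\gamma)\epsilon})$ is $o(k)$, so the $n^{O(1)}$ factor contributes only $n^{o(k)}$ while the $h(k^{c(\gamma)})$ and $g(k)^{o(\cdot)}$ factors depend on $k$ alone; the total is $\tilde f(k)\,n^{o(k)}$. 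This contradicts the $f(k)\,n^{\Omega(k)}$-time ETH lower bound for \textsc{$k$-MLD$_p$} of~\cite[Theorem~11]{LinRSW22}, giving the claimed $f(k)\,n^{o(k^{\epsilon_\gamma})}$-time lower bound with $\epsilon_\gamma=1/\mathsf{polylog}(\gamma)$.

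\textbf{Main obstacle.} The substantive points live entirely in the parameter arithmetic: (i) verifying that iterating Theorem~\ref{theorem: gap amplification for mld in BGKM18} genuinely amplifies from a base gap $\gamma_0\in(1,3/2)$ — i.e.\ that $\gamma_0$ lies above the fixed point of the map $\gamma'\mapsto\Omega((\gamma')^2)$, which is why $\gamma_0$ must be chosen near $3/2$ — and tracking the hidden constants so that $t=O(\log\log\gamma)$ rounds really do reach gap $\gamma$; and (ii) confirming that after these constantly many rounds the parameter is still $k^{c(\gamma)}$ with $c(\gamma)=\mathsf{polylog}(\gamma)$ depending only on $\gamma$, so that the small constant $\epsilon_\gamma=1/\mathsf{polylog}(\gamma)$ indeed makes $(k')^{\epsilon_\gamma}=k^{c(\gamma)\epsilon_\gamma}=o(k)$. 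Everything else is black-box composition of Theorem~\ref{theorem: gap creating (main)}, Theorem~\ref{theorem: gap amplification for mld in BGKM18}, \cite[Theorem~11]{LinRSW22}, and \textsc{MLD}/\textsc{NCP} duality.
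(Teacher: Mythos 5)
Your proposal is correct and follows essentially the same route as the paper: create a $(3/2-\varepsilon)$-gap via the self-reduction (using the deterministic, Reed--Solomon-based variant with $k_1=O(k^3)$ so that plain ETH suffices), amplify by iterating the BGKM18 step $O(\log\log\gamma)$ times so the parameter grows to $k^{\mathsf{polylog}(\gamma)}$, transfer to \textsc{NCP} via the gap- and parameter-preserving equivalence, and contradict the $n^{\Omega(k)}$ ETH lower bound for \textsc{$k$-MLD$_p$} from \cite[Theorem~11]{LinRSW22}. The parameter bookkeeping you flag (squaring of the gap above its fixed point and $\epsilon=1/\mathsf{polylog}(\gamma)$) matches the paper's own argument.
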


Combining our results of \textsc{Gap-$k$-NCP$_p$} with the gap-preserving reductions in \cite{BBE+21} and \cite{BCGR23}, we obtain improved ETH lower bounds for constant approximating \textsc{$k$-NCP}, \textsc{$k$-CVP}, \textsc{$k$-MDP} and \textsc{$k$-SVP}. The summarize of corollaries are present in Table~\ref{table: corollaries}.

\begin{savenotes}
\begin{table}[h]
\begin{center}
\begin{tabular}{c|c|c|c|c}
    \multicolumn{5}{c}{\bf Summarize of Corollaries}\\
    \hline
    {\bf \normalsize{Problem}}
    & {\bf \normalsize{Inapprox Factor}}
    & {\bf \normalsize{Lower Bound}}
    & {\bf \normalsize{Dependency}}
    & {\bf \normalsize{Specification}}\\
    \hline
    \textsc{$k$-NCP} & any $\gamma\in(1,\frac{3}{2})$ & $f(k)n^{\Omega(\sqrt{k/\log k})}$ &  & any finite field $\F_p$\\
    \textsc{$k$-NCP} & any $\gamma>1$ & $f(k)n^{\Omega(k^\epsilon)}$ & $\epsilon=\frac{1}{\textsf{polylog}(\gamma)}$ & any finite field $\F_p$\\
    \textsc{$k$-MDP} & any $\gamma>1$ & $f(k)n^{\Omega(k^\epsilon)}$ & $\epsilon=\frac{1}{p\log \gamma\cdot\textsf{polylog}(p)}$ & any finite field $\F_p$\\
    \textsc{$k$-CVP} & any $\gamma>1$ & $f(k)n^{\Omega(k^\epsilon)}$ & $\epsilon=\Theta(\frac{1}{\mathsf{polylog}(\gamma)})$ & in any $\ell_p$ norm, $p\geq 1$\\
    \textsc{$k$-SVP} & any $\gamma>1$ & $f(k)n^{\Omega(k^\epsilon)}$ & $\epsilon=\epsilon(p,\gamma)$\footnote{The constant $\epsilon$ is rather complicated and has no closed form, see Theorem~\ref{thm:CVP to SVP}.
    } & in any $\ell_p$ norm, $p>1$\\
    \textsc{$k$-SVP} & any $\gamma\in[1,2)$ & $f(k)n^{\Omega(k^\epsilon)}$ & $\epsilon=\epsilon(p,\gamma)$\footnote{Same reason as above, see Theorem~\ref{thm:NCP to SVP}.} & in any $\ell_p$ norm, $p\geq 1$\\
    \hline
\end{tabular}
\end{center}
\caption{The $f(k)n^{\Omega(k^\epsilon)}$-time lower bound for \textsc{$k$-NCP} and \textsc{$k$-CVP} are based on ETH. The other lower bounds are based on randomized ETH.}
\label{table: corollaries}
\end{table}
\end{savenotes}

\subsection{Technical Overview of Gap Creation Step}
\label{Introduction-Technical Overview-Gap Creation}
We implicitly use the  threshold graph composition method~\cite{Lin18,Lin19,BKN21,LRSW23} to  construct a $(3/2-\varepsilon)$-gap producing reduction for the $k$-MLD problem.   
This technique was first introduced  in \cite{Lin18} to prove the $\wone$-hardness of \textsc{$k$-Biclique} problem. A threshold graph is a bipartite graph that has a ``threshold property'', meaning that there is a significant gap in the number of common neighbors between any $k$ vertices and any $k+1$ vertices on the left side. Threshold graph and its variants have been widely used to show hardness of approximation for various parameterized problems, such as \textsc{$k$-DominatingSet} \cite{CL19}, \textsc{$k$-SetCover} \cite{Lin19,KN21}, \textsc{$k$-SetIntersection} \cite{BKN21} or to create gap for subsequent reductions, e.g. \cite{BBE+21}.

Let $\dot\cup$ denotes for union set of multiple disjoint sets. In this paper, we implicitly use the strong threshold graphs in  \cite{LRSW23}, which are bipartite graphs $T=(A\dot\cup B,E_T)$ with the following properties:
\begin{description}
    \item[(i)] $A=A_1\dot\cup A_2\dot\cup\cdots\dot\cup A_k$.
    \item[(ii)] $B=B_1\dot\cup B_2\dot\cup\cdots\dot\cup B_m$.
    \item[(iii)] For any $a_1\in A_1,\ldots,a_k\in A_k$ and $i\in [m]$, $a_1,\ldots,a_k$ have a common neighbor in $B_i$.
    \item[(iv)] For any $X\subseteq A$ and $I\subseteq [m]$ with $|I|\ge\varepsilon m$, if for every $i\in I$, there exists $b_i\in B_i$ has $k+1$ neighbors in $X$, then $|X|>h$.
\end{description}
These strong threshold graphs are constructed from error-correcting codes with large relative distance ($1-o(1)$), and such ``threshold'' properties essentially come from the following intuition of ECC: \emph{If there is a collection of codewords $(X)$, and a constant fraction of entries of these codewords $(I\subseteq[m],|I|\geq \varepsilon m)$ such that, for each entry $(i\in I)$, there exists two distinct codewords in the collection that having same content in it. Then, the collection must have huge size $($at least $h)$.} To characterize the aforementioned property, previous works~\cite{KN21,LRSW23} introduced the definition of ($\varepsilon$-)\emph{Collision Number} of an error-correcting code $C$, $\mathrm{Col}_{\varepsilon}(C)$, which is the minimum size of $X$ mentioned above.

\noindent\textbf{Diving into coding-based threshold graph.} Our construction deeply relies on the collision number of an ECC, so we only use threshold graph as an intuitive illustration for readers, and we directly use the error-correcting codes in our formal analysis. An informal but intuitive pictorial illustration of our construction is in Figure~\ref{Pictorial illustration in introduction}.

\begin{figure}[h]
\centering
\scalebox{0.8}{
\tikzset{every picture/.style={line width=0.75pt}} 

\begin{tikzpicture}[x=0.75pt,y=0.75pt,yscale=-1,xscale=1]

\draw (21,13) node [anchor=north west][inner sep=0.75pt]   [align=left] {For $\displaystyle i\in [ k]$, $\displaystyle \vec{v} \in V_{i}$ associated with codeword $\displaystyle C(\vec{v}) \in \Sigma ^{m}$, \\corresponds to the following vector in $\displaystyle A_{i}$:};
\draw (195,58) node [anchor=north west][inner sep=0.75pt]   [align=left] {$\displaystyle i$-th part of each segment};
\draw (454,60) node [anchor=north west][inner sep=0.75pt]   [align=left] {$\displaystyle i$-th entry};
\draw (110,160) node [anchor=north west][inner sep=0.75pt]   [align=left] {$\displaystyle m$ segments, one-hot encoding of each entry in $\displaystyle C(\vec{v})$};
\draw   (20,110) -- (590,110) -- (590,130) -- (20,130) -- cycle ;
\draw [line width=3]    (60,110) -- (60,130) ;
\draw [line width=3]    (160,110) -- (160,130) ;
\draw [line width=3]    (260,110) -- (260,130) ;
\draw    (100,110) -- (100,130) ;
\fill[color={rgb, 255:red, 30; green, 255; blue, 0 }]   (101,111) -- (119,111) -- (119,129) -- (101,129) -- cycle ;
\draw    (120,110) -- (120,130) ;
\draw    (200,110) -- (200,130) ;
\fill[color={rgb, 255:red, 255; green, 246; blue, 100 }]   (201,111) -- (219,111) -- (219,129) -- (201,129) -- cycle ;
\draw    (220,110) -- (220,130) ;
\draw [line width=3]    (350,110) -- (350,130) ;
\draw [line width=3]    (450,110) -- (450,130) ;
\draw    (390,110) -- (390,130) ;
\fill[color={rgb, 255:red, 0; green, 255; blue, 246 }]   (391,111) -- (409,111) -- (409,129) -- (391,129) -- cycle ;
\draw    (410,110) -- (410,130) ;
\draw    (520,110) -- (520,130) ;
\draw    (480,110) -- (480,130) ;
\draw    (490,110) -- (490,130) ;
\draw    (190,80) -- (111.87,107.3) ;
\draw [shift={(110,108)}, rotate = 339.44] [color={rgb, 255:red, 0; green, 0; blue, 0 }  ][line width=0.75]    (10.93,-3.29) .. controls (6.95,-1.4) and (3.31,-0.3) .. (0,0) .. controls (3.31,0.3) and (6.95,1.4) .. (10.93,3.29)   ;
\draw    (240,80) -- (211.41,106.59) ;
\draw [shift={(210,108)}, rotate = 315] [color={rgb, 255:red, 0; green, 0; blue, 0 }  ][line width=0.75]    (10.93,-3.29) .. controls (6.95,-1.4) and (3.31,-0.3) .. (0,0) .. controls (3.31,0.3) and (6.95,1.4) .. (10.93,3.29)   ;
\draw    (360,80) -- (398.4,106.8) ;
\draw [shift={(400,108)}, rotate = 216.87] [color={rgb, 255:red, 0; green, 0; blue, 0 }  ][line width=0.75]    (10.93,-3.29) .. controls (6.95,-1.4) and (3.31,-0.3) .. (0,0) .. controls (3.31,0.3) and (6.95,1.4) .. (10.93,3.29)   ;
\draw    (485,80) -- (485,106) ;
\draw [shift={(485,108)}, rotate = 270] [color={rgb, 255:red, 0; green, 0; blue, 0 }  ][line width=0.75]    (10.93,-3.29) .. controls (6.95,-1.4) and (3.31,-0.3) .. (0,0) .. controls (3.31,0.3) and (6.95,1.4) .. (10.93,3.29)   ;
\draw    (170,160) -- (111.79,132.89) ;
\draw [shift={(110,132)}, rotate = 26.57] [color={rgb, 255:red, 0; green, 0; blue, 0 }  ][line width=0.75]    (10.93,-3.29) .. controls (6.95,-1.4) and (3.31,-0.3) .. (0,0) .. controls (3.31,0.3) and (6.95,1.4) .. (10.93,3.29)   ;
\draw    (230,160) -- (211.11,133.66) ;
\draw [shift={(210,132)}, rotate = 56.31] [color={rgb, 255:red, 0; green, 0; blue, 0 }  ][line width=0.75]    (10.93,-3.29) .. controls (6.95,-1.4) and (3.31,-0.3) .. (0,0) .. controls (3.31,0.3) and (6.95,1.4) .. (10.93,3.29)   ;
\draw    (390,160) -- (399.37,133.9) ;
\draw [shift={(400,132)}, rotate = 108.43] [color={rgb, 255:red, 0; green, 0; blue, 0 }  ][line width=0.75]    (10.93,-3.29) .. controls (6.95,-1.4) and (3.31,-0.3) .. (0,0) .. controls (3.31,0.3) and (6.95,1.4) .. (10.93,3.29)   ;
\draw (34,112.4) node [anchor=north west][inner sep=0.75pt]  [font=\large]  {$\vec{v}$};
\draw (77,114.4) node [anchor=north west][inner sep=0.75pt]  [font=\normalsize]  {$0$};
\draw (131,114.4) node [anchor=north west][inner sep=0.75pt]  [font=\normalsize]  {$0$};
\draw (177,114.4) node [anchor=north west][inner sep=0.75pt]  [font=\normalsize]  {$0$};
\draw (237,114.4) node [anchor=north west][inner sep=0.75pt]  [font=\normalsize]  {$0$};
\draw (367,114.4) node [anchor=north west][inner sep=0.75pt]  [font=\normalsize]  {$0$};
\draw (421,114.4) node [anchor=north west][inner sep=0.75pt]  [font=\normalsize]  {$0$};
\draw (461,114.4) node [anchor=north west][inner sep=0.75pt]  [font=\normalsize]  {$0$};
\draw (497,114.4) node [anchor=north west][inner sep=0.75pt]  [font=\normalsize]  {$0$};
\draw (547,114.4) node [anchor=north west][inner sep=0.75pt]  [font=\normalsize]  {$0$};
\draw (480,114.4) node [anchor=north west][inner sep=0.75pt]  [font=\normalsize]  {$1$};
\draw (295,116) node [anchor=north west][inner sep=0.75pt]  [font=\normalsize]  {$\cdots $};

\draw (34,277.4) node [anchor=north west][inner sep=0.75pt]  [font=\normalsize]  {$0$};
\draw (107,277.4) node [anchor=north west][inner sep=0.75pt]  [font=\normalsize]  {$0$};
\draw (397,277.4) node [anchor=north west][inner sep=0.75pt]  [font=\normalsize]  {$0$};
\draw (481,277.4) node [anchor=north west][inner sep=0.75pt]  [font=\normalsize]  {$0$};
\draw (531,277.4) node [anchor=north west][inner sep=0.75pt]  [font=\normalsize]  {$0$};
\draw (550,277.4) node [anchor=north west][inner sep=0.75pt]  [font=\normalsize]  {$1$};
\draw (221,277.4) node [anchor=north west][inner sep=0.75pt]  [font=\normalsize]  {$\cdots $};
\draw (295,279) node [anchor=north west][inner sep=0.75pt]  [font=\normalsize]  {$\cdots $};
\draw (567,277.4) node [anchor=north west][inner sep=0.75pt]  [font=\normalsize]  {$0$};
\draw (112,225) node [anchor=north west][inner sep=0.75pt]   [align=left] {only the $\displaystyle j$-th segment is non-zero};
\draw (527,225) node [anchor=north west][inner sep=0.75pt]   [align=left] {$\displaystyle j$-th entry};
\draw (85,317) node [anchor=north west][inner sep=0.75pt]   [align=left] {(minus) one-hot encoding of each entry in $\displaystyle \vec{b}$};
\draw (21,200) node [anchor=north west][inner sep=0.75pt]   [align=left] {For $\displaystyle j\in [ m]$, vector in $\displaystyle B_{j}$ that corresponds to $\displaystyle \vec{b} \in \Sigma ^{k}$:};
\draw   (20,273) -- (590,273) -- (590,293) -- (20,293) -- cycle ;
\draw [line width=3]    (60,273) -- (60,293) ;
\draw [line width=3]    (160,273) -- (160,293) ;
\draw [line width=3]    (260,273) -- (260,293) ;
\fill[color={rgb, 255:red, 255; green, 133; blue, 0 }]   (162.2,273.5) -- (179.5,273.5) -- (179.5,292.5) -- (162.2,292.5) -- cycle ;
\draw    (180,273) -- (180,293) ;
\fill[color={rgb, 255:red, 39; green, 255; blue, 0 }]   (180.5,273.5) -- (200,273.5) -- (200,292) -- (180.5,292) -- cycle ;
\draw    (200,273) -- (200,293) ;
\fill[color={rgb, 255:red, 255; green, 246; blue, 100 }]   (200.5,273.5) -- (219,273.5) -- (219,292) -- (200.5,292) -- cycle ;
\draw    (220,273) -- (220,293) ;
\draw    (240,273) -- (240,293) ;
\fill[color={rgb, 255:red, 0; green, 228; blue, 255 }]   (240.5,273.5) -- (258,273.5) -- (258,292) -- (240.5,292) -- cycle ;
\draw [line width=3]    (350,273) -- (350,293) ;
\draw [line width=3]    (450,273) -- (450,293) ;
\draw    (520,273) -- (520,293) ;
\draw    (550,273) -- (550,293) ;
\draw    (560,273) -- (560,293) ;
\draw    (210,243) -- (210,263) ;
\draw [shift={(210,265)}, rotate = 270] [color={rgb, 255:red, 0; green, 0; blue, 0 }  ][line width=0.75]    (10.93,-3.29) .. controls (6.95,-1.4) and (3.31,-0.3) .. (0,0) .. controls (3.31,0.3) and (6.95,1.4) .. (10.93,3.29)   ;
\draw    (555.26,243) -- (555.26,269) ;
\draw [shift={(555,271)}, rotate = 270] [color={rgb, 255:red, 0; green, 0; blue, 0 }  ][line width=0.75]    (10.93,-3.29) .. controls (6.95,-1.4) and (3.31,-0.3) .. (0,0) .. controls (3.31,0.3) and (6.95,1.4) .. (10.93,3.29)   ;
\draw    (210,323) -- (210,297) ;
\draw [shift={(210,295)}, rotate = 90] [color={rgb, 255:red, 0; green, 0; blue, 0 }  ][line width=0.75]    (10.93,-3.29) .. controls (6.95,-1.4) and (3.31,-0.3) .. (0,0) .. controls (3.31,0.3) and (6.95,1.4) .. (10.93,3.29)   ;

\draw (21,351) node [anchor=north west][inner sep=0.75pt]   [align=left] {Target vector:};
\draw (37,382.4) node [anchor=north west][inner sep=0.75pt]  [font=\normalsize]  {$\vec{t}$};
\draw (248,384.4) node [anchor=north west][inner sep=0.75pt]  [font=\normalsize]  {$0$};
\draw (517,384.4) node [anchor=north west][inner sep=0.75pt]  [font=\normalsize]  {$1$};
\draw   (20,380) -- (590,380) -- (590,400) -- (20,400) -- cycle ;
\draw [line width=3]    (60,380) -- (60,400) ;
\draw [line width=3]    (450,380) -- (450,400) ;

\end{tikzpicture}
}
\caption{A simplified pictorial illustration of our main construction. For detailed illustration see Figure~\ref{figure for gap creating}.}
    \label{Pictorial illustration in introduction}
\end{figure}
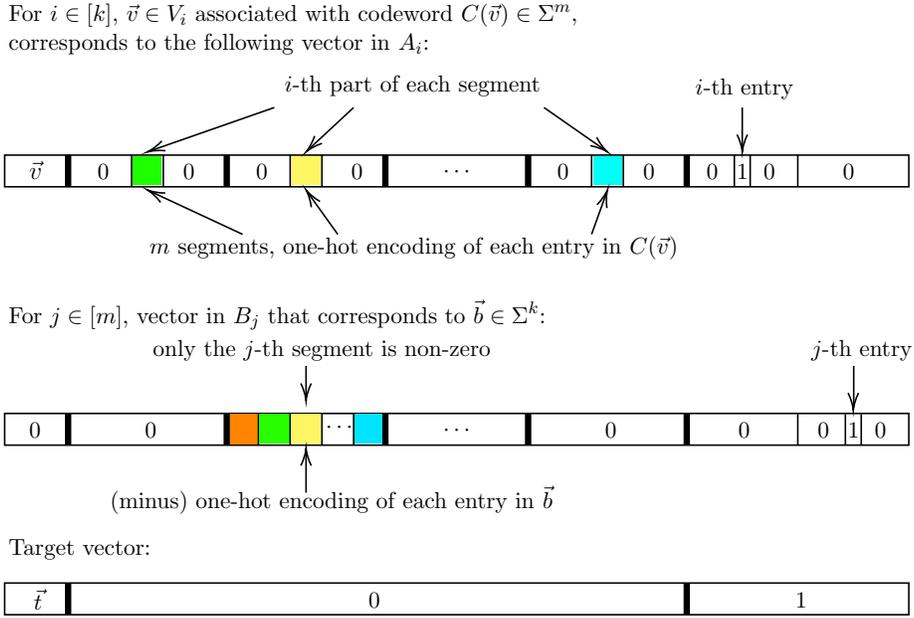

Below we illustrate the idea of our reduction. For simplicity, here we consider $k$-MLD problem over binary field. 
Given $k$ vectors sets $V_1,\ldots,V_k\subseteq\mathbb{F}_2^d$, a target vector $\vec t$ and a strong threshold graph $T=(A\dot\cup B,E_T)$, we first identify  $V_i$ with $A_i$ for every $i\in[k]$. Our goal is to construct a one-to-one mapping $f :A\cup B\to\mathbb{F}_2^D$ and a new target vector $\vec t'\in\mathbb{F}_2^D$ for some $D=\mathsf{poly}(d,k)$ such that in order to pick vectors from $f(A\cup B)$\footnote{Here we let $f(X)$  denote the set $\{f(x) : x\in X\}$.} with their sum being $\vec t'$,  one has to pick a set $f(X)$ of vectors  from $f(A)$ for some $X\subseteq A$ with $\sum_{\vec a\in X}\vec a=\vec t$ and a set $f(Y)$ of vectors  from $f(B)$ for some $Y\subseteq B$ such that for every $i\in[m]$,
\begin{description}
    \item[(a)] \label{Gap Property (a)}either $|Y\cap B_i|\ge 2$,
    \item[(b)] \label{Gap Property (b)}or $|Y\cap B_i|=1$ and there exists $b_i\in B_i$ with one of following properties:
\begin{description}
    \item[(b.1)]  $|X|=k$ and $b_i$ is the common neighbors of vertices in $X$.
    \item[(b.2)]  $b_i$ has at least $k+1$ neighbors in $X$.
\end{description}
\end{description}
Then we argue that these properties imply a constant gap between the solution sizes in the (YES) and (NO) cases of the $k$-MLD problem.
\begin{description}

    \item [(YES)] Suppose there are $a_1\in A_1,\ldots,a_k\in A_k$ such that $\sum_{i\in[k]}a_i=\vec t$. By the property (iii) of threshold graphs,  $a_1,\ldots,a_k$ have a common neighbor $b_i\in B_i$ for every $i\in[m]$. Then according to (b), the sum of $f(a_1),\ldots,f(a_k)$ and $f(b_1),\ldots,f(b_m)$ is $\vec t'$.

\item [(NO)] On the other hand, if  there are no $a_1\in A_1,\ldots,a_k\in A_k$ such that $\sum_{i\in[k]}a_i=\vec t$, then  one should pick either at least $(1-\varepsilon)2m$ vectors from $f(B)$ and $k+1$ vectors from $f(A)$, or pick a subset of vectors $f(X)$ from $f(A)$ and a subset of vectors $f(Y)$ from $f(B)$ for some $Y\subseteq B$ with $|\{i\in [m] : |Y\cap B_i|= 1\}|\ge \varepsilon m$. Let $I=\{i\in [m] : |Y\cap B_i|= 1\}$. 
According to the property (b.2), each vertex in $Y\cap B_i$ ($i\in I$) has $k+1$ neighbors in $X$. Since $|I|\ge\varepsilon m$, by the property (iv) of threshold graphs, we have that $|X|>h$. Thus, either  $(1-\varepsilon)2m$ vectors in $f(B)$ and $k+1$ vectors in $f(A)$  or  $m$ vectors in $f(B)$  and $h$ vectors in $f(A)$ must be picked in this case. 
\end{description}
To obtain a constant gap, we duplicate each vector in $f(A)$ $m/k$ times and let $h=ck$ where $c$ is some constant to be chosen. In the (yes) case, there are $2m$ vectors with their sum being $\vec t'$. In the (no) case, no $\min\{2(1-\varepsilon)m+m,m+cm\}$ vectors from $f(A\cup B)$ can have sum $\vec t'$.

The proof framework above has two problems to be solved. 
\begin{description}
    \item[(P1)] How to combine the threshold graph and the $k$-MLD instance to produce vectors $f(A\cup B)$ with the properties (a) and (b)?
    \item[(P2)] The smaller parameter blow-up we create in reduction, the tighter running time lower bound we obtain. So how to construct a threshold graph with $h>ck$ and $m$ as small as possible?
\end{description}

\noindent\textbf{Our approach to solve Problem (P1).} Problem (P1) is related to the composition step in the threshold graph composition method. For the \textsc{$k$-SetCover} problem, we can use the hypercube partition system~\cite{Feige98} to solve this problem. Unfortunately, this does not apply to the $k$-MLD problem. To solve problem (P1), we  exploit an additional property from the  construction of strong threshold graph using error correcting codes. More precisely, we can assume that there is a encoding function $C : A\to \Sigma^m$ and each $b_i\in B_i$ can be written as a $k$-tuple in $(b_{i,1},\ldots,b_{i,k})\in \Sigma^k$ such that $b_i$ is adjacent to $a_j\in A_j$ in the threshold graph if and only if $b_{i,j}=C(a_j)[i]$. Informally speaking, we choose the target vector $\vec t'$ and the one-to-one mapping $f:A\cup B\to\mathbb{F}_2^D$ such that any subset of vectors in $f(A\cup B)$ summing up to $\vec t'$ must contains, for each $i\in[m]$, at least one vector $f(b_i)$ for some $b_i\in B_i$. And if it contains exactly only one such vector $f(b_i)$, then one need to pick at least $k$ vectors $f(a_1)\in f(A_1),\ldots,f(a_k)\in f(A_k)$  to cancel out the parts corresponding to $b_{i,1},\ldots,b_{i,k}$ in the vector $f(b_i)$. A careful analysis shows that this construction has the properties (a) and (b).

\medskip

\noindent\textbf{Our approach to solve Problem (P2).} 
The construction of strong threshold graph in~\cite{LRSW23} was based on the idea of Karthik and Navon \cite{KN21}. Karthik and Navon \cite{KN21} observed that the ``collision number'' of an error-correcting code can be directly used to show the threshold property. Intuitively speaking, a set $C$ of strings with high $\varepsilon$-collision number indicates that if there is some mechanism forces us to choose some strings in $C$ that collides on at least $\varepsilon$ fraction of entries, then we must choose at least $\mathrm{Col}_{\varepsilon}(C)$ strings.

Known analysis of collision number in \cite{KN21,BCGR23} starts from the distance of an error-correcting code. For a code with relative distance $\delta$, previous analysis shows that its $\varepsilon$-collision number is $\mathrm{Col}_{\varepsilon}(C)=\sqrt{\frac{2\varepsilon}{1-\delta}}$.  Note that $\delta=1-\Theta(\frac{r}{m})$ for Reed-Solomon codes used in the previous works. To obtain a gap, we require $\mathrm{Col}_{\varepsilon}(C)\ge\Theta(k)$, which leads to $m=\Omega(k^2)r$. In our reduction, we additionally require $\Sigma^r\geq n$ to fit the input size, which requires $r\geq \frac{\log n}{\log |\Sigma|}$, then we have $m\geq k^2\log n/\log|\Sigma|$. On the other hand, our reduction needs to enumerate every $k$-tuples in $\Sigma^k$, concerning the running time we require $|\Sigma|^k\le n^{O(1)}$. Putting all together, we must have $m\ge \Omega(k^3)$. In fact, we showed that the Singleton bound of codes implies such construction must have parameter growth $\Omega(k^3)$.

To obtain a better parameter, we find the analysis by Karthik and Navon \cite[Section 3.1]{KN21} can be modified to show better lower bound for the $\varepsilon$-collision number of a random code. Following their idea, we show a random code $C_R:\Sigma^r\rightarrow \Sigma^m$ with superconstant-sized alphabet and $m=\Omega(|\Sigma|^{1/3}\log |\Sigma| r)$ would have  $\varepsilon$-collision number $\mathrm{Col}_{\varepsilon}(C_R)\geq |\Sigma|^{1/3}$, with high probability. Setting $|\Sigma|=\Theta(k^3)$, we have $\mathrm{Col}_{\varepsilon}(C)\ge\Theta(k)$. But now the parameter $m=\Omega(|\Sigma|^{1/3}\log |\Sigma| r)\ge k\log n$ is too large. Our solution is to consider a new error correcting code with small  dimension by increasing the alphabet size and show that this new code has the same collision number. More precisely, we partition the $m$ bits into $g$ blocks, each containing $m/g$ bits and treat the code words as strings in $\Sigma'^g$ where $\Sigma'=\Sigma^{m/g}$. 
Since $|\Sigma'^k|\le n^{O(1)}$, we have $m/g\le O(\frac{\log n}{k\log |\Sigma|})=O(\frac{\log n}{3k\log k})$. Thus, $g\ge \Theta(\frac{mk\log k}{\log n})\ge \Theta(k^2\log k)$. 
This reduces the  parameter growth from $k^3$ to $k^2\log k$, and the (randomized) ETH-based running time lower bound can be improved to $n^{O(\sqrt{k/\log k})}$. We hope to see whether some better construction of threshold graph leads to better lower bound of problems we discuss. 

\subsection{Previous Work}\label{sec:prework}
The parameterized complexity of \textsc{$k$-MDP} had been open for many years. This problem was first resolved by~\cite{BBE+21}. Interestingly, the reduction in \cite{BBE+21} also ruled out constant $\fpt$-approximation algorithm for \textsc{$k$-MDP} over binary field. In addition, they also ruled out any constant $\fpt$-approximation algorithm for \textsc{$k$-CVP} in all $\ell_p$ norms. Recent work by Bennett, Cheraghchi, Guruswami and Ribeiro \cite{BCGR23} proved parameterized inapproximability for 
\textsc{$k$-MDP} over all finite fields and \textsc{$k$-SVP} in all $\ell_p$ norms and arbitrary constant gap. 
These results are all based on the $\wone$-hardness of constant \textsc{Gap-$k$-NCP} or \textsc{Gap-$k$-CVP} in \cite{BBE+21}. 

Unfortunately, the gap-creating reduction from \textsc{$k$-Clique} to constant \textsc{Gap-$k'$-NCP} or \textsc{Gap-$k'$-CVP} in \cite{BBE+21} has a long reduction chain and causes a significant increase in the parameter. For example,  the reduction from \textsc{$k$-Clique} to constant \textsc{Gap-$k'$-NCP}  contains the following steps (the reduction for Gap-$k'$-CVP is similar):
\begin{itemize}
    \item The first step is to reduce \textsc{$k$-Clique} to the \textsc{One-Sided Gap Biclique} problem. In this step, the reduction outputs a bipartite graph $H=(L\cup R,E)$ and three integers $s=k(k-1)/2$, $\ell=(k+1)!$ and $h>\ell$ on input a graph $G$ and an integer $k$ such that if $G$ contains a $k$-clique, then there are $s$ vertices in $L$ with $h$ common neighbors. On the other hand, if $G$ contains no $k$-clique, then every $s$-vertex set of $L$ has at most $\ell$ common neighbors in $R$.
    \item The second step is to reduce the \textsc{One-Sided Gap Biclique} problem to \textsc{Gap-$k'$-Linear Dependent Set} problem (\textsc{Gap-$k'$-LDS})\footnote{In fact, the reduction in~\cite{BBE+21} from \textsc{One-Sided Gap Biclique} to \textsc{Gap-$k$-LDS} goes though an intermediate problem called gap bipartite subgraph with minimum degree (\textsc{GapBSMD}).}. On input the bipartite graph $H=(L\cup R,E)$ and three positive integers $s,\ell,h\in\mathbb{N}$, the reduction outputs a set $W$ of vectors and an integer $k'=hs$ such that, if $H$ contains a $K_{s,h}$-subgraph, then there are $k'$ vectors in $W$ that are linearly dependent. If every $s$-vertex set in $L$ has at most $\ell$ common neighbors, then any linearly dependent set of $W$ must have size at least $(h/\ell)^{1/s}$. To create a constant gap, one must choose a large parameter $h$  such that $(h/\ell)^{1/s}\ge \gamma hs$ for some $\gamma>1$. 
    Hence in ~\cite{BBE+21}, the authors have to set  $h=(k+6)! \cdot (\gamma k^2)^{k^2}$ and $k'=hs\ge   k^{\Omega(k^2)} = 2^{\Omega(k^2\log k)}$.
    \item The next step is to reduce the \textsc{Gap-$k'$-Linear Dependent Set} problem (\textsc{Gap-$k'$-LDS}) to \textsc{Gap-$k''$-Maximum Likelihood Decoding} problem (\textsc{Gap-$k''$-MLD})\footnote{Again, they introduced a color-coding technique to \textsc{Gap-$k$-LDS} (\textsc{Gap-$k$-Colored-LDS}) and used it as an intermediate problem between \textsc{Gap-$k$-LDS} and \textsc{Gap-$k$-MLD}, for details see \cite[Lemma 4.8, Theorem 5.4]{BBE+21}.}. This reduction preserves the parameter i.e., $k''=k$. 
    \item The remaining step gives a reduction from constant \textsc{Gap-$k''$-MLD} to constant \textsc{Gap-$k''$-NCP}.
\end{itemize}
Combining this with the $f(k)\cdot n^{\Omega(k)}$-time lower bound for the \textsc{$k$-Clique} problem, we only get a $g(k)\cdot n^{\Omega((\log k)^{1/(2+\epsilon)})}$-time lower bound for \textsc{Gap-$k$-NCP} using the reduction from~\cite{BBE+21}. 

Under a stronger gap assumption (Gap-ETH), Manurangsi~\cite{Man20} showed a tight $n^{\Omega(k)}$ time lower bound for constant approximating problems discussed in this article. His approach is to show an $n^{\Omega(k)}$ time lower bound for constant approximating \textsc{LaberCover}, then reduce it to  \textsc{$k$-UniqueSetCover}, then reduce \textsc{$k$-UniqueSetCover} to gap problems we discuss using reduction in~\cite{ABSS97}. The key step in his proof is to establish hardness result for approximating \textsc{$k$-UniqueSetCover}. 
To our best knowledge, there is no hardness of approximation result for the parameterized \textsc{$k$-UniqueSetCover} under gap-free assumptions, e.g. ETH and $\WOne\neq \FPT$.

Very recently, Guruswami, Ren and Sandeep~\cite{GRS23} showed constant \FPT-inapproximability of \textsc{$k$-UniqueSetCover} under the assumption that Average Baby PIH holds even for \textsc{2CSP} instance having rectangular relations. 
It's interesting whether their result and method can shed some light on showing ETH-based $n^{\Omega(k)}$ time lower bound for \textsc{$k$-UniqueSetCover}. 
We remark that the ETH-based $n^{\Omega(k)}$ time lower bound for constant approximating \textsc{$k$-UniqueSetCover} is still an open problem, and so does its \FPT-inapproximability  assuming $\WOne\neq \FPT$. 

\subsection{Paper Organization}
In Section~\ref{Section: Prelim}, we give preliminary of this paper. In Section~\ref{Section: Collision Number}, we give a new analysis on collision number of random code, this section can be skipped if readers wants to see the reduction directly. In Section~\ref{sec:gap creating reduction}, we present our gap-creating reduction for \textsc{$k$-MLD$_p$}. In Section~\ref{section: from NCP}, we show how to apply our reduction to other results and show inapproximability of other problems. For self-containment, we give a proof of equivalence between \textsc{$k$-MLD$_p$} and \textsc{$k$-NCP$_p$} in Appendix~\ref{appendix: equivalence between MLD and NCP}. 
\section{Preliminaries}
\label{Section: Prelim}
For integer $m>0$, let $[m]=\{1,2,\cdots, m\}$. For prime power $p>1$, we let $\mathbb{F}_p=\{0,1\cdots, p-1\}$ denote the $p$-sized finite field. We denote $\F_p^+$ as $\F_p\backslash\{0\}$. For a vector $\vec{v}\in \Sigma^m$ and $i\in[m]$, let $\vec{v}[i]\in\Sigma$ denote the $i$-th entry of $\vec{v}$. For two vectors $\vec{u}$, $\vec{v}$, let $\vec{u}\circ \vec{v}$ denote their concatenation. The symbol $\dot\cup$ denotes for the union set of multiple disjoint sets. As a supplement of big-$O$ notation, we let $f(k,n)=O_k(g(n))$ denote there exists constant $c>0$ and computable function $h:\mathbb{N}\rightarrow \mathbb{N}$ such that for any fixed $k>0$, $f(k,n)<c\cdot h(k)g(n)$ holds for all sufficiently large $n$.

For alphabet $\Sigma$ and vector $\vec{u},\vec{v}\in\Sigma^m$, the relative distance of them is defined as $\mathsf{dist}(\vec{u},\vec{v})=\frac{|\{i\in[m]:\vec{u}[i]\neq\vec{v}[i]\}|}{m}$. In this article, we sometimes use ``distance'' as shorthand of relative distance. For vector $\vec{v}\in\mathbb{Z}^m$ and $p\geq 1$, let the $\ell_p$ norm of $\vec{v}$ be $\ell_p(\vec{v})=(\Sigma_{1\leq i\leq m}|\vec{v}[i]|^p)^{1/p}$.
\subsection{Error-correcting Codes}
Error-correcting code plays a fundamental role in computer science and information theory. The problem we mainly discuss in this article and the construction we use are closely related to them. We give a general definition of error-correcting code. 
A detailed and systematic introduction to coding theory can be found at \cite{EssentialCodingTheory}.
\begin{definition}[Error-correcting Codes]
    Fix an alphabet $\Sigma$, an error-correcting code with length $m$ and relative distance $\delta>0$ is a subset $\mathcal{C}\subseteq\Sigma^m$ satisfying for all $\vec{x},\vec{y}\in \mathcal{C}$, if $\vec{x}\neq \vec{y}$, $\mathsf{dist}(\vec{x},\vec{y})\geq \delta$.
\end{definition}

In the study of coding theory, when considering coding problems that related to decoding or distance, we usually restrict it to \textit{linear} codes. We give the definition of linear codes as follows.
\begin{definition}[Linear Codes]
    Fix an alphabet $\Sigma$ such that $\Sigma^r$ and $\Sigma^m$ being linear spaces, a linear code is an error-correcting code $\mathcal{C}\subseteq \Sigma^m$ associated with a linear function $f:\Sigma^r \rightarrow \Sigma^m$ that for all $x\in \Sigma^r$, $f(x)\in \mathcal{C}$.
\end{definition}
\subsection{Hypothesis}
We introduce the Exponential Time Hypothesis in this section. First, let us introduce the fundamental problem in computational complexity: the $3$-satisfiability problem \textsc{$3$-SAT}.
\begin{definition}[\textsc{3-SAT}]
    Given a \textsc{$3$-CNF} formula (conjunctive
formal form, each clause contains exactly $3$ literals) $\varphi$ with $n$ variables and $m$ clauses, decide if there exists a boolean assignment $z\in\{0,1\}^n$ that satisfies $\varphi$, i.e., $\varphi(z)=1$.
\end{definition}

The Exponential Time Hypothesis (ETH) states that \textsc{$3$-SAT} cannot be solved in subexponential time, formally:
\begin{hypothesis}[Exponential Time Hypothesis\cite{IP01,IPZ01}]
    There exists constant $\delta>0$ such that \textsc{$3$-SAT} with $n$ variable and $O(n)$ clauses cannot be solved in time $O(2^{\delta n})$.
\end{hypothesis}

Similarly, for randomized algorithms, the Randomized Exponential Time Hypothesis states that \textsc{$3$-SAT} cannot be solved by randomized algorithm in subexponential time, formally:
\begin{hypothesis}[Randomized Exponential Time Hypothesis]
    There exists constant $\delta>0$ such that \textsc{$3$-SAT} with $n$ variable and $O(n)$ clauses cannot be solved by randomized algorithm in time $O(2^{\delta n})$.
\end{hypothesis}
\subsection{Problems}
\label{prelim:probs}
We first give the definition of general parameterized Maximum Likelihood Decoding problem. 
\npprobyn[9.5]{$\gamma$-Gap-$k$-MLD$_p$}
{A vector multi-set $V\subseteq \F_p^d$ with size $n$ and a target vector $\vec t\in\F_p^d$}
{$k$}
{There exists $k$ distinct vectors (with respect to multi-set), $\vec v_1,\cdots,\vec v_k\in V$ and $\alpha_1,\dots,\alpha_l\in\F_p^+$ such that $\alpha_1\vec v_1+\dots+\alpha_k\vec v_k=\vec t$}
{Any $\ell\leq \gamma k$, $l$ vectors $\vec v_1,\dots,\vec v_l\in V$ and $\alpha_1,\dots,\alpha_l\in\F_p^+$ satisfies $\alpha_1\vec v_1+\dots+\alpha_l\vec v_l\neq\vec t$}

To fit our reduction, we start from a special restricted type of parameterized Maximum Likelihood Decoding problem that vectors are partitioned into $k$ different sets, and the YES case asks for selecting one vector from each set such that they directly add up to the target vector. This type of parameterized Maximum Likelihood Decoding problem  is formally defined as:
\npprobyn[9.5]{$\gamma$-Gap-$k$-ColoredMLD$_p$}
{$k$ vector multi-sets $V_1,\dots,V_k\subseteq \F_p^d$ each of size $n$ and a target vector $\vec t\in\F_p^d$}
{$k$}
{There exists $\vec v_1\in V_1,\dots,\vec v_k\in V_k$ such that $\vec v_1+\dots+\vec v_k=\vec t$}
{For any $\ell\leq \gamma k$, $\vec v_1,\dots,\vec v_l\in V_1\cup\dots\cup V_k$ and $\alpha_1,\dots,\alpha_l\in\F_p^+$ must satisfy $\alpha_1\vec v_1+\dots+\alpha_l\vec v_l\neq\vec t$}
The equivalence between \textsc{$\gamma$-Gap-$k$-MLD$_p$} and \textsc{$\gamma$-Gap-$k$-ColoredMLD$_p$} can be shown by creating $p$ new vectors corresponds to $p$ different coefficients for each of the original vector, then making $k$ copies of the vector set in one direction, and a standard color-coding technique in the other direction. Due to the equivalence, we shall omit the \textsc{Colored} script in the article and confuse these definitions to simplify the notations. In particular, we use \textsc{$k$-MLD$_p$} to denote \textsc{$\gamma$-Gap-$k$-ColoredMLD$_p$} when $\gamma=1$.

We next give the definition of parameterized \textsc{NCP} problem.
\npprobyn[9.5]{$\gamma$-Gap-$k$-NCP$_p$}
{A vector set $V=\{\vec{v}_1,\cdots,\vec{v}_n\}\subseteq \F_p^d$ with size $n$, a target vector $\vec t\in\F_p^d$}
{$k$}
{There exists $c_1,\cdots, c_n\in\mathbb{F}_p$ and $\vec{w}\in\mathbb{F}_p^d$ with $||w||_0\leq k$ such that $c_1\vec v_1+\cdots+c_n\vec v_n+\vec{w} = \vec{t}$}
{For any $c_1,\cdots, c_n\in\mathbb{F}_p$, $\vec{w} = \vec{t}-(c_1\vec v_1+\cdots+c_n\vec v_n)$ satisfies $||\vec{w}||_0>\gamma k$}

The homogeneous version of \textsc{NCP} is known as the parameterized \textsc{Minimum Distance Problem} as follows.
\npprobyn[9.5]{$\gamma$-Gap-$k$-MDP$_p$}
{A vector set $V=\{\vec{v}_1,\cdots,\vec{v}_n\}\subseteq \F_p^d$ with size $n$}
{$k$}
{There exists not all zero $c_1,\cdots, c_n\in\mathbb{F}_p$ satisfying $||c_1\vec v_1+\cdots+c_n\vec v_n||_0\leq k$}
{For all not all zero $c_1,\cdots, c_n\in\mathbb{F}_p$, $||c_1\vec v_1+\cdots+c_n\vec v_n||_0> \gamma k$}

There are two fundamental lattice problems that is closely related to coding problems above, we introduce them as follows. 
The first problem is the parameterized \textsc{Closest Vector Problem}, which asks if a given lattice is ``close to'' a target vector.
\npprobyn[9.5]{$\gamma$-Gap-$k$-CVP$_p$}
{A vector set $V=\{\vec{v}_1,\cdots,\vec{v}_n\}\subseteq \mathbb{Z}^d$ with size $n$, a target vector $\vec t\in\mathbb{Z}^d$}
{$k$}
{There exists $c_1,\cdots, c_n\in\mathbb{Z}$ and $\vec{w}\in\mathbb{Z}^d$ with $||w||_p^p\leq k$ such that $c_1\vec v_1+\cdots+c_n\vec v_n+\vec{w} = \vec{t}$}
{For any $c_1,\cdots, c_n\in\mathbb{Z}$, $\vec{w} = \vec{t}-(c_1\vec v_1+\cdots+c_n\vec v_n)$ satisfies $||\vec{w}||_p^p>\gamma k$}

The homogeneous version of \textsc{CVP} is known as the parameterized \textsc{Shortest Vector Problem} as follows.
\npprobyn[9.5]{$\gamma$-Gap-$k$-SVP$_p$}
{A vector set $V=\{\vec{v}_1,\cdots,\vec{v}_n\}\subseteq \mathbb{Z}^d$ with size $n$}
{$k$}
{There exists not all zero $c_1,\cdots, c_n\in\mathbb{Z}$ satisfying $||c_1\vec v_1+\cdots+c_n\vec v_n||_p^p\leq k$}
{For all not all zero $c_1,\cdots, c_n\in\mathbb{Z}$, $||c_1\vec v_1+\cdots+c_n\vec v_n||_p^p> \gamma k$}

\subsection{Probability Inequality}
We use the following Chernoff bound of random variables.
\begin{theorem}[Chernoff Bound]
    Consider independent random variables $X_1,\dots,X_n\in\{0,1\}$ with $X=\sum_{i=1}^{m}X_i$ and $\mu=\mathbb{E}[X]$. For any $0<\delta<1$ we have
    $$
        \Pr[X\leq (1-\delta)\mu]\leq \exp\left(-\frac{\mu\delta^2}{2}\right).
    $$
\end{theorem}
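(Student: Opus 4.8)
The plan is to use the standard exponential-moment (Chernoff) method, i.e., apply Markov's inequality to an exponential transform of $X$ and optimize the free parameter. Throughout, write $p_i=\Pr[X_i=1]$, so that $\mu=\mathbb{E}[X]=\sum_i p_i$. First I would fix a parameter $t>0$ and note that, since $x\mapsto e^{-tx}$ is strictly decreasing, the event $\{X\le(1-\delta)\mu\}$ is the same as $\{e^{-tX}\ge e^{-t(1-\delta)\mu}\}$. Applying Markov's inequality to the nonnegative random variable $e^{-tX}$ gives
\[
\Pr[X\le(1-\delta)\mu]\ \le\ e^{t(1-\delta)\mu}\cdot\mathbb{E}\!\left[e^{-tX}\right].
\]

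Next I would use independence to factor the moment generating function: $\mathbb{E}[e^{-tX}]=\prod_i\mathbb{E}[e^{-tX_i}]=\prod_i\big(1-p_i+p_ie^{-t}\big)=\prod_i\big(1+p_i(e^{-t}-1)\big)$. Using the elementary inequality $1+y\le e^{y}$ (valid for all real $y$, here with $y=p_i(e^{-t}-1)\le 0$), each factor is at most $e^{p_i(e^{-t}-1)}$, so $\mathbb{E}[e^{-tX}]\le e^{(e^{-t}-1)\mu}$. Substituting back yields, for every $t>0$,
\[
\Pr[X\le(1-\delta)\mu]\ \le\ \exp\!\big(\mu\,(e^{-t}-1+t(1-\delta))\big).
\]

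It remains to choose $t$ and simplify. Minimizing the exponent in $t$, the derivative $-e^{-t}+(1-\delta)$ vanishes at $t=\ln\frac{1}{1-\delta}$, which is a legitimate (positive) choice since $0<\delta<1$. Plugging this in gives the classical bound $\Pr[X\le(1-\delta)\mu]\le\big(e^{-\delta}/(1-\delta)^{1-\delta}\big)^{\mu}$. The only step requiring a little care is the purely analytic inequality $e^{-\delta}/(1-\delta)^{1-\delta}\le e^{-\delta^2/2}$ for $0<\delta<1$, which then finishes the proof. I would prove it by setting $g(\delta)=-\delta-(1-\delta)\ln(1-\delta)+\tfrac{\delta^2}{2}$ (the logarithm of the ratio of the two sides) and checking $g(\delta)\le 0$ on $(0,1)$: one has $g(0)=0$, $g'(\delta)=\delta+\ln(1-\delta)$ with $g'(0)=0$, and $g''(\delta)=-\delta/(1-\delta)<0$, so $g'$ is strictly decreasing from $0$ and hence negative on $(0,1)$, whence $g$ is strictly decreasing from $0$ and hence negative on $(0,1)$. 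This Taylor-type estimate for $\ln(1-\delta)$ near the origin is the main (and only mild) obstacle; the rest is bookkeeping. I would also remark that the bound is stated only for $0<\delta<1$ precisely so that $t=\ln\frac{1}{1-\delta}$ is well defined and positive and so that the comparison $g(\delta)\le 0$ holds.
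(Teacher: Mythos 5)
Your proof is correct: the exponential-moment argument, the optimization at $t=\ln\frac{1}{1-\delta}$, and the verification that $g(\delta)=-\delta-(1-\delta)\ln(1-\delta)+\tfrac{\delta^2}{2}\le 0$ on $(0,1)$ via $g(0)=g'(0)=0$ and $g''<0$ are all sound, and this is the standard derivation of the multiplicative Chernoff lower-tail bound. The paper itself states this inequality as a known tool without proof, so there is no paper argument to compare against; your write-up supplies exactly the standard proof one would expect.
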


\section{Collision Number of Error Correcting Codes}
\label{Section: Collision Number}

Consider a collection of strings $S\subseteq \Sigma^m$, we say that $S$ ``collides'' on the $i$-th coordinate if there are distinct $x,y\in S$ such that $x[i]=y[i]$.
Following the work of \cite{KN21,LRSW23}, we define the collision number of a set of strings as follows.

\begin{definition}[$\varepsilon$-Collision Number]\label{definition: collision number}
    For a set $C\subseteq \Sigma^m$ and $0<\varepsilon< 1$, the $\varepsilon$-collision number of $C$, denote as $\mathrm{Col}_{\varepsilon}(C)$, is the smallest integer $s\in\mathbb{N}^+$ such that there exists $S\subseteq C$ with $|S|=s$ and $S$ collides on more than $\varepsilon m$ coordinates, i.e.,
    $$
        |\{i\in [m]\mid \exists x,y\in S, x\neq y \text{ s.t. } x[i]=y[i]\}|>\varepsilon m.
    $$
\end{definition}

Note from the definition that for any $S\subseteq C$, if $S$ collides on more than $\varepsilon m$ coordinates, then $|S|\geq \mathrm{Col}_{\varepsilon}(C)$.
An error-correcting code over alphabet $\Sigma$ can be viewed as a special subset of $\Sigma^m$ where $m$ is the length of codeword, so the definition above naturally applies to error-correcting codes.
A deterministic construction of error-correcting codes with high $\varepsilon$-collision number can be find in \cite{KN21,LRSW23}. Their construction does not directly obtain high collision number of a code, instead they showed implication from code distance to its ($\varepsilon$-)collision number as follows. 
\begin{lemma}[\cite{KN21}, See also Theorem 10 in \cite{LRSW23}]\label{lemma: collision number from ECC}
    For any constant $0<\varepsilon\leq 1$, an error correcting code $C:\Sigma^r\to\Sigma^m$ with relative distance $0<\delta<1$ has $\mathrm{Col}_{\varepsilon}(C)\geq\sqrt{\frac{2\varepsilon}{1-\delta}}$.
\end{lemma}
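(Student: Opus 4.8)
The plan is to argue by the contrapositive via a union bound over pairs of codewords. Suppose $S\subseteq C$ with $|S|=s$ collides on more than $\varepsilon m$ coordinates; I want to show $s\geq\sqrt{2\varepsilon/(1-\delta)}$, which (since $S$ was an arbitrary such set) gives the claimed lower bound on the minimum, namely on $\mathrm{Col}_\varepsilon(C)$.

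First I would fix, for each unordered pair of distinct codewords $x,y\in S$, the \emph{agreement set} $A_{x,y}=\{i\in[m]\mid x[i]=y[i]\}$. The relative-distance hypothesis says $\mathsf{dist}(x,y)\geq\delta$, i.e.\ $x$ and $y$ differ on at least $\delta m$ coordinates, so $|A_{x,y}|\leq(1-\delta)m$. Next, I would observe that every coordinate on which $S$ collides lies in $A_{x,y}$ for at least one such pair, hence the set of collision coordinates is contained in $\bigcup_{\{x,y\}\subseteq S,\,x\neq y}A_{x,y}$. A crude union bound then gives that the number of collision coordinates is at most $\binom{s}{2}(1-\delta)m=\tfrac{s(s-1)}{2}(1-\delta)m$.

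Combining with the assumption that $S$ collides on more than $\varepsilon m$ coordinates yields $\tfrac{s(s-1)}{2}(1-\delta)m>\varepsilon m$, i.e.\ $s(s-1)>\tfrac{2\varepsilon}{1-\delta}$. Since $s\geq 1$ we have $s^2\geq s(s-1)$, so $s^2>\tfrac{2\varepsilon}{1-\delta}$ and therefore $s>\sqrt{2\varepsilon/(1-\delta)}$. As this holds for every $S\subseteq C$ that collides on more than $\varepsilon m$ coordinates, the smallest size of such a set, which is $\mathrm{Col}_\varepsilon(C)$ by definition, is at least $\sqrt{2\varepsilon/(1-\delta)}$.

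I do not anticipate a genuine obstacle here: the argument is a one-line double count, and the only point requiring a small amount of care is passing from the bound on $s(s-1)$ to one on $s$ (using $s\ge 1$), together with noting that the statement asks only for $\ge$ whereas the argument in fact gives strict inequality. A secondary remark worth including is that the union bound is lossy precisely because a single coordinate can be a collision coordinate for many pairs simultaneously — this is exactly the slack that the refined random-code analysis in Section~\ref{Section: Collision Number} exploits to get a better (cube-root rather than square-root type) dependence.
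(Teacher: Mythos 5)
Your argument is correct, and it is essentially the same pair-wise double-counting argument used in the cited sources (the paper itself only states this lemma by reference to \cite{KN21} and \cite{LRSW23}): bound each pair's agreement set by $(1-\delta)m$, union-bound over the $\binom{s}{2}$ pairs, and compare with $\varepsilon m$. The passage from $s(s-1)>\frac{2\varepsilon}{1-\delta}$ to $s\geq\sqrt{\frac{2\varepsilon}{1-\delta}}$ is handled correctly, so nothing further is needed.
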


For Reed-Solomon codes, considering their distance, the following result is an immediate consequence.

\begin{theorem}[\cite{KN21,LRSW23}]
\label{thm:collision number in rs code}
    Fix any Reed-Solomon code $\mathcal{C}^{RS}:\Sigma^r\rightarrow \Sigma^m$ with $r<m\leq|\Sigma|$. For any $k\in\mathbb{N}$ and $0<\varepsilon<1$, $\mathrm{Col}_{\varepsilon}(\mathcal{C}^{RS})\geq \sqrt{\frac{2\varepsilon m}{r}}$.
\end{theorem}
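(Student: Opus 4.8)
The statement to prove is Theorem~\ref{thm:collision number in rs code}, which says that for a Reed--Solomon code $\mathcal{C}^{RS}:\Sigma^r\to\Sigma^m$ with $r<m\leq|\Sigma|$, one has $\mathrm{Col}_{\varepsilon}(\mathcal{C}^{RS})\geq\sqrt{\frac{2\varepsilon m}{r}}$ for every $k\in\mathbb{N}$ and $0<\varepsilon<1$.

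\medskip

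\textbf{Plan.} The proof is essentially a one-line deduction chaining together two facts that are already available in the excerpt. First I would recall the Singleton bound for Reed--Solomon codes: a Reed--Solomon code of dimension $r$ and block length $m$ over an alphabet $\Sigma$ with $m\le|\Sigma|$ is an $[m,r]$ MDS code, so its minimum relative distance is exactly $\delta = \frac{m-r+1}{m} = 1 - \frac{r-1}{m}$. In particular $1-\delta = \frac{r-1}{m} \le \frac{r}{m}$. Second, I would invoke Lemma~\ref{lemma: collision number from ECC}, which states that any error-correcting code $C:\Sigma^r\to\Sigma^m$ with relative distance $\delta$ satisfies $\mathrm{Col}_{\varepsilon}(C)\ge\sqrt{\frac{2\varepsilon}{1-\delta}}$. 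Plugging $1-\delta \le \frac{r}{m}$ into this bound gives
\[
  \mathrm{Col}_{\varepsilon}(\mathcal{C}^{RS}) \;\ge\; \sqrt{\frac{2\varepsilon}{1-\delta}} \;\ge\; \sqrt{\frac{2\varepsilon}{r/m}} \;=\; \sqrt{\frac{2\varepsilon m}{r}},
\]
which is exactly the claimed inequality.

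\medskip

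\textbf{What to be careful about.} There is essentially no obstacle here — the only things worth stating cleanly are (a) that the hypothesis $r<m\le|\Sigma|$ is precisely what guarantees the Reed--Solomon code exists and is MDS (one needs at least $m$ distinct evaluation points in $\Sigma$, and $r<m$ keeps the distance positive so $0<\delta<1$ as required by Lemma~\ref{lemma: collision number from ECC}), and (b) that the monotonicity step $\sqrt{2\varepsilon/(1-\delta)}$ is decreasing in $\delta$, so replacing $1-\delta=\frac{r-1}{m}$ by the larger quantity $\frac{r}{m}$ only weakens the lower bound, which is harmless. The parameter $k$ plays no role in the statement and can be ignored; it appears only because the theorem is stated in a form convenient for later use. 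So the entire argument is: Reed--Solomon is MDS, hence $1-\delta\le r/m$, hence apply Lemma~\ref{lemma: collision number from ECC}. I would write it as a two-sentence proof.
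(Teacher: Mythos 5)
Your proposal is correct and matches the paper's intent exactly: the paper states this theorem as an immediate consequence of Lemma~\ref{lemma: collision number from ECC} combined with the MDS distance $\delta = 1-\frac{r-1}{m}$ of Reed--Solomon codes, which is precisely your two-step argument. The relaxation from $\frac{r-1}{m}$ to $\frac{r}{m}$ is handled correctly, so nothing is missing.
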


To fit the requirement in our reduction, i.e., $|\Sigma|^r\geq n$, we choose $|\Sigma|=n^{1/k}$ and $r=\Omega(k)$. To fit the requirement that $\mathrm{Col}_{\varepsilon}(C)=\Omega(k)$ in Lemma~\ref{lemma: gap creating (warm up)}, the Reed-Solomon code here must satisfy $m=\Omega(k^2r)=\Omega(k^3)$. Seeking for a shorter code with high $\varepsilon$-collision number, we turn to randomized construction of codes, and we show the following lemma.

\begin{lemma}\label{lemma: collision number from random code}
    For any constant $0<\varepsilon<1$ and any random code $C_R:\Sigma^r\rightarrow \Sigma^m$ where each codeword is selected uniformly at random in $\Sigma^m$, if $m\geq 16\frac{1}{\varepsilon^2}|\Sigma|^{1/3}\ln|\Sigma|r$ and $|\Sigma|=\omega(1)$, then with high probability, $\mathrm{Col}_{\varepsilon}(C_R)> |\Sigma|^{1/3}$.
\end{lemma}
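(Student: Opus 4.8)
The plan is to bound the probability that \emph{some} set $S$ of size $s = \lfloor|\Sigma|^{1/3}\rfloor$ collides on more than $\varepsilon m$ coordinates, and show this probability is $o(1)$ when $m$ is as large as stated; a union bound over all such $S$ then yields $\mathrm{Col}_\varepsilon(C_R) > |\Sigma|^{1/3}$ with high probability. Fix a candidate set $S = \{x^{(1)},\dots,x^{(s)}\}$ of $s$ distinct codewords. For each coordinate $i \in [m]$, let $Z_i$ be the indicator that $S$ collides on coordinate $i$, i.e.\ that two of the $s$ entries $x^{(1)}[i],\dots,x^{(s)}[i]$ coincide. Since the codewords are uniform and independent over $\Sigma^m$, the $m$ coordinates are mutually independent, and for each $i$, $\Pr[Z_i = 1] \le \binom{s}{2}\frac{1}{|\Sigma|} \le \frac{s^2}{2|\Sigma|}$. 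With $s \le |\Sigma|^{1/3}$ this is at most $\frac{1}{2}|\Sigma|^{-1/3}$, which is $o(1)$ as $|\Sigma| = \omega(1)$; in particular it is $\le \varepsilon/2$ for large enough alphabet. So $\mu := \mathbb{E}\big[\sum_i Z_i\big] \le \frac{\varepsilon}{2} m$, and we want to control $\Pr[\sum_i Z_i > \varepsilon m]$, a deviation of $\sum Z_i$ above twice its mean (at least).

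The second step is a Chernoff-type upper-tail bound. The $Z_i$ are independent Bernoulli variables (independence across coordinates is exactly the reason we built the codewords coordinate-wise uniform), so standard multiplicative Chernoff gives, for the bad event $\{\sum_i Z_i > \varepsilon m\} \subseteq \{\sum_i Z_i \ge 2\mu'\}$ where $\mu' = \varepsilon m /2 \ge \mu$, a bound of the shape $\exp(-\Omega(\varepsilon m))$. (The excerpt only states the lower-tail Chernoff bound, but the upper tail $\Pr[X \ge (1+\delta)\mu] \le \exp(-\mu\delta^2/(2+\delta))$ is standard; alternatively one can phrase the whole argument through the lower tail on $m - \sum Z_i$, the number of ``collision-free'' coordinates, whose mean is $\ge (1-\varepsilon/2)m$, and one wants it to be $< (1-\varepsilon)m$, a downward deviation by a $(1 - \frac{\varepsilon/2}{1-\varepsilon/2})$ factor — cleanly inside the stated inequality.) Either way, for a single fixed $S$ the failure probability is at most $\exp(-c\varepsilon m)$ for an absolute constant $c > 0$.

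The third step is the union bound over all candidate sets. There are at most $\binom{|\mathcal{C}|}{s} \le |\Sigma|^{rs}$ sets $S$ of size $s$ (the code has at most $|\Sigma|^r$ codewords), so

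\[
\Pr[\mathrm{Col}_\varepsilon(C_R) \le s] \;\le\; |\Sigma|^{rs}\cdot \exp(-c\varepsilon m) \;=\; \exp\!\big(rs\ln|\Sigma| - c\varepsilon m\big).
\]

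With $s \le |\Sigma|^{1/3}$ and the hypothesis $m \ge 16\varepsilon^{-2}|\Sigma|^{1/3}\ln|\Sigma|\,r$, the exponent is $r|\Sigma|^{1/3}\ln|\Sigma| - c\varepsilon m \le r|\Sigma|^{1/3}\ln|\Sigma|\,(1 - 16c/\varepsilon)$, which is $\le -r|\Sigma|^{1/3}\ln|\Sigma|$ once the absolute constant from Chernoff is pinned down and $\varepsilon < 1$ (the factor $16/\varepsilon^2$ in the hypothesis is generous enough to absorb $c$ and the extra $1/\varepsilon$). Hence the failure probability is $\exp(-\Omega(r|\Sigma|^{1/3}\ln|\Sigma|)) \to 0$, which is ``with high probability.''

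The main obstacle — really the only delicate point — is making sure the per-coordinate collision probability is genuinely below $\varepsilon/2$ (not just $o(1)$) so that the target deviation $\varepsilon m$ is a \emph{constant-factor} overshoot of the mean rather than a $1+o(1)$ overshoot; this is where $s \le |\Sigma|^{1/3}$ and $|\Sigma| = \omega(1)$ are both used, giving $\Pr[Z_i=1] \le \frac{1}{2}|\Sigma|^{-1/3} = o(1)$. Once that gap is a constant, the Chernoff exponent is $\Omega(\varepsilon m)$ and the constant $16/\varepsilon^2$ in the length hypothesis comfortably beats the $rs\ln|\Sigma| = r|\Sigma|^{1/3}\ln|\Sigma|$ union-bound loss. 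One should also note that the bound $\binom{s}{2}/|\Sigma|$ on $\Pr[Z_i=1]$ slightly overcounts (it unions over pairs rather than computing the exact probability that all $s$ entries are distinct), but in the regime $s \ll |\Sigma|^{1/2}$ this is lossless up to lower-order terms and does not affect the conclusion.
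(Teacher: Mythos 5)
Your proposal is correct and follows essentially the same route as the paper: fix a candidate set of size $|\Sigma|^{1/3}$, use independence across coordinates, apply a Chernoff bound to the number of colliding (equivalently, collision-free) coordinates, and union-bound over the at most $(|\Sigma|^r)^{|\Sigma|^{1/3}}$ candidate sets against the hypothesis $m\geq 16\varepsilon^{-2}|\Sigma|^{1/3}\ln|\Sigma|\,r$. The only cosmetic differences are that you bound the per-coordinate collision probability by a union bound over pairs ($\binom{s}{2}/|\Sigma|$) rather than the exact birthday-style product the paper computes, and you phrase the concentration step via the upper tail (noting the lower-tail phrasing on the collision-free count, which is exactly what the paper does).
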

\begin{proof}
    We show that the probability that $\mathrm{Col}_{\varepsilon}(C_R)\leq|\Sigma|^{1/3}$ is small. Note that the event ``$\mathrm{Col}_{\varepsilon}(C_R)\leq|\Sigma|^{1/3}$'' is equivalent to ``there exists $S\subseteq C_R$ with $|S|=|\Sigma|^{1/3}$ such that $S$ collides on more than $\varepsilon m$ coordinates''. Our target is to upper bound the probability of this event.
    
    First, fix any $S\subseteq C_R$ with $|S|=|\Sigma|^{1/3}$ and $i\in[m]$, we show that with high probability $S$ does not collide on the $i$-th coordinate. To be convenient, we list the elements in $S$ as $S=\{x_1,\cdots, x_{|S|}\}$. Recall that ``$S$ does not collide on the $i$th coordinate'' means that ``$x_1[i],\cdots, x_{|S|}[i]$ are all distinct''. For $1\leq j\leq |S|$, we define event $E_j$ as ``$x_j[i]$ is not in $\{x_c[i]\}_{1\leq c<j}$'', and the event above is also equivalent to $E_1 \wedge \cdots \wedge E_{|S|}$. We now lower bound its probability as:
    \begin{align*}
    &\Pr[x_1[i],\cdots, x_{|S|}[i] \text{ are all distinct}]\\
    =&\Pr[E_1 \wedge \cdots \wedge E_{|S|}]\\
    =&\Pr[E_1]\times\Pr[E_2|E_1]\times \cdots \times \Pr[E_{|S|}|E_1\wedge\cdots\wedge E_{|S|-1}]\\
    =&1\cdot \frac{|\Sigma|-1}{|\Sigma|}\cdots \frac{|\Sigma|-(|S|-1)}{|\Sigma|}\quad\quad (\text{each conditioned event reduces one feasible choice})\\
    \geq &\left(\frac{|\Sigma|-|\Sigma|^{1/3}}{|\Sigma|}\right)^{|\Sigma|^{1/3}}\\
    =&\left(1-\frac{1}{|\Sigma|^{2/3}}\right)^{|\Sigma|^{1/3}}\\
    =&\left(1-\frac{1}{|\Sigma|^{2/3}}\right)^{|\Sigma|^{2/3}\cdot \frac{1}{|\Sigma|^{1/3}}}\\
    \geq& \left(\frac{1}{4}\right)^{\frac{1}{|\Sigma|^{1/3}}}\\
    =& 1-o(1)
    \end{align*}
    where the last inequality holds from the fact that $(1-1/n)^n\geq 1/4$ when $n\geq 2$. Denote the above probability as $\Delta$.

    Secondly, fix any $S\subseteq C_R$ with $|S|=|\Sigma|^{1/3}$, we upper bound the probability of ``$S$ collides on more than $\varepsilon m$ coordinates''. Let $B_i$ be the indicating variable of ``$S$ does \textbf{not} collide on $i$th position'' and let $B$ denotes the number of positions that $S$ does \textbf{not} collide on, i.e., $B=\sum_{i=1}^{m}B_i$. The event ``$S$ collides on more than $\varepsilon m$ coordinates'' is equivalent to ``$B\leq (1-\varepsilon)m$''. The expectation of $B$ is
    \[
    \mathbb{E}[B]=\sum_{i=1}^{m}\mathbb{E}[B_i]= \Delta m.
    \]
 
    From the construction of random code, we can see that $B_1,\dots,B_m$ are independent.
    Applying Chernoff bound, we have:
    \begin{align*}
        \Pr[B\leq (1-\varepsilon)m]=&\Pr[B\leq \Delta m - (\Delta -1+\varepsilon)m]\\
        \leq & \exp\left(-\frac{(\Delta-1+\varepsilon)^2\Delta m}{2 \Delta^2}\right)\\
        =&\exp\left(-\frac{(\Delta-1+\varepsilon)^2}{2\Delta}m\right)\\
        \leq & \exp\left(-\frac{(\Delta-1+\varepsilon)^2}{2}m\right) && (\text{since }\Delta\leq 1)\\
        \leq & \exp\left(-\frac{1}{8}\varepsilon^2m\right) && (\text{since }\Delta-1+\varepsilon=\varepsilon-o(1)\geq \frac{1}{2}\varepsilon).
    \end{align*}
    
    There are at most $(|\Sigma|^r)^{|\Sigma|^{1/3}}$ subsets of $C_R$ with size $|\Sigma|^{1/3}$, so we take the union bound over all possible $S$'s as:
    \begin{align*}
        \Pr[\mathrm{Col}_\varepsilon(C_R)\leq |\Sigma|^{1/3}]&\leq (|\Sigma|^r)^{|\Sigma|^{1/3}}\cdot \exp\left(-\frac{1}{8}\varepsilon^2m\right)\\
        &=e^{|\Sigma|^{1/3}\ln|\Sigma|r-\frac{1}{8}\varepsilon^2m}\\
        &\leq e^{-|\Sigma|^{1/3}\ln|\Sigma|r}\\
        &=o(1)
    \end{align*}
    where the last inequality is due to $m\geq 16\frac{1}{\varepsilon^2}|\Sigma|^{1/3}\ln|\Sigma|r$. Therefore with high probability, $\mathrm{Col}_\varepsilon(C_R)> |\Sigma|^{1/3}$.
\end{proof}

\begin{lemma}\label{lemma: collision number from random code (large m)}
    For any constant $c>0$ and $0<\varepsilon<1$, there is a randomized algorithm that given integers $n,k\in \N^+$, constructs a code $C\subseteq \Sigma^m$ with parameters $|C|=n, |\Sigma|=O(k^3)$ and $m=O(k\log n)$ such that with high probability, $\mathrm{Col}_\varepsilon(C)>ck$. Moreover, the running time of this algorithm is $O(nm|\Sigma|)$.
\end{lemma}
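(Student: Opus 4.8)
The plan is to apply Lemma~\ref{lemma: collision number from random code} with a carefully chosen alphabet size, and then perform a ``block-folding'' operation to shrink the dimension $r$ (equivalently, to make $m$ as small as $O(k\log n)$) while provably preserving the collision number. First I would set $|\Sigma_0| = \Theta(k^3)$ large enough that $|\Sigma_0|^{1/3} \ge ck + 1$; this is the alphabet over which we will ultimately work, and it satisfies $|\Sigma_0| = \omega(1)$ as required by Lemma~\ref{lemma: collision number from random code} (formally we think of $k \to \infty$, handling small $k$ by brute force or by absorbing into constants). The dimension $r$ is forced by the requirement $|C| = n$: since a random code is obtained by drawing $n$ codewords, we may simply take $r = \lceil \log_{|\Sigma_0|} n \rceil = O(\log n / \log k)$, so that $|\Sigma_0|^r \ge n$ and the code has $n$ codewords. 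By Lemma~\ref{lemma: collision number from random code}, a uniformly random code $C_R : \Sigma_0^r \to \Sigma_0^{m_0}$ with $m_0 = \lceil 16\varepsilon^{-2} |\Sigma_0|^{1/3}\ln|\Sigma_0| \cdot r\rceil = O(k \log k \cdot \log n/\log k) = O(k\log n)$ satisfies $\mathrm{Col}_\varepsilon(C_R) > |\Sigma_0|^{1/3} > ck$ with high probability.

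The subtle point is that $m_0 = O(k\log n)$ already meets the stated bound $m = O(k\log n)$, so in fact \emph{no folding is needed} for this particular lemma — the technical overview's block-folding trick is used later, when one additionally needs the dimension $r'$ of the folded code to be tiny so that $|\Sigma'|^k \le n^{O(1)}$. For the present statement it suffices to output $C = C_R$ directly, reading off $|C| = n$ (choosing exactly $n$ of the $|\Sigma_0|^r \ge n$ codewords, or noting the random code construction draws exactly $n$ of them), $|\Sigma| = |\Sigma_0| = O(k^3)$, and $m = m_0 = O(k\log n)$. So the core of the argument is just instantiating the parameters in Lemma~\ref{lemma: collision number from random code} and checking the three arithmetic inequalities: $|\Sigma_0|^{1/3} \ge ck+1$, $|\Sigma_0|^r \ge n$, and $m_0 \ge 16\varepsilon^{-2}|\Sigma_0|^{1/3}\ln|\Sigma_0|\,r$.

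For the running time: sampling $n$ codewords each of length $m$ over an alphabet of size $|\Sigma|$ takes $O(nm\log|\Sigma|)$ random bits and $O(nm)$ arithmetic operations on symbols, which I would bound crudely by $O(nm|\Sigma|)$ as claimed (the $|\Sigma|$ factor is a loose overestimate that comfortably covers symbol manipulation and, if desired, any folding/relabeling). The main obstacle — really the only thing requiring care — is bookkeeping the chain of parameter substitutions so that the hypothesis $m_0 \ge 16\varepsilon^{-2}|\Sigma_0|^{1/3}\ln|\Sigma_0|\,r$ of Lemma~\ref{lemma: collision number from random code} holds with the specific $r = \Theta(\log n/\log k)$ dictated by $|C| = n$, and confirming that the resulting $m_0$ collapses to $O(k\log n)$ rather than something larger; this is exactly the computation $|\Sigma_0|^{1/3}\ln|\Sigma_0| \cdot r = \Theta(k) \cdot \Theta(\log k) \cdot \Theta(\log n/\log k) = \Theta(k\log n)$. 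Everything else is a direct quotation of the previous lemma together with the ``with high probability'' guarantee carrying over verbatim.
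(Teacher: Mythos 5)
Your proposal is correct and follows essentially the same route as the paper: instantiate Lemma~\ref{lemma: collision number from random code} with $|\Sigma|=\Theta(k^3)$ (so $|\Sigma|^{1/3}\geq ck$), $r=\Theta(\log n/\log k)$ to get $|C|=n$, and $m=16\varepsilon^{-2}|\Sigma|^{1/3}\ln|\Sigma|\,r=O(k\log n)$, with the running time being just the cost of sampling the random codewords. Your observation that no block-folding is needed here (it only appears in the subsequent ``grouped'' lemma) matches the paper exactly.
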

\begin{proof}
    The running time analysis is obvious since a random code simply selects $n$ codewords, each codeword is simply selecting $m$ symbols from $\Sigma$ at random. Let $|\Sigma|=(ck)^3=O(k^3)$ and $r=\log n/\log |\Sigma|$ such that $|\Sigma|^r=n$. Let $m=16\frac{1}{\varepsilon^2}|\Sigma|^{1/3}\ln|\Sigma|r=O(k\log n)$. We construct a random code $C:\Sigma^r\to\Sigma^m$ where each codeword is chosen independently and uniformly at random from $\Sigma^m$. By Lemma \ref{lemma: collision number from random code}, $\mathrm{Col}_\varepsilon(C)>|\Sigma|^{1/3}=ck$ with high probability.
\end{proof}

\begin{remark}
    We remark that using an almost identical argument, Lemma~\ref{lemma: collision number from random code} can be extended to the case that for each integer $t\geq 3$, if $m>\Omega(|\Sigma|^{1/t}\log |\Sigma|r)$ and $|\Sigma|=\omega(1)$, then w.h.p., $\mathrm{Col}_{\varepsilon}(C_R)>|\Sigma|^{1/t}$. For constant $t>3$, setting $|\Sigma|=\Omega(k^t)$, Lemma~\ref{lemma: collision number from random code (large m)} can be extended to the case with same parameter but larger code alphabet.
\end{remark}

The following is a ``merge'' step in out reduction that enables us to enumerate the composition of a number of blocks over small alphabet, which turned out to be useful in reducing parameter growth.
\begin{lemma}\label{lemma: collision number from random code (grouped)}
    For any constant $c>1$ and $0<\varepsilon<1$, there is a randomized algorithm that given integers $n,k\in \N^+$, constructs a code $C\subseteq \Sigma^m$ with parameters $|C|=n, |\Sigma|=O(n^{1/k})$ and $m=O(k^2\log k)$ such that with high probability $\mathrm{Col}_\varepsilon(C)>ck$. Moreover, the running time of this algorithm is $O(k^2\log kn^{1+1/k})$.
\end{lemma}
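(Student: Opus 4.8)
The plan is to avoid sampling a random code over the large alphabet $\Sigma$ from scratch, and instead to obtain $C$ by \emph{grouping} consecutive coordinates of a random code $D$ over a much smaller alphabet $\Sigma_0$, for which Lemma~\ref{lemma: collision number from random code} already supplies a collision bound. Two facts make this work: (i) grouping coordinates can only \emph{increase} the $\varepsilon$-collision number, so the bound for $D$ transfers to $C$; and (ii) one can pick the parameters so that $C$ simultaneously has alphabet $O(n^{1/k})$, length $O(k^2\log k)$, and corresponds to a code $D$ whose length is above the threshold required by Lemma~\ref{lemma: collision number from random code}.

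Concretely, I would set $|\Sigma_0| = \lceil (ck)^3 \rceil = \Theta(k^3)$ and $r_0 = \lceil \log n/\log|\Sigma_0| \rceil$, so $|\Sigma_0|^{r_0} \ge n$. Fix a block length $b = \lfloor \log n/(k\log|\Sigma_0|) \rfloor = \Theta(\log n/(k\log k))$ and let $g$ be the least integer with $bg \ge 16\varepsilon^{-2}|\Sigma_0|^{1/3}\ln|\Sigma_0|\,r_0$; since the right-hand side is $\Theta(k\log n)$, this forces $g = \Theta(k^2\log k)$ and $m_0 := bg = \Theta(k\log n)$. Sample $n$ uniformly random codewords of $\Sigma_0^{m_0}$ to form $D$, partition $[m_0]$ into $g$ consecutive blocks of length $b$, read each block as a single symbol of $\Sigma := \Sigma_0^{\,b}$, and let $C \subseteq \Sigma^m$ with $m := g$ be the resulting code. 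The grouping map $\Sigma_0^{m_0} \to \Sigma^m$ is a bijection, so $|C| = |D| = n$; the choice of $b$ gives $|\Sigma| = |\Sigma_0|^b \le 2^{(\log n)/k} = n^{1/k}$; and $m = g = O(k^2\log k)$. Drawing the $nm$ symbols of $C$ costs $O(nm|\Sigma|) = O(k^2\log k\cdot n^{1+1/k})$ (sampling $D$ over the small alphabet is in fact cheaper, but this weaker bound already matches the statement).

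For the collision bound I would first establish the monotonicity $\mathrm{Col}_\varepsilon(C) \ge \mathrm{Col}_\varepsilon(D)$: if $S \subseteq C$ collides on more than $\varepsilon m$ of the $g$ blocks, then in each such block $j$ there are distinct $x,y \in S$ sharing their $j$-th $\Sigma$-symbol, hence (as $\Sigma = \Sigma_0^{\,b}$) their preimages in $D$ agree on all $b$ coordinates of block $j$; the blocks being disjoint, $S$ viewed inside $D$ collides on more than $\varepsilon m\cdot b = \varepsilon m_0$ coordinates, so $|S| \ge \mathrm{Col}_\varepsilon(D)$, and minimizing over $S$ gives the claim. Since $m_0 = bg$ was chosen to meet the threshold, Lemma~\ref{lemma: collision number from random code} applied to $D$ (a random code over $\Sigma_0$ with $|\Sigma_0| = \omega(1)$ and $n \le |\Sigma_0|^{r_0}$ codewords) gives $\mathrm{Col}_\varepsilon(D) > |\Sigma_0|^{1/3} \ge ck$ with high probability; on that event $\mathrm{Col}_\varepsilon(C) \ge \mathrm{Col}_\varepsilon(D) > ck$, and the $n$ samples are distinct with probability $1-o(1)$ by a birthday bound (as $|\Sigma_0|^{m_0} \gg n^2$), so $|C| = n$.

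I expect the only point requiring real care to be the parameter balancing: the three requirements $|\Sigma| \le n^{1/k}$, $m = O(k^2\log k)$, and ``$m_0 = bg$ above the threshold of Lemma~\ref{lemma: collision number from random code}'' must hold at once, and they do precisely because $b\cdot g = \Theta(\log n/(k\log k)) \cdot \Theta(k^2\log k) = \Theta(k\log n)$ matches the order of that threshold with essentially no slack. Everything else is routine bookkeeping: the bijectivity of the grouping map, the induced monotonicity of $\mathrm{Col}_\varepsilon$, the observation that the union bound in the proof of Lemma~\ref{lemma: collision number from random code} only gets stronger for a code with fewer than $|\Sigma_0|^{r_0}$ codewords, and the $O_k$-style rounding (non-integer $r_0$, $b = 0$ for tiny $n$, and so on).
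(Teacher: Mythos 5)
Your proposal is correct and follows essentially the same route as the paper: build a random code over a $\Theta(k^3)$-sized alphabet of length $\Theta(k\log n)$ (the paper's Lemma~\ref{lemma: collision number from random code (large m)}), merge consecutive coordinates into blocks so the alphabet becomes $O(n^{1/k})$ and the length $O(k^2\log k)$, and observe that a collision on a merged coordinate forces collisions on all underlying coordinates, so the collision number can only go up. Your extra care about distinctness of the sampled codewords and the rounding of $b$, $g$ is a minor refinement of bookkeeping the paper leaves implicit, not a different argument.
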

\begin{proof}
    On input $n,k$, we first construct a code $C'\subseteq {(\Sigma')}^{m'}$ by Lemma~\ref{lemma: collision number from random code (large m)}, where $|C'|=n,|\Sigma'|=O(k^3)$, $m'=O(k\log n)$ and with high probability $\mathrm{Col}_\varepsilon(C')>ck$. Let $g$ be some integer to be determined later. The idea is to merge every $g$ coordinates of a codeword in into a single coordinate of the resulting codeword. To illustrate, we construct $C\subseteq \Sigma^m$ as follows.
    Let $\Sigma=(\Sigma')^g$ and $m=m'/g$. For every $c'\in C'$ we introduce a codeword $c$ into $C$ that for every $i\in[m]$
    $$
        c[i] = (c'[ig], c'[ig+1], \dots, c'[ig+g-1]).
    $$
    
    Now suppose that $\mathrm{Col}_\varepsilon(C')>h$, we prove in the following that $\mathrm{Col}_\varepsilon(C)>h$.
    Note that by the definition of collision number, it suffices to prove that: for any $S\subseteq C$, if $S$ collide on more than $\varepsilon m$ coordinates, then $|S|>h$.
    Now Suppose $S\subseteq C$ is such a set of codewords that collide on more than $\varepsilon m$ coordinates.
    Let $i\in[m]$ be one of these coordinates. Then there are distinct codewords $c_1, c_2\in S$ that $c_1[i]=c_2[i]$.
    Let $S'$ be the corresponding set of codewords in $C'$, and $c_1',c_2'\in S'$ be the corresponding codewords of $c_1,c_2$.
    By the way we construct, $c_1[i]=c_2[i]$ means that for all $ig\leq j<(i+1)g$, $c_1'[j]=c_2'[j]$.
    This means that $S'$ collides on all $g$ coordinates of $[ig,(i+1)g)$.
    Therefore, since $S$ collides on more than $\varepsilon m$ coordinates, $S'$ collides on more than $g\cdot \varepsilon m=\varepsilon m'$ coordinates.
    Since $\mathrm{Col}_\varepsilon(C')>h$, by the definition of collision number it must satisfy that $|S'|\geq \mathrm{Col}_\varepsilon(C')>h$.
    This means that $|S|=|S'|>h$ as well.

    Since we have proved that the collision number preserves through the ``merging process'', it holds with high probability that $\mathrm{Col}_\varepsilon(C)>ck$. Finally let $g=\frac{\log n}{k\log |\Sigma|'}$ then we achieve the desired parameters as $|\Sigma|=(|\Sigma'|)^g=O(n^{1/k})$ and $m=m'/g=O(k^2\log k)$.

    The running time follows from Lemma~\ref{lemma: collision number from random code (large m)}.
\end{proof}

\subsection{Limitation of Collision Analysis in \cite{KN21,LRSW23}}
There are two approaches to prove that a random code has good collision number. One is to prove directly as our approach in Lemma~\ref{lemma: collision number from random code}. The other is to first prove that a random code has good relative distance, then use the lower bound for collision number in Lemma \ref{lemma: collision number from ECC}. We have already shown the first approach yields $m'=O(k^2\log k)$. Below, we argue that the second approach must cause a cubic increase in the parameter.

To fit Lemma~\ref{lemma: gap creating (warm up)} in the following paragraph, we require the collision number of code $C$ be $\mathrm{Col}_{\varepsilon}(C)=\Omega(k)$. Combining with Lemma \ref{lemma: collision number from ECC}, we immediately have the relative distance of code $C$ must satisfies\[
\delta\geq 1-\frac{1}{\Omega(k^2)}.
\]

In coding theory, some bounds are established for parameters of a code. We introduce the Singleton bound of a code as follows.
\begin{theorem}[Singleton Bound]
    For every code $C:\Sigma^{r}\rightarrow\Sigma^m$ with relative distance $\delta$, $r\leq m-\delta m+1$.
\end{theorem}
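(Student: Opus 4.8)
The plan is to give the classical \emph{puncturing} argument. Set $\ell=\lceil\delta m\rceil$, so that any two distinct codewords of $C$ differ in at least $\ell$ coordinates (this is exactly the hypothesis $\mathsf{dist}(\vec x,\vec y)\ge\delta$ for distinct $\vec x,\vec y\in C$ together with the integrality of the Hamming distance). First I would define the projection $\pi:\Sigma^m\to\Sigma^{m-\ell+1}$ that deletes a fixed block of $\ell-1$ coordinates, say the last $\ell-1$ of them, and then study the image of the codeword set under $\pi$; equivalently, one lists the restrictions of all codewords to the first $m-\ell+1$ coordinates and counts them.

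The key step is to check that $\pi$ is injective on $C$. Suppose $\vec x\ne\vec y$ are codewords with $\pi(\vec x)=\pi(\vec y)$. Then $\vec x$ and $\vec y$ agree on all $m-\ell+1$ surviving coordinates, hence they differ in at most $\ell-1$ coordinates, i.e. $\mathsf{dist}(\vec x,\vec y)\le(\ell-1)/m$. But $\ell-1=\lceil\delta m\rceil-1<\delta m$, so $\mathsf{dist}(\vec x,\vec y)<\delta$, contradicting the relative-distance hypothesis. Therefore $\pi$ restricted to $C$ is injective, which gives $|C|\le|\Sigma|^{m-\ell+1}$.

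To finish, I would use the convention implicit in writing $C:\Sigma^r\to\Sigma^m$ that the encoding is injective, so that $|C|=|\Sigma|^{r}$; combining with the previous inequality yields $|\Sigma|^{r}\le|\Sigma|^{m-\ell+1}$, hence $r\le m-\ell+1=m-\lceil\delta m\rceil+1\le m-\delta m+1$, which is the claim. I do not expect a genuine obstacle here; the only points requiring care are the passage between the relative distance $\delta$ and the integer Hamming distance (handled cleanly by taking the ceiling $\ell=\lceil\delta m\rceil$) and the non-degeneracy assumption that the code genuinely has $|\Sigma|^r$ codewords — without it the statement is false (a single-codeword ``code'' vacuously has relative distance $1$ yet arbitrary $r$), so I would state this assumption explicitly before the argument.
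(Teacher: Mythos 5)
Your proof is correct: the puncturing argument (project away $\lceil\delta m\rceil-1$ coordinates, use the minimum distance to get injectivity on the codeword set, then count) is the standard proof of the Singleton bound, and your handling of the ceiling and of the implicit injectivity of the encoding map $C:\Sigma^r\to\Sigma^m$ is fine. The paper does not prove this statement itself but simply cites the textbook treatment, and your argument is essentially that textbook proof, so there is nothing further to reconcile.
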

Detailed discussion and proof of Singleton bound can be found in \cite[Section 4.3]{EssentialCodingTheory}. We apply the bound to parameter we choose and obtain $m-(1-\frac{1}{\Omega(k^2)}))m+1\geq r$, i.e., \[
m\geq \Omega(k^2)r.\]

Our reduction for \textsc{MLD} associates each input vector with a unique codeword, which requires $|C|\geq n$, or $|\Sigma|^r\geq n$, leading to \[
r\geq \frac{\log n}{\log |\Sigma|}.
\]

Finally, consider the ``merging'' procedure in Lemma~\ref{lemma: collision number from random code (grouped)}, we merge the code into $m'$ blocks, each new block contains $g=\frac{m}{m'}$ blocks, then the set of all $k$-tuples of a new block has size
\begin{align*}
    (|\Sigma|^{g})^k=& |\Sigma|^{k\frac{m}{m'}}\\
    =&2^{k\frac{m}{m'}\log |\Sigma|}\\
    \geq & 2^{\Omega(k\cdot k^2\frac{\log n}{\log|\Sigma|}\log|\Sigma|)/m'}\\
    =&n^{\Omega(k^3)/m'}.
\end{align*}
To efficiently enumerate all $k$-tuples of a new block, the size above must be at most polynomial in $n$, indicating that the final blocks $m'=\Omega(k^3)$. This bound is tight since we've shown the Reed-Solomon code can achieve $m'=O(m)=O(k^3)$. 

\section{Gap-creating Reduction for \textsc{$k$-MLD$_p$}}
\label{sec:gap creating reduction}

In this section we present our gap-creation reduction for \textsc{$k$-MLD$_p$}. First we present a construction that illustrates our main idea and is the crux of our reduction. 
This construction produces an ``unbalanced gap'' \textsc{$k'$-MLD$_p$} instance in the sense that the output instance is divided into two parts (with different sizes), any solution must contain an amount of vectors in each part. Further, for the NO case, any solution must contain constant fraction more vectors in at least one part.
This construction still needs to be modified later to convert into an actual reduction. 


\begin{lemma}\label{lemma: gap creating (warm up)}
There is an algorithm which on input $k$ vector sets $V_1,\cdots,V_k\subseteq\mathbb{F}_p^d$ each of size $n$, a target vector $\vec{t}\in\mathbb{F}_p^d$ and a code $C\subseteq|\Sigma|^m$ with $|C|=n$ and $\mathrm{Col}_\varepsilon(C)\geq ck$ outputs $A=A_1\dot\cup\cdots\dot\cup A_{k}\subseteq\mathbb{F}_p^D$ and $B=B_1\dot\cup\ldots\dot\cup B_m\subseteq\mathbb{F}_p^{D}$ with  $D=O(d+km|\Sigma|)$ and a target vector $\vec{t}'\in\F_p^D$ in 
$O(dm^2k^2|\Sigma|(n+|\Sigma|^k))$-time such that
\begin{itemize}
    \item[(i)]  If there exist $\vec{v}_1\in V_1,\ldots,\vec{v}_k\in V_k$ such that $\sum_{i\in[k]}\vec{v}_i=\vec{t}$, then there exists $\vec{a}_1'\in A_1,\cdots, \vec{a}_k'\in A_k$ and $\vec{b}_1'\in B_1,\cdots, \vec{b}_m'\in B_m$ with their sum being $\vec{t}'$.
    \item[(ii)] If for any $\vec v_1\in V_1,\dots,\vec v_k\in V_k$ and $\alpha_1,\dots\alpha_k\in\F_p^+$ it holds that $\alpha_1\vec v_1+\dots+\alpha_k\vec v_k\neq\vec t$, then any $X\subseteq A\dot\cup B$ and $\lambda:X\to\F_p^+$ such that $\sum_{\vec{x}\in X}\lambda(\vec x)\vec{x}=\vec{t}'$ must satisfy at least one of the following:
    \begin{itemize}
        \item $|X\cap A|\geq ck$ and $|X\cap B|\geq m$,
        \item $|X\cap A|\geq k$ and $|X\cap B|\geq 2(1-\varepsilon)m$.
    \end{itemize}
\end{itemize}
\end{lemma}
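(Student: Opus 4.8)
The plan is to realize the informal construction sketched in Section~\ref{Introduction-Technical Overview-Gap Creation} as an explicit family of vectors over $\F_p$. I would work in a coordinate space $\F_p^D$ that is the concatenation of several ``blocks'': (1) a block of $d$ coordinates carrying the original \textsc{$k$-MLD} data (so that any solution is forced to reconstruct $\vec t$ from the $A$-side), (2) for each $i\in[m]$ and each $j\in[k]$ a ``one-hot'' block of size $|\Sigma|$ that encodes the $j$-th part of the $i$-th segment, and (3) auxiliary ``counting/selection'' coordinates that force exactly one $B$-vector per $B_i$ in the cheap case. Concretely, for $j\in[k]$ and $\vec v\in V_j$ I would let $\vec a(\vec v)\in A_j$ agree with $\vec v$ on the data block, place the one-hot encoding $e_{C(\vec v)[i]}\in\F_p^{|\Sigma|}$ into the $(i,j)$-block for every $i\in[m]$, and zero elsewhere; I would then duplicate each such vector $m/k$ times (these become the $m/k$ copies mentioned in the overview, making $|A_j|=O(n\cdot m/k)$). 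For $i\in[m]$ and a $k$-tuple $\vec b=(b_1,\dots,b_k)\in\Sigma^k$, I would let the corresponding vector in $B_i$ carry $-e_{b_j}$ in the $(i,j)$-block for each $j$, and a $1$ in the $i$-th selection coordinate. Finally $\vec t'$ would be $\vec t$ on the data block, $0$ on all one-hot blocks, and $1$ on every selection coordinate.

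The key steps, in order. \textbf{Step 1 (Completeness).} Given $\vec v_1\in V_1,\dots,\vec v_k\in V_k$ with $\sum_i\vec v_i=\vec t$, take the $k$ vectors $\vec a(\vec v_1),\dots,\vec a(\vec v_k)$ (one from each $A_j$; if we need exactly $k$ we just pick one of the $m/k$ copies of each) and, for each $i\in[m]$, the single $B_i$-vector indexed by the ``common neighbor'' tuple $\vec b^{(i)}=(C(\vec v_1)[i],\dots,C(\vec v_k)[i])$. On the data block the $A$-vectors sum to $\vec t$ and the $B$-vectors contribute $0$; in each $(i,j)$-block we get $e_{C(\vec v_j)[i]}-e_{b^{(i)}_j}=0$; each selection coordinate gets exactly one $1$. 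So the sum is $\vec t'$, with $k$ vectors from $A$ and $m$ from $B$, giving~(i). \textbf{Step 2 (Soundness, structure of any solution).} Let $X\subseteq A\dot\cup B$ with coefficients $\lambda$ sum to $\vec t'$. Looking at selection coordinate $i$: only $B_i$-vectors are nonzero there, each with entry $1$, so $\sum_{\vec x\in X\cap B_i}\lambda(\vec x)=1$ in $\F_p$; in particular $X\cap B_i\neq\emptyset$ for every $i$, hence $|X\cap B|\ge m$ always. Define $I=\{i\in[m]:|X\cap B_i|=1\}$ and split into two cases. \textbf{Step 3 (Case $|I|<\varepsilon m$).} Then $|[m]\setminus I|\ge(1-\varepsilon)m$, and for each such $i$ we have $|X\cap B_i|\ge 2$, so $|X\cap B|\ge 2(1-\varepsilon)m+\varepsilon m\ge(1-\varepsilon)\cdot 2m$ — wait, more carefully $|X\cap B|=\sum_i |X\cap B_i|\ge 2|[m]\setminus I|+|I|\ge 2(1-\varepsilon)m$. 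It remains to show $|X\cap A|\ge k$ in this case, which follows from the data block: the only vectors with support on the data block are those in $A$, and their $\lambda$-combination must equal $\vec t\neq 0$; by the \textsc{ColoredMLD} soundness hypothesis (no $\le k$, or really no $k$-or-fewer per-color, combination of $V$-vectors hits $\vec t$) one needs at least $k$ distinct $A$-vectors — here I must be careful that the $m/k$ duplicates of a single $\vec v$ all share the same data-block content, so ``$k$ distinct vectors'' must come from $k$ distinct colors, which is exactly what the hypothesis rules out with fewer. \textbf{Step 4 (Case $|I|\ge\varepsilon m$).} For $i\in I$ let $\vec b^{(i)}$ be the unique tuple with $X\cap B_i=\{$that vector$\}$, with coefficient $\lambda_i$; then in $(i,j)$-block the $B$-contribution is $-\lambda_i e_{b^{(i)}_j}$, which must be cancelled by $A$-vectors. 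Each $\vec a\in X\cap A_j$ contributes $\lambda(\vec a)e_{C(\vec a_{\mathrm{orig}})[i]}$ to that block, so the multiset of values $\{C(\vec a_{\mathrm{orig}})[i]:\vec a\in X\cap A_j\}$ must sum (with the right coefficients) to $\lambda_i e_{b^{(i)}_j}$, forcing — since entries of a one-hot vector are in $\{0,1\}$ and we're over $\F_p$, the cleanest route is to also add, per $(i,j)$-block, a constraint forcing the number of contributing $A_j$-vectors with a given symbol to behave well — ultimately that for each $i\in I$, collecting the symbols $C(\vec a_{\mathrm{orig}})[i]$ over all $\vec a\in X\cap A$, either $|X\cap A|\ge k+1$ outright, or these symbols collide (two distinct original vectors in $X\cap A$ agree at coordinate $i$). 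Applying the collision-number property to $S=\{C(\vec a_{\mathrm{orig}}):\vec a\in X\cap A\}\subseteq C$, which collides on $\ge\varepsilon m$ coordinates, yields $|X\cap A|\ge|S|\ge\mathrm{Col}_\varepsilon(C)\ge ck$; combined with $|X\cap B|\ge m$ this is the first bullet of~(ii).

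The main obstacle I anticipate is \textbf{Step 4}, specifically engineering the one-hot blocks so that ``one $B_i$-vector in the combination'' genuinely forces ``the $A$-vectors restricted to color $j$ must supply the symbol $b^{(i)}_j$'', over a field $\F_p$ where coefficients from $\F_p^+$ can interact and where an adversary might try to cancel a $-e_{b}$ using several $A$-vectors with different symbols plus over-/under-counting. The fix, which I'd build into the construction, is to make the $(i,j)$-blocks of size $|\Sigma|$ with the one-hot encoding placed so that the \emph{only} way to write $\lambda_i e_{b^{(i)}_j}$ as a $\F_p^+$-combination of one-hot vectors $\{e_\sigma\}$ is to use $e_{b^{(i)}_j}$ itself (possibly repeated), i.e.\ exploit that distinct one-hot vectors have disjoint supports so each coordinate is independently zero unless a vector with that symbol is chosen; thus a color-$j$ $A$-vector with symbol $\sigma\ne b^{(i)}_j$ cannot appear unless another is chosen to cancel it, and two $A$-vectors of the same color chosen at coordinate $i$ with the same symbol is exactly a collision (same color, distinct originals $\Rightarrow$ the codewords agree at $i$). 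Getting the bookkeeping of coefficients in $\F_p$ exactly right — and handling the $m/k$-duplication so it inflates only the $A$-side counts and not the parameter analysis — is the delicate part; once that is pinned down, completeness and the two-case soundness split are routine, and the running time $O(dm^2k^2|\Sigma|(n+|\Sigma|^k))$ comes from writing down $O(nm/k)$ vectors in $A$, $O(m|\Sigma|^k)$ vectors in $B$, each of length $D=O(d+km|\Sigma|)$, with a polynomial overhead.
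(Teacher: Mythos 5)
Your construction is essentially the paper's (a data block, per-$(j,i)$ one-hot blocks of size $|\Sigma|$, and selection coordinates for the $B_j$'s), and your completeness step and the derivation $|X\cap B|\ge m$, plus the case split on $I=\{j:|X\cap B_j|=1\}$, all match the intended argument. But the soundness argument has a genuine gap at its heart, in Step 4. The only place the NO hypothesis can enter in the case $|I|\ge\varepsilon m$ is to produce a \emph{single color} $i^*$ such that $X\cap A_{i^*}$ contains at least two distinct underlying vectors with nonzero net coefficient: in the paper this follows from the data block together with the constraint that the coefficients in each color sum to $1$ (enforced there by an extra $k$-coordinate color-indicator block), since otherwise the one chosen vector per color would give $\sum_i \vec v_i=\vec t$. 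You never establish this, and without it your per-$i$ dichotomy ``either $|X\cap A|\ge k+1$ or the symbols collide at $i$'' is simply false --- the completeness solution itself has every $|X\cap B_j|=1$, $|X\cap A|=k$, and no forced collision --- so the concluding jump ``$S$ collides on $\ge\varepsilon m$ coordinates, hence $|X\cap A|\ge\mathrm{Col}_\varepsilon(C)\ge ck$'' is unsupported. Moreover the collisions must be extracted \emph{within a fixed color} (each $(j,i)$ block only compares codewords of color $i$; across colors the same codeword may even be reused), so the correct argument fixes $i^*$ once and shows, for every $j\in I$, that a sum of two or more one-hot vectors at distinct positions with coefficients in $\F_p^+$ cannot equal a single scaled one-hot vector, forcing a collision of the color-$i^*$ codewords at $j$; the ``$\ge k+1$'' branch is neither proved nor sufficient (you need $ck$). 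This flagged step is not a detail: it is the lemma.

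Two further points. First, your Step 3 justification of $|X\cap A|\ge k$ misuses the hypothesis: the NO condition only excludes combinations taking exactly one vector from each $V_i$ with coefficients in $\F_p^+$; it does not exclude, say, three vectors of $V_1$ alone summing to $\vec t$, so ``data block $+$ hypothesis'' does not bound $|X\cap A|$. In your construction the bound instead follows because, for any $j$ with $X\cap B_j\neq\emptyset$, the $B$-side contribution to the $(j,i)$ block has coefficient sum $1\neq 0$ for every color $i$ and hence is a nonzero vector that only color-$i$ vectors of $A$ can cancel; the paper avoids this reasoning by adding the explicit color-indicator block, which also cleanly yields the per-color coefficient-sum-$=1$ fact needed above. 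Second, the $m/k$ duplication you build in belongs to the subsequent theorem (Theorem~\ref{theorem: gap creating (main)}), not to this lemma; if you keep it, copies of the same vector can carry coefficients summing to $0$ in $\F_p$, so every block argument (including the collision extraction) must be phrased in terms of net coefficients per distinct underlying vector --- exactly the bookkeeping you flag as unresolved. Dropping the duplication here and adding the missing $i^*$ step would bring your proof in line with the paper's.
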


\begin{proof}
The resulting dimension is $D=d+mk|\Sigma|+k+m$. We break the resulting dimension into $4$ blocks respectively of size $d,mk|\Sigma|,k$ and $m$.
To be precise, for any vector $\vec x\in\F_p^D$, let
\begin{itemize}
    \item $\vec x^{(1)}\in\F_p^d$ be the first block,
    \item $\vec x^{(2)}\in\F_p^{mk|\Sigma|}$ be second block,
    \item $\vec x^{(3)}\in\F_p^k$ be the third block,
    \item $\vec x^{(4)}\in\F_p^m$ be the fourth block.
\end{itemize}
We further break the second block into $m$ sub-blocks each of size $k|\Sigma|$, i.e., $\vec x^{(2)}=\vec x^{(2,1)}\circ\dots\circ\vec x^{(2,m)}$.

We let $\vec e_i$ be the indicator vector of which the $i$-th entry is $1$ and the other entries are $0$.
To be convenient, the dimension of $\vec e_i$ depends on the context.
Specially we let $\iota :\Sigma\to [|\Sigma|]$ be an arbitrary bijection, and for every $\sigma\in\Sigma$ we let
\[
    \vec{e}_\sigma=(\underbrace{\overbrace{0,\dots,0}^{\iota(\sigma)-1},1,0\dots,0}_{|\Sigma|}).
\]

\noindent\textbf{Construction of $A$.}
For every $V_i$, associate each $\vec{v}\in V_i$ a distinct codeword of $C$, denoted by $C(\vec{v})$.
For every $i\in[k]$ and $\vec{v}\in V_i$, introduce a vector $\vec{a}_{i,\vec{v}}$ as
\begin{itemize}
    \item $\vec{a}_{i,\vec{v}}^{(1)}=\vec v$,
    \item $\vec{a}_{i,\vec{v}}^{(2,j)}=(\underbrace{\overbrace{\vec 0,\dots,\vec 0}^{(i-1)},\vec{e}_{C(\vec v)[j]},\vec 0,\dots,\vec 0}_{k})$, for every $j\in[m]$,
    \item $\vec{a}_{i,\vec{v}}^{(3)}=\vec e_i$,
    \item $\vec{a}_{i,\vec{v}}^{(4)}=\vec{0}_{m}$.
\end{itemize}
And we let 
\[
    A_i=\{\vec{a}_{i,\vec{v}}\mid \vec{v}\in V_i\} \text{ and } A = A_1\cup\dots\cup A_k.
\]

\noindent\textbf{Construction of $B$.}
For every $j\in[m]$ and $\vec{\sigma}=(\sigma_1,\dots,\sigma_k)\in\Sigma^k$, introduce a vector $\vec{b}_{j,\vec{\sigma}}$ as
\begin{itemize}
    \item $\vec{b}_{j,\vec{\sigma}}^{(1)}=\vec 0_d$,
    \item $\vec{b}_{j,\vec{\sigma}}^{(2,j)}=(-\vec e_{\sigma_1},\dots-\vec e_{\sigma_k})$,
    \item $\vec{b}_{j,\vec{\sigma}}^{(2,j')}=\vec 0_k$ for every $j'\in[m]\backslash\{j\}$,
    \item $\vec{b}_{j,\vec{\sigma}}^{(3)}=\vec 0_k$,
    \item $\vec{b}_{j,\vec{\sigma}}^{(4)}=\vec e_j$.
\end{itemize}
We let 
\[
    B_j=\{\vec{b}_{j,\vec{\sigma}}\mid \vec{\sigma}\in\Sigma^k\} \text{ and } B = B_1\cup\dots\cup B_m.
\]
Finally we set the target vector $\vec t'$ as
\begin{itemize}
    \item $\vec t'^{(1)}=\vec t$,
    \item $\vec t'^{(2)}=\vec 0_{mk|\Sigma|}$,
    \item $\vec t'^{(3)}=\vec 1_k$,
    \item $\vec t'^{(4)}=\vec 1_m$.
\end{itemize}

\begin{figure}[t]
    \centering
    \includegraphics[scale=0.4]{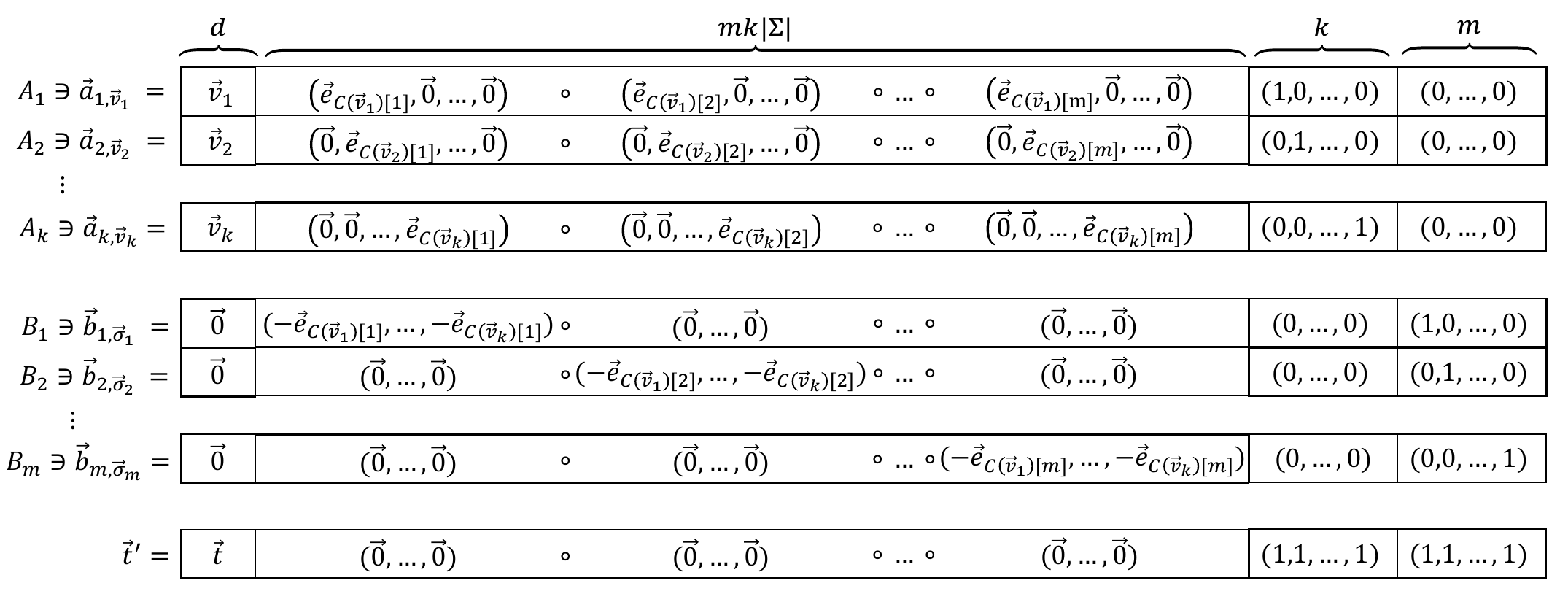}
    \caption{Illustration for the vectors of Lemma \ref{lemma: gap creating (warm up)} in the completeness setting. We can choose each $\vec{b}_{j,\vec{\sigma}_j}$ as $\vec{\sigma}_j=(C(\vec{v}_1)[j], \cdots, C(\vec{v}_k)[j])$.}
    \label{figure for gap creating}
\end{figure}

\noindent\textbf{Time complexity.}
Producing each vector in $A$ requires $O(d+mk|\Sigma|+km)=O(d+mk|\Sigma|)$ time, so the total time cost producing $A$ is $O(dkn+mk^2n|\Sigma|)$. Producing each vector in $B$ also requires $O(d+mk|\Sigma|)$ time, and the total time cost producing $B$ is $O(dm|\Sigma|^k+m^2k|\Sigma|^{k+1})$. So the total time cost of this reduction is $O(dm^2k^2|\Sigma|(n+|\Sigma|^k))$.

\noindent\textbf{Proof of (i).}
Suppose there exist $\vec{v}_1\in V_1,\cdots,\vec{v}_k\in V_k$ satisfying $\sum_{i\in[k]}\vec{v}_i=\vec{t}$.
For every $i\in[k]$ we choose a vector $\vec{a}_{i,\vec{v}_i}\in A_i$.
And for every $j\in[m]$ we choose a vector $\vec{b}_{j,\vec{\sigma}_j}\in B_j$, where $\vec\sigma_j=(C(\vec v_1)[j],\dots,C(\vec v_m)[j])\in\Sigma^k$.
We now examine that $\sum_{i\in[k]}\vec a_{i,\vec v_i}+\sum_{j\in[m]}\vec b_{j,\vec\sigma_j}=\vec t'$ as:
\begin{itemize}
    \item For the first block, 
    \[
        \sum_{i\in[k]}\vec a_{i,\vec v_i}^{(1)}+\sum_{j\in[m]}\vec b_{j,\vec\sigma_j}^{(1)}=\sum_{i\in[k]}\vec v_i+\sum_{j\in[m]}\vec 0_{d}=\vec t=\vec t'^{(1)}.
    \]
    \item For every $j\in[m]$ the $(2,j)$-th block,
    \begin{align*}
        \sum_{i\in[k]}\vec a_{i,\vec v_i}^{(2,j)}+\sum_{j'\in[m]}\vec b_{j',\vec\sigma_{j'}}^{(2,j)}
        &=\sum_{i\in[k]}\vec a_{i,\vec v_i}^{(2,j)}+\vec b_{j,\vec\sigma_{j}}^{(2,j)}\\
        &=\sum_{i\in[k]}(\overbrace{\vec 0,\dots,\vec 0}^{i-1},\vec e_{C(\vec v_i)[j]},\vec 0,\dots,\vec 0)+(-\vec e_{C(\vec v_1)[j]},\dots,-\vec e_{C(\vec v_k)[j]})\\
        &=(\vec e_{C(\vec v_1)[j]},\dots,\vec e_{C(\vec v_k)[j]})+(-\vec e_{C(\vec v_1)[j]},\dots,-\vec e_{C(\vec v_k)[j]})\\
        &=\vec 0_{k|\Sigma|}=\vec t'^{(2,j)}.
    \end{align*}
        
    \item For the third block,
    \[
        \sum_{i\in[k]}\vec a_{i,\vec v_i}^{(3)}+\sum_{j\in[m]}\vec b_{j,\vec\sigma_j}^{(3)}=\sum_{i\in[k]}\vec e_i+\sum_{j\in[m]}\vec 0_{k}=\vec 1_k=\vec t'^{(3)}.
    \]
    \item For the fourth block,
    \[
        \sum_{i\in[k]}\vec a_{i,\vec v_i}^{(4)}+\sum_{j\in[m]}\vec b_{j,\vec\sigma_j}^{(4)}=\sum_{i\in[k]}\vec 0_m+\sum_{j\in[m]}\vec e_{j}=\vec 1_m=\vec t'^{(4)}.
    \]
\end{itemize}

\noindent\textbf{Proof of (ii).}
Suppose $X\subseteq A\dot\cup B$ and $\lambda:X\to\F_p^+$ such that $\sum_{\vec{x}\in X}\lambda(\vec x)\vec{x}=\vec{t}'$. Observe the third block of the equation:
\[
    \sum_{\vec x\in X}\lambda(\vec x)\vec x^{(3)}=\sum_{i\in[k]}\sum_{\vec x\in X\cap A_i}\lambda(\vec x)\vec e_i=\vec 1_{m}=\vec t'^{(3)}.
\]
For every $i\in[k]$, $X\cap A_i$ must not be empty since $\sum_{\vec x\in X\cap A_i}\lambda(\vec x)=1$. Also similarly by observing the fourth block it holds that $X\cap B_j$ must not be empty for every $j\in[m]$.
Therefore $|X\cap A|\geq k$ and $|X\cap B|\geq m$.

Further suppose that any $\vec v_1\in V_1,\dots,\vec v_k\in V_k$ and $\alpha_1,\dots\alpha_k\in\F_p^+$ must satisfy $\alpha_1\vec v_1+\dots+\alpha_k\vec v_k\neq\vec t$, we show that either $|X\cap A|\geq ck$ or $|X\cap B|\geq 2(1-\varepsilon)m$.

We let $I\subseteq[m]$ be the set of indices $j$ that $X\cap B_j$ contains only one vector, i.e.,
\[
    I=\{j\in[m] : |X\cap B_j|=1\}.
\]
Since $|X\cap B_j|\geq 1$ for every $j\in[m]$, if $|I|\leq\varepsilon m$ then 
\[
    |X\cap B|\geq\sum_{j\in[m]\backslash I}|X\cap B_j|\geq2(1-\varepsilon)m
\]
as desired. It remains to show that if $|I|>\varepsilon m$ then $|X\cap A|\geq ck$.

First we claim that there must be an $i\in[k]$ such that $X\cap A_i$ contains more than one vector. Otherwise suppose that $|X\cap A_i|=1$ for every $i\in[k]$, let $\vec a_{i,\vec v_i}\in X\cap A_i$ be the unique vector in $X\cap A_i$. Recall that in the first block, vectors in $X\cap B$ are all zero, so the sum of vectors in $X$ in the first block is
\[
    \sum_{\vec x\in X}\lambda(\vec x)\vec x^{(1)}=\sum\lambda(\vec a_{i,\vec v_i})\vec a_{i,\vec v_i}'^{(1)}=\sum_{i\in[k]}\lambda(\vec a_{i,\vec v_i})\vec v_i=\vec t=\vec t'^{(1)}
\]
This contradicts to our assumption that for all $\vec{v}_1\in V_1,\ldots,\vec{v}_k\in V_k$ and $\alpha_1,\dots,\alpha_k\in\F_p^+$, $\sum_{i\in[k]}\alpha_i\vec{v}_i\neq\vec{t}$. Therefore, there must be such an index $i^*\in[k]$ that $|A_{i^*}'|>1$.

Let $l>1$ be the size of $X\cap A_{i^*}$, we next show that $l\geq ck$. Suppose that $X\cap A_{i^*}=\{\vec a_{i^*,\vec v_1},\dots,\vec a_{i^*,\vec v_l}\}$ where $\vec v_1,\dots\vec v_l\in V_{i^*}$. We show in the following that the codeword set $\{C(\vec v_1),\dots,C(\vec v_l)\}$ must collide on every $j\in I$. Fix any $j\in I$, let $\vec b_{j,\vec \sigma}$ be the unique vector in $X\cap B_j$, where $\vec\sigma=(\sigma_1,\dots,\sigma_k)$. Recall that the $(2,j)$-th block of the resulting dimension consists of $k|\Sigma|$ coordinates, here we further break it down into $k$ sub-blocks each of size $|\Sigma|$, and we focus on the $(2,j,i^{*})$-th sub-block: 
\begin{align*}
    \sum_{\vec x\in X}\lambda(\vec x)\vec x^{(2,j,i^*)}
    &=\lambda(\vec a_{i^*,\vec v_1})\vec a_{i^*,\vec v_1}^{(2,j,i^*)}+\dots+\lambda(\vec a_{i^*,\vec v_l})\vec a_{i^*,\vec v_l}^{(2,j,i^*)}+\lambda(\vec b_{j,\vec\sigma})\vec b_{j,\vec\sigma}^{(2,j,i^*)}\\
    &=\lambda(\vec a_{i^*,\vec v_1})\vec e_{C(\vec v_1)[j]}+\dots+\lambda(\vec a_{i^*,\vec v_l})\vec e_{C(\vec v_l)[j]}-\lambda(\vec b_{j,\vec\sigma})\vec e_{\sigma_{i^*}}\\
    &=\vec 0_{|\Sigma|}=\vec t'^{(2,j,i^*)}.
\end{align*}
If $C(\vec v_1)[j],\dots,C(\vec v_l)[j]$ are all distinct, the equation $\lambda(\vec a_{i^*,\vec v_1})\vec e_{C(\vec v_1)[j]}+\dots+\lambda(\vec a_{i^*,\vec v_l})\vec e_{C(\vec v_l)[j]}-\lambda(\vec b_{j,\vec\sigma})\vec e_{\sigma_{i^*}}=\vec 0_{|\Sigma|}$ must not be satisfied since $l>1$ and the $\lambda$'s are nonzero. Therefore $\{C(\vec v_1),\dots,C(\vec v_l)\}$ must collide on the $j$-th coordinate.

If $|I|>\varepsilon m$ then $\{C(\vec v_1),\dots,C(\vec v_l)\}$ collide on more than $\varepsilon m$ coordinates, by the definition of collision number, it holds that $|\{C(\vec v_1),\dots,C(\vec v_l)\}|\geq \mathrm{Col}_\varepsilon(C)\geq ck$. And thus $|X\cap A|\geq|X\cap A_{i^*}|\geq ck$.
\end{proof}

Since the codes (with good collision number) we construct has codeword length $m=O(k^2\log k)$ much greater that $k$, the above construction cannot directly leads to a gap-creating reduction for \textsc{$k$-MLD}.
To settle this, intuitively we further duplicate the vector sets $A_1,\dots,A_k$ several times into $m$ vector sets. This leads to our gap creating reduction as follows.

\begin{theorem}\label{theorem: gap creating (main)}
For any $0<\varepsilon<1$, there is a randomized reduction which on input $k$ vector sets $V_1,\cdots,V_k\subseteq\mathbb{F}_p^d$ each of size $n$ and a target vector $\vec{t}\in\mathbb{F}_p^d$ outputs $k'$ vector sets $U_1,\dots,U_{k'}\subseteq\F_p^D$ and a target vector $\vec t'\in\F_p^D$ with $k'=O(k^2\log k)$ and $D=O(k'd+k'^2n^{1/k})$ in 
$O(d2^{O(k)}n^{1.01})$ time such that
\begin{itemize}
    \item[(i)]  If there exist $\vec{v}_1\in V_1,\ldots,\vec{v}_k\in V_k$ such that $\sum_{i\in[k]}\vec{v}_i=\vec{t}$, then there exists $\vec u_1\in U_1,\dots,\vec u_{k'}\in U_{k'}$ with their sum being $\vec{t}'$.
    \item[(ii)] If any $\vec v_1\in V_1,\dots,\vec v_k\in V_k$ and $\alpha_1,\dots\alpha_k\in\F_p^+$ must satisfy $\alpha_1\vec v_1+\dots+\alpha_k\vec v_k\neq\vec t$, then any $X\subseteq \bigcup_{i\in[k']}U_i$ and $\lambda:X\to\F_p^+$ such that $\sum_{\vec{x}\in X}\lambda(\vec x)\vec{x}=\vec{t}'$ must satisfy $|X|\geq(\frac{3}{2}-\varepsilon)k'$.
\end{itemize}
\end{theorem}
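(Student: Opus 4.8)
The plan is to chain three ingredients: the random code of Lemma~\ref{lemma: collision number from random code (grouped)}, the unbalanced-gap construction of Lemma~\ref{lemma: gap creating (warm up)}, and a \emph{balancing} step that inflates the $A$-side of the unbalanced instance so that both sides carry $\Theta(m)$ colors while a constant gap survives. First I would fix a constant $c\ge 2$ and run Lemma~\ref{lemma: collision number from random code (grouped)} on $(n,k)$ with collision parameter $\varepsilon$ to get a code $C\subseteq\Sigma^m$ with $|C|=n$, $|\Sigma|=O(n^{1/k})$, $m=O(k^2\log k)$ and, with high probability, $\mathrm{Col}_\varepsilon(C)>ck$; by perturbing $m$ by at most $O(k)$ I may assume $k\mid m$ and, setting $q:=m/k$, that $p\nmid q$. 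Then I would feed $(V_1,\dots,V_k,\vec t,C)$ to Lemma~\ref{lemma: gap creating (warm up)} to obtain $A=A_1\dot\cup\cdots\dot\cup A_k$, $B=B_1\dot\cup\cdots\dot\cup B_m$ and a target with the stated unbalanced guarantee. Note $A$ carries only $k$ colors while $B$ carries $m\gg k$; the balancing step is exactly what brings these into line.

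\medskip\noindent\textbf{The balancing step.} I would set $k'=qk+m=2m=O(k^2\log k)$, keep $U_{m+1},\dots,U_{2m}:=B_1,\dots,B_m$, and replace $A_1,\dots,A_k$ by the $qk=m$ sets $U_{(i-1)q+c}:=\{\vec a_{i,\vec v}^{(c)}:\vec v\in V_i\}$ for $i\in[k]$, $c\in[q]$, where $\vec a_{i,\vec v}^{(c)}$ is obtained from the warm-up vector $\vec a_{i,\vec v}$ by appending two coordinate blocks. A \emph{color-selector} block, with one coordinate per inflated color (a $1$ in coordinate $(i,c)$, target $\vec 1$), forces every inflated $A$-color to be met. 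A \emph{synchronization} block links, for each $i$, the consecutive copies $c$ and $c+1$ by a slot carrying $\pm h(\vec v)$, where $h$ is a fresh uniformly random map $V_i\to\mathbb{F}_p^{O(k'\log_p n)}$; a union bound over the $\le n^{O(k')}$ candidate differences shows that, with high probability, equality of the $\mathbb{F}_p$-weighted $h$-sums across every such link forces the $\mathbb{F}_p$-weighted multiset of $V_i$-vectors chosen from copy $c$ to equal that chosen from copy $c+1$, hence (chaining) to be the same for all $c$. Finally I would rescale the codeword coordinates of each $\vec b_{j,\vec\sigma}$ by $q$ (legitimate as $p\nmid q$), set the first block of the target to $q\vec t$, and drop the now-redundant block $\vec a^{(3)}$. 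The color-selector contributes $m$ coordinates, the codeword block $mk|\Sigma|$, and the synchronization block $O(qk\cdot k'\log_p n)=O(k'^2\log_p n)$, so $D=O(k'd+k'^2n^{1/k})$ (the $\log_p n$ factor being absorbed for $n$ large in terms of $k$); the running time is that of Lemma~\ref{lemma: gap creating (warm up)} plus lower-order terms, i.e.\ $O(d\cdot 2^{O(k)}n^{1.01})$.

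\medskip\noindent\textbf{Completeness and soundness.} For (i), a warm-up witness $\vec v_1\in V_1,\dots,\vec v_k\in V_k$ with $\sum_i\vec v_i=\vec t$ lifts: take $\vec a_{i,\vec v_i}^{(c)}$ for all $i,c$ and $\vec b_{j,\vec\sigma_j}\in B_j$ with $\vec\sigma_j=(C(\vec v_1)[j],\dots,C(\vec v_k)[j])$; the color-selector hits $\vec 1$, each synchronization link cancels because all $q$ copies of color $i$ use the same $\vec v_i$, and the $q$-fold rescaling of $B$ matches the $q$-fold repetition of $A$ in the codeword block and in the first block, so these $k'$ vectors (one per $U_\ell$) sum to $\vec t'$. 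For (ii), let $X\subseteq\bigcup_\ell U_\ell$ and $\lambda:X\to\mathbb{F}_p^+$ satisfy $\sum_{\vec x\in X}\lambda(\vec x)\vec x=\vec t'$. The color-selector and the $B$-identity block force $X$ to meet every inflated $A$-color and every $B_j$, so $|X\cap(\text{infl. }A)|\ge m$ and $|X\cap B|\ge m$. Put $I:=\{j:|X\cap B_j|=1\}$. If $|I|\le\varepsilon m$ then $|X\cap B|\ge 2(1-\varepsilon)m$, giving $|X|\ge m+2(1-\varepsilon)m=(3/2-\varepsilon)k'$. If $|I|>\varepsilon m$, the synchronization block forces, for each $i$, the chosen weighted set $S_c^{(i)}\subseteq V_i$ (with its coefficients) to be one and the same set $T_i$ for all $c$; inspecting the first block then shows some $T_{i^*}$ has $|T_{i^*}|\ge 2$, for otherwise $\sum_i\alpha_i\vec v_i=\vec t$ with each $\alpha_i\in\mathbb{F}_p^+$, contradicting the hypothesis; and inspecting, for each $j\in I$, the codeword sub-block of color $i^*$ (which, since the copies are synchronized and $p\nmid q$, reduces to exactly the configuration analysed in the proof of Lemma~\ref{lemma: gap creating (warm up)}) shows that $\{C(\vec v):\vec v\in T_{i^*}\}$ collides at $j$, whence $\mathrm{Col}_\varepsilon(C)>ck$ gives $|T_{i^*}|>ck$. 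Therefore $|X\cap(\text{infl. }A)|\ge q|T_{i^*}|+(m-q)>cm+m-q$, so $|X|>cm+2m-q\ge(c+1)m\ge 3m\ge(3/2-\varepsilon)k'$ since $c\ge 2$.

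\medskip\noindent\textbf{Where the work is.} The delicate point is the synchronization block: it must be cheap---only $O(k'^2n^{1/k})$ fresh coordinates, in particular \emph{not} a one-hot encoding over $V_i$---yet strong enough to pin down the \emph{same $\mathbb{F}_p$-weighted} multiset of $V_i$-vectors in all $q$ copies of a color, because without it the branch-$1$ guarantee of Lemma~\ref{lemma: gap creating (warm up)}, which supplies only $\Theta(k)\ll m$ extra $A$-vectors, is far too weak to create any gap once both sides carry $\Theta(m)$ colors. Making such a gadget work---and, throughout, tracking the nonzero $\mathbb{F}_p$-coefficients so that no cancellation among copies breaks either the lift in (i) or the collision extraction in (ii)---is essentially the whole content; once it is in place the gap arithmetic above is routine, and arranging $k\mid m$ and $p\nmid q$, together with the high-probability and running-time bookkeeping, is cosmetic.
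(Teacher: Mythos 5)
Your chain (random grouped code $\Rightarrow$ warm-up unbalanced instance $\Rightarrow$ balancing) is sound, but your balancing step is genuinely different from the paper's. The paper simply replicates the \emph{entire} warm-up output $w=m/k$ times: the $A$-vectors are placed block-diagonally in $\F_p^{wD}$ (fresh coordinates per copy), each $B$-vector is repeated identically in all $w$ blocks, and the target is $(\vec t',\dots,\vec t')$. Soundness then follows by applying Lemma~\ref{lemma: gap creating (warm up)}(ii) \emph{independently in each block}: since the restriction of the chosen vectors to block $l$ is exactly a warm-up solution, either $|X\cap B'|\ge 2(1-\varepsilon)m$, or every block forces $\ge 2k$ $A$-vectors, so no cross-copy agreement is ever needed. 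You instead keep a single shared codeword block, inflate only the $A$-colors, rescale $B$ by $q$, and add a random-hash synchronization gadget to force all $q$ copies of a color to select the same $\F_p$-weighted set; you correctly identify that without this gadget your design collapses (the collision argument would only yield $\approx m+ck$ vectors). Your route works, but it buys nothing over the paper's and costs extra machinery: the union bound over differences implicitly requires first assuming $|X|<(\frac32-\varepsilon)k'$ (harmless, but should be said, since the hash only separates small-support differences); you need the $k\mid m$, $p\nmid q$ adjustments, which the paper's construction avoids entirely because it never rescales; you consume additional randomness beyond the code; and the $O(k'^2\log_p n)$ synchronization coordinates exceed the stated $D=O(k'd+k'^2n^{1/k})$ unless one absorbs the log for $n$ large relative to $k$, whereas the paper's dimension $wD=O(k'd+k'^2n^{1/k})$ holds outright. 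Conversely, your gadget is a reusable trick when one cannot afford to duplicate the codeword block; here, though, duplication is affordable and gives the cleaner proof. With the small fixes noted (explicit WLOG on $|X|$, the divisibility tweak, and the dimension accounting), your argument is correct.
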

\begin{proof}
Suppose we are given input $V_1,\cdots,V_k\subseteq\mathbb{F}_p^d$ each of size $n$ and a target vector $\vec{t}\in\mathbb{F}_p^d$. We apply Lemma~\ref{lemma: collision number from random code (grouped)} with $c=2$. Then we obtain a code $C\subseteq \Sigma^{m}$ with $|C|=n,|\Sigma|=O(n^{1/k})$ and $m=O(k^2\log k)$ such that with high probability $\mathrm{Col}_\varepsilon(C)>2k$. Further combining with the construction of Lemma~\ref{lemma: gap creating (warm up)} we have the resulting vector sets $A_1,\dots,A_k,B_1,\dots,B_m\subseteq\F_p^D$ and $\vec t'\in\F_p^D$.
Let $w=m/k$ and we stretch the dimension $w$ times, i.e., our resulting dimension is $D'=wD$.
Our output are $k'=2m$ vector sets $(\bigcup_{l\in[w],i\in[k]}A_{l,i}') \cup (\bigcup_{j\in[m]}B_j')\subseteq \F_p^{D'}$ and $\vec t''\in\F_p^{D'}$ where
\[
    A_{l,i}'=\{(\underbrace{\overbrace{\vec 0_D,\dots,\vec 0_D}^{l-1},\vec a,\vec 0_D,\dots,\vec 0_D}_{w})\in\F_p^{D'} \mid \vec a\in A_i\} \text{ for every $l\in[w]$, $i\in[k]$},\\
\]
\[
    B_{j}'=\{(\underbrace{\vec b,\dots,\vec b}_{w})\in \F_p^{D'}\mid \vec b\in B_j\}\text{ for every $j\in[m]$},\\
\]
and
\[
    \vec t'' = (\underbrace{\vec t',\dots,\vec t'}_{w})\in \F_p^{D'}.
\]
Also for convenience we further define $A_{l}'=A_{l,1}'\cup\dots\cup A_{l,k}'$ for every $l\in[w]$ and define $A'=A_1'\cup\dots\cup A_w', B'=B_1'\cup\dots\cup B_m'$.

For time complexity, procedure in Lemma~\ref{lemma: gap creating (warm up)} requires $O(dm^2k^2|\Sigma|(n+|\Sigma|^k))$-time, and copying $w=m/k$ times requires $w$ times of time above. Hence the total time cost is $O(dm^3k|\Sigma|(n+|\Sigma|^k))$, or considering the parameters we choose, $O(dc^kk^7(\log k)^3n^{1+1/k})=O(d2^{O(k)}n^{1.01})$.

Suppose that there are $\vec a_1\in A_1,\dots,\vec a_k\in A_k$ and $\vec b_1\in B_1,\dots,\vec b_m\in B_m$ that sum to $\vec t'$.
Then for every $l\in[w],i\in[i]$ select $(\vec 0_{(l-1)D},\vec a_i,\vec 0_{(w-l)D})$ from $A_{l,i}'$ and for every $j\in[m]$ select $(\vec b_j,\dots,\vec b_j)$ from $B_{j}'$. One can see that these vectors have their sum being $(\vec t',\dots,\vec t')=\vec t''$ as desired.

Assume $X\subseteq (\bigcup_{l\in[w],i\in[k]}A_{l,i}') \cup (\bigcup_{j\in[m]}B_j')$ and $\lambda:X\to\F_p^+$ such that $\sum_{\vec{x}\in X}\lambda(\vec x)\vec{x}=\vec{t}''$. Fix any $l\in[w]$ and focus on the $l$-th block of the resulting dimension. On these coordinates, vector sets $A_{l,1}',\dots,A_{l,k}',B_1',\dots,B_m'$ and $\vec t''$ plays exactly the same role as $A_1,\dots,A_k,B_1,\dots,B_m$ and $\vec t'$, and all remaining vectors have zero entries. As also in Lemma~\ref{lemma: gap creating (warm up)}, it holds that $|X'\cap A_l'|\geq k$ and $|X'\cap B'|\geq m$.

Now we further suppose that any $\vec v_1\in V_1,\dots,\vec v_k\in V_k$ and $\alpha_1,\dots\alpha_k\in\F_p^+$ must satisfy $\alpha_1\vec v_1+\dots+\alpha_k\vec v_k\neq\vec t$.
First consider the case that $|X'\cap B'|<2(1-\varepsilon)m$. Then due to the property of Lemma~\ref{lemma: gap creating (warm up)}, for every $l\in[w]$, $|X'\cap A_l'|\geq 2k$. Therefore in this case, $|X'|=|X'\cap A'|+|X'\cap B'|\geq w\cdot(2k)+m=3m=\frac{3}{2}k'$.
For the other case that $|X'\cap B'|\geq 2(1-\varepsilon)m$, we have $|X'|=|X'\cap A'|+|X'\cap B'|\geq w\cdot k+2(1-\varepsilon)m=(\frac{3}{2}-\varepsilon)k'$. Therefore in both cases, $|X'|\geq (\frac{3}{2}-\varepsilon)k'$ as desired.
\end{proof}


\begin{remark}
    Consider the \textsc{$k$-VectorSum$_q$} problem in \cite{LinRSW22}, whose definition is identical to \textsc{$k$-MLD$_q$} except that it requires all the coefficients being $1$. A closer look at our reduction shows that it can directly create a gap of almost $(q+1)/2$ for \textsc{$k$-VectorSum$_q$} rather than almost $\frac{3}{2}$ in the \textsc{$k$-MLD$_q$} case. The reason is that when coefficients are fixed to $1$, for any solution $X$ and each $j\in[m]$ with $|X\cap B_j|>1$, it must satisfies $|X\cap B_j|=cq+1$ for some positive integer $c$ so that the final block of vectors in $X\cap B_j$ can have sum $\vec{e}_j$. Thus if some solution having less than $\varepsilon$ fraction of $j\in[m]$ with $|X\cap B_j|=1$, It must satisfies $|X\cap B|\geq q(1-\varepsilon)m$ instead of $2(1-\varepsilon)m$ in \textsc{$k$-MLD$_q$} case, and the final approximation ratio can be improved to $(\frac{q+1}{2}-\varepsilon)$, significantly larger than $(\frac{3}{2}-\varepsilon)$ when $q$ is superconstant. 
\end{remark}
\section{Lower Bounds for Gap-$k$-NCP and Other Problems}\label{section: from NCP}
In this section, we show the reduction described in the previous sections implies improved running time lower bounds for various problems under the Exponential Time Hypothesis (ETH). 
\subsection{Maximum Likelihood Decoding and Nearest Codeword Problem}

Bhattacharyya, Ghoshal, Karthik and Manurangsi~\cite{BGKM18} presented a gap amplification procedure for \textsc{Gap-$k$-MLD$_p$}. Although they only discussed the procedure on the binary field, it's straightforward to see the procedure also works for \textsc{Gap-$k$-MLD$_p$} instances over all $\mathbb{F}_p$. Formally,

\begin{theorem}[Generalization of Lemma 4.5 in \cite{BGKM18}]
\label{theorem: gap amplification for mld in BGKM18}
    For integers $k_1,k_2>0$, $k'=k_2+k_1k_2$ and reals $\gamma_1,\gamma_2>1$, $\gamma'\geq \gamma_1\gamma_2(1-\frac{1}{k_1})$, there is a polynomial time algorithm that on input $2$ vector sets $U\subseteq \mathbb{F}^{m_1}_p, V\subseteq \mathbb{F}^{m_2}_p$, $|U|=n_1,|V|=n_2$, two target vectors $\Vec{t}\in\mathbb{F}^{m_1}_p,\vec{s}\in\mathbb{F}^{m_2}_p$, outputs a vector set $W\subseteq \mathbb{F}^{m_2+n_1 m_1}_p$ and a target vector $\Vec{t}'\in \mathbb{F}^{m_2+n_1 m_1}_p$ satisfies:
    \begin{itemize}
        \item If $(U,\vec{t})$ is a YES instance of \textsc{$\gamma_1$-Gap-$k_1$-MLD$_p$} instance and $(V,\vec{s})$ is a YES instance of \textsc{$\gamma_2$-Gap-$k_2$-MLD$_p$} instance, then $(W,\vec{t}')$ is a YES instance of \textsc{$\gamma'$-Gap-$k'$-MLD$_p$}.
        \item If $(U,\vec{t})$ is a NO instance of \textsc{$\gamma_1$-Gap-$k_1$-MLD$_p$} instance and $(V,\vec{s})$ is a NO instance of \textsc{$\gamma_2$-Gap-$k_2$-MLD$_p$} instance, then $(W,\vec{t}')$ is a NO instance of \textsc{$\gamma'$-Gap-$k'$-MLD$_p$}.
    \end{itemize}
\end{theorem}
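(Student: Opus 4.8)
The plan is to reuse the gap-amplification construction behind Lemma~4.5 of \cite{BGKM18} essentially verbatim and to re-run its two-part analysis over an arbitrary prime field $\F_p$ in place of $\F_2$, checking that every binary-specific manipulation has an $\F_p$-analogue. Concretely, I would build $W$ on a coordinate space split into a ``$V$-block'' of dimension $m_2$ together with $n_1$ auxiliary blocks of dimension $m_1$ (one per vector of $U$), and define two kinds of vectors: \emph{outer} vectors that carry some $\vec v\in V$ on the $V$-block while planting a $-\vec t$ ``demand'' on one auxiliary block, and \emph{inner} vectors that populate each auxiliary block with a copy of the $U$-instance; the target $\vec t'$ is $\vec s$ on the $V$-block and $\vec 0$ on every auxiliary block. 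By design this runs in polynomial time, outputs $\mathrm{poly}(n_1,n_2)$ vectors in dimension $m_2+n_1m_1$, and has parameter $k'=k_2+k_1k_2$.

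Completeness (the YES$\times$YES case) should be a direct substitution: from a $V$-witness $v_1,\dots,v_{k_2}$ with $\sum_i v_i=\vec s$ take the $k_2$ outer vectors carrying $v_1,\dots,v_{k_2}$, and for each of them add the $k_1$ inner vectors corresponding to a single fixed $U$-witness $u_{a_1},\dots,u_{a_{k_1}}$ with $\sum_j u_{a_j}=\vec t$; this is $k_2+k_1k_2=k'$ vectors whose $\F_p$-sum, all coefficients equal to $1$, is $\vec t'$ block by block. Nothing here depends on the characteristic.

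The heart of the matter is soundness (the NO$\times$NO case). Given $X\subseteq W$ and $\lambda:X\to\F_p^+$ with $\sum_{\vec x\in X}\lambda(\vec x)\vec x=\vec t'$, I would first restrict the equation to the $V$-block, where the inner vectors vanish: collecting the outer vectors of $X$ by their $V$-label gives $\sum_{\vec v}\mu_{\vec v}\,\vec v=\vec s$, where $\mu_{\vec v}$ is the sum of the $\lambda$-values of the outer vectors of $X$ labelled $\vec v$. Since $(V,\vec s)$ is a NO instance of \textsc{$\gamma_2$-Gap-$k_2$-MLD$_p$}, the support $V^\ast=\{\vec v:\mu_{\vec v}\neq 0\}$ has size $>\gamma_2 k_2$. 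Next I would look at each auxiliary block: on a block carrying net demand $-\beta\vec t$ with $\beta\neq 0$, the inner vectors of $X$ in that block must sum to $\beta\vec t$, and since $\beta$ is invertible in $\F_p$ this is, after rescaling, an honest solution of $(U,\vec t)$; as $(U,\vec t)$ is a NO instance of \textsc{$\gamma_1$-Gap-$k_1$-MLD$_p$}, it uses more than $\gamma_1 k_1$ distinct inner vectors. Granting the combinatorial fact (the one dictating the exact shape of the \cite{BGKM18} gadget) that the $>\gamma_2 k_2$ labels of $V^\ast$ force more than $\gamma_2 k_2$ auxiliary blocks to carry nonzero demand and that these blocks draw from disjoint pools of inner vectors, $X$ already contains more than $(\gamma_2 k_2)(\gamma_1 k_1)=\gamma_1\gamma_2 k_1 k_2$ inner vectors, and since $\gamma_1\gamma_2 k_1 k_2\ge\gamma_1\gamma_2(1-\tfrac1{k_1})\,k_2(1+k_1)=\gamma_1\gamma_2(1-\tfrac1{k_1})\,k'$, we obtain the required NO property for the parameter $\gamma'$ in the statement.

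The step I expect to be the main obstacle is exactly this soundness count in the presence of arbitrary nonzero $\F_p$-coefficients and of ``net-cancelling'' vectors: one must rule out that cancellations on the $V$-block shrink $|V^\ast|$, that cancellations inside an auxiliary block evade the $>\gamma_1 k_1$ bound there, and -- the delicate point -- that the adversary cannot reuse the same inner vectors to service several labels of $V^\ast$ at once and thereby collapse the multiplicative $k_1 k_2$ term; it is to leave slack for this last issue that the clean statement only asks for $\gamma'\ge\gamma_1\gamma_2(1-1/k_1)$. All of these are the characteristic-$p$ counterparts of checks already implicit in \cite{BGKM18}, where the $\F_2$-coefficients are forced to be $1$; the genuinely new ingredient is invertibility of nonzero scalars, used to turn a $\beta$-scaled combination back into a bona fide $U$-solution. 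Once that is in hand, the theorem follows.
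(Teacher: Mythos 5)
Your proposal follows essentially the same route as the paper, which in fact gives no proof of its own: it invokes the composition of \cite{BGKM18} (Section 4.2) and only reproduces the gadget pictorially in Figure~\ref{figure for gap amplification in BGKM18}; your outer/inner block construction is exactly that gadget, and your completeness and soundness outline is the intended analysis over $\F_p$.

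The one substantive issue is the fact you propose to ``grant'': it is not something to grant, it is the content of the soundness step, and it follows immediately once the gadget is pinned down -- which your description leaves slightly wrong. There is exactly \emph{one} outer vector for each $\vec v\in V$, and it owns a \emph{private} auxiliary block containing a fresh copy of the whole $U$-instance; so the blocks are in bijection with the elements of $V$ (hence $n_2$ blocks and ambient dimension $m_2+n_2m_1$; your ``one block per vector of $U$'' -- like the $n_1$ in the paper's statement versus its own figure -- is an index slip), and every inner vector is supported in a single block. With this, your three worries dissolve by construction: each label $\vec v$ has a unique outer copy, so $\mu_{\vec v}$ is just its coefficient $\lambda\in\F_p^+$ and cannot vanish by cancellation among copies, giving more than $\gamma_2k_2$ \emph{distinct} blocks whose net demand is $-\lambda\vec t\neq\vec 0$ (no other outer vector touches that block); the inner vectors of such a block must sum to $\lambda\vec t$, and rescaling by $\lambda^{-1}$ (this is the only place invertibility in $\F_p$ is needed) turns them into a solution of $(U,\vec t)$, hence more than $\gamma_1k_1$ of them; and inner vectors serving distinct blocks are distinct vectors of $W$ because their supports are disjoint, so no reuse across labels is possible. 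Your arithmetic then gives $|X|>\gamma_1\gamma_2k_1k_2\geq\gamma_1\gamma_2(1-\tfrac1{k_1})k'$, i.e.\ the NO property for any $\gamma'\leq\gamma_1\gamma_2(1-\tfrac1{k_1})$ (the ``$\geq$'' in the theorem statement is best read this way). One last touch-up: in completeness the witnesses may carry non-unit coefficients $\beta_i,\alpha_j\in\F_p^+$, so one picks the outer vector of $\vec v_i$ with coefficient $\beta_i$ and the inner vectors of its block with coefficients $\beta_i\alpha_j$, rather than all coefficients equal to $1$.
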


Readers seeking for a formal proof is referred to \cite[Section 4.2]{BGKM18}, we only present a figure showing their construction in a intuitive way in Figure~\ref{figure for gap amplification in BGKM18}.

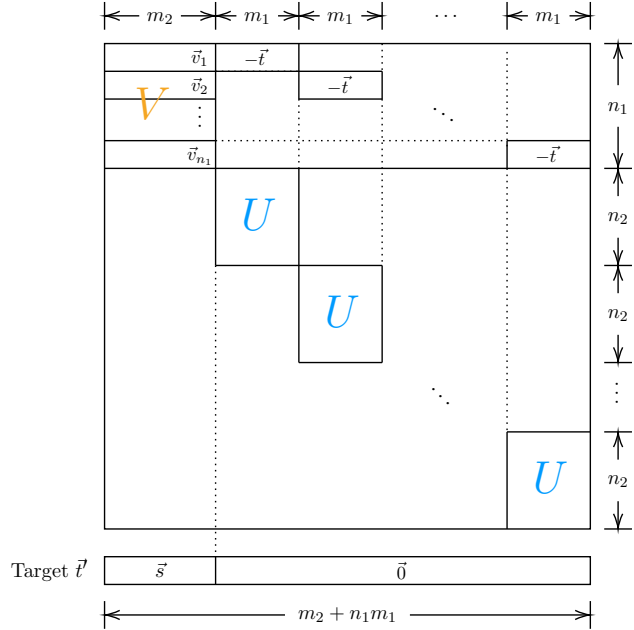
\begin{figure}[h]
\centering
\scalebox{0.7}{
\tikzset{every picture/.style={line width=0.75pt}} 

\begin{tikzpicture}[x=0.75pt,y=0.75pt,yscale=-1,xscale=1]

\draw (100,-10) -- (100,10) ;
\draw (180,-10) -- (180,10) ;
\draw (240,-10) -- (240,10) ;
\draw (300,-10) -- (300,10) ;
\draw (390,-10) -- (390,10) ;
\draw (450,-10) -- (450,10) ;

\draw    (100,0) -- (125,0) ;
\draw [shift={(100,0)}, rotate = 0] [color={rgb, 255:red, 0; green, 0; blue, 0 }  ][line width=0.75]    (10.93,-3.29) .. controls (6.95,-1.4) and (3.31,-0.3) .. (0,0) .. controls (3.31,0.3) and (6.95,1.4) .. (10.93,3.29)   ;
\draw    (155,0) -- (180,0) ;
\draw [shift={(180,0)}, rotate = 180] [color={rgb, 255:red, 0; green, 0; blue, 0 }  ][line width=0.75]    (10.93,-3.29) .. controls (6.95,-1.4) and (3.31,-0.3) .. (0,0) .. controls (3.31,0.3) and (6.95,1.4) .. (10.93,3.29)   ;
\draw (129,-4) node [anchor=north west][inner sep=0.75pt]   [align=left] {$\displaystyle m_{2}$};

\draw    (180,0) -- (195,0) ;
\draw [shift={(180,0)}, rotate = 0] [color={rgb, 255:red, 0; green, 0; blue, 0 }  ][line width=0.75]    (10.93,-3.29) .. controls (6.95,-1.4) and (3.31,-0.3) .. (0,0) .. controls (3.31,0.3) and (6.95,1.4) .. (10.93,3.29)   ;
\draw    (225,0) -- (240,0) ;
\draw [shift={(240,0)}, rotate = 180] [color={rgb, 255:red, 0; green, 0; blue, 0 }  ][line width=0.75]    (10.93,-3.29) .. controls (6.95,-1.4) and (3.31,-0.3) .. (0,0) .. controls (3.31,0.3) and (6.95,1.4) .. (10.93,3.29)   ;
\draw (199,-4) node [anchor=north west][inner sep=0.75pt]   [align=left] {$\displaystyle m_{1}$};

\draw    (240,0) -- (255,0) ;
\draw [shift={(240,0)}, rotate = 0] [color={rgb, 255:red, 0; green, 0; blue, 0 }  ][line width=0.75]    (10.93,-3.29) .. controls (6.95,-1.4) and (3.31,-0.3) .. (0,0) .. controls (3.31,0.3) and (6.95,1.4) .. (10.93,3.29)   ;
\draw    (285,0) -- (300,0) ;
\draw [shift={(300,0)}, rotate = 180] [color={rgb, 255:red, 0; green, 0; blue, 0 }  ][line width=0.75]    (10.93,-3.29) .. controls (6.95,-1.4) and (3.31,-0.3) .. (0,0) .. controls (3.31,0.3) and (6.95,1.4) .. (10.93,3.29)   ;
\draw (259,-4) node [anchor=north west][inner sep=0.75pt]   [align=left] {$\displaystyle m_{1}$};

\draw (336,-4) node [anchor=north west][inner sep=0.75pt]   [align=left] {$\displaystyle \cdots$};

\draw    (390,0) -- (405,0) ;
\draw [shift={(390,0)}, rotate = 0] [color={rgb, 255:red, 0; green, 0; blue, 0 }  ][line width=0.75]    (10.93,-3.29) .. controls (6.95,-1.4) and (3.31,-0.3) .. (0,0) .. controls (3.31,0.3) and (6.95,1.4) .. (10.93,3.29)   ;
\draw    (435,0) -- (450,0) ;
\draw [shift={(450,0)}, rotate = 180] [color={rgb, 255:red, 0; green, 0; blue, 0 }  ][line width=0.75]    (10.93,-3.29) .. controls (6.95,-1.4) and (3.31,-0.3) .. (0,0) .. controls (3.31,0.3) and (6.95,1.4) .. (10.93,3.29)   ;
\draw (409,-4) node [anchor=north west][inner sep=0.75pt]   [align=left] {$\displaystyle m_{1}$};

\draw   (100,20) -- (450,20) -- (450,370) -- (100,370) -- cycle ;
\draw    (100,110) -- (450,110) ;
\draw    (180,20) -- (180,110) ;
\draw  [dash pattern={on 0.84pt off 2.51pt}]  (180,180) -- (180,390) ;
\draw    (180,180) -- (300,180) ;
\draw    (180,110) -- (180,180) ;
\draw    (240,20) -- (240,60) ;
\draw    (100,40) -- (240,40) ;
\draw    (240,60) -- (300,60) ;
\draw    (240,40) -- (300,40) ;
\draw    (300,40) -- (300,60) ;
\draw  [dash pattern={on 0.84pt off 2.51pt}]  (240,60) -- (240,110) ;
\draw  [dash pattern={on 0.84pt off 2.51pt}]  (300,20) -- (300,40) ;
\draw    (240,250) -- (300,250) ;
\draw    (300,180) -- (300,250) ;
\draw    (240,110) -- (240,250) ;
\draw  [dash pattern={on 0.84pt off 2.51pt}]  (300,60) -- (300,180) ;
\draw    (100,60) -- (180,60) ;
\draw  [dash pattern={on 0.84pt off 2.51pt}]  (180,40) -- (240,40) ;
\draw    (100,90) -- (180,90) ;
\draw  [dash pattern={on 0.84pt off 2.51pt}]  (180,90) -- (390,90) ;
\draw    (390,90) -- (450,90) ;
\draw  [dash pattern={on 0.84pt off 2.51pt}]  (390,26) -- (390,96) ;
\draw    (390,90) -- (390,110) ;
\draw    (390,300) -- (450,300) ;
\draw    (390,300) -- (390,370) ;
\draw  [dash pattern={on 0.84pt off 2.51pt}]  (390,110) -- (390,300) ;
\draw   (100,390) -- (450,390) -- (450,410) -- (100,410) -- cycle ;
\draw    (180,390) -- (180,410) ;
\draw    (100,422) -- (100,442) ;
\draw    (450,422) -- (450,442) ;
\draw    (320,432) -- (448,432) ;
\draw [shift={(450,432)}, rotate = 180] [color={rgb, 255:red, 0; green, 0; blue, 0 }  ][line width=0.75]    (10.93,-3.29) .. controls (6.95,-1.4) and (3.31,-0.3) .. (0,0) .. controls (3.31,0.3) and (6.95,1.4) .. (10.93,3.29)   ;
\draw    (230,432) -- (102,432) ;
\draw [shift={(100,432)}, rotate = 360] [color={rgb, 255:red, 0; green, 0; blue, 0 }  ][line width=0.75]    (10.93,-3.29) .. controls (6.95,-1.4) and (3.31,-0.3) .. (0,0) .. controls (3.31,0.3) and (6.95,1.4) .. (10.93,3.29)   ;
\draw    (460,20) -- (480,20) ;
\draw    (460,110) -- (480,110) ;
\draw    (470,80) -- (470,108) ;
\draw [shift={(470,110)}, rotate = 270] [color={rgb, 255:red, 0; green, 0; blue, 0 }  ][line width=0.75]    (10.93,-3.29) .. controls (6.95,-1.4) and (3.31,-0.3) .. (0,0) .. controls (3.31,0.3) and (6.95,1.4) .. (10.93,3.29)   ;
\draw    (470,50) -- (470,22) ;
\draw [shift={(470,20)}, rotate = 90] [color={rgb, 255:red, 0; green, 0; blue, 0 }  ][line width=0.75]    (10.93,-3.29) .. controls (6.95,-1.4) and (3.31,-0.3) .. (0,0) .. controls (3.31,0.3) and (6.95,1.4) .. (10.93,3.29)   ;
\draw    (460,180) -- (480,180) ;
\draw    (470,160) -- (470,178) ;
\draw [shift={(470,180)}, rotate = 270] [color={rgb, 255:red, 0; green, 0; blue, 0 }  ][line width=0.75]    (10.93,-3.29) .. controls (6.95,-1.4) and (3.31,-0.3) .. (0,0) .. controls (3.31,0.3) and (6.95,1.4) .. (10.93,3.29)   ;
\draw    (470,130) -- (470,112) ;
\draw [shift={(470,110)}, rotate = 90] [color={rgb, 255:red, 0; green, 0; blue, 0 }  ][line width=0.75]    (10.93,-3.29) .. controls (6.95,-1.4) and (3.31,-0.3) .. (0,0) .. controls (3.31,0.3) and (6.95,1.4) .. (10.93,3.29)   ;
\draw    (460,250) -- (480,250) ;
\draw    (470,230) -- (470,248) ;
\draw [shift={(470,250)}, rotate = 270] [color={rgb, 255:red, 0; green, 0; blue, 0 }  ][line width=0.75]    (10.93,-3.29) .. controls (6.95,-1.4) and (3.31,-0.3) .. (0,0) .. controls (3.31,0.3) and (6.95,1.4) .. (10.93,3.29)   ;
\draw    (470,200) -- (470,182) ;
\draw [shift={(470,180)}, rotate = 90] [color={rgb, 255:red, 0; green, 0; blue, 0 }  ][line width=0.75]    (10.93,-3.29) .. controls (6.95,-1.4) and (3.31,-0.3) .. (0,0) .. controls (3.31,0.3) and (6.95,1.4) .. (10.93,3.29)   ;
\draw    (460,370) -- (480,370) ;
\draw    (470,350) -- (470,368) ;
\draw [shift={(470,370)}, rotate = 270] [color={rgb, 255:red, 0; green, 0; blue, 0 }  ][line width=0.75]    (10.93,-3.29) .. controls (6.95,-1.4) and (3.31,-0.3) .. (0,0) .. controls (3.31,0.3) and (6.95,1.4) .. (10.93,3.29)   ;
\draw    (470,320) -- (470,302) ;
\draw [shift={(470,300)}, rotate = 90] [color={rgb, 255:red, 0; green, 0; blue, 0 }  ][line width=0.75]    (10.93,-3.29) .. controls (6.95,-1.4) and (3.31,-0.3) .. (0,0) .. controls (3.31,0.3) and (6.95,1.4) .. (10.93,3.29)   ;
\draw    (460,300) -- (480,300) ;

\draw (120,50) node [anchor=north west][inner sep=0.75pt]  [font=\fontsize{3.67em}{4.4em}\selectfont,color={rgb, 255:red, 245; green, 166; blue, 35 }  ,opacity=1 ] [align=left] {$\displaystyle V$};
\draw (197,130) node [anchor=north west][inner sep=0.75pt]  [font=\Huge,color={rgb, 255:red, 0; green, 155; blue, 255 }  ,opacity=1 ] [align=left] {$\displaystyle U$};
\draw (257,200) node [anchor=north west][inner sep=0.75pt]  [font=\Huge,color={rgb, 255:red, 0; green, 155; blue, 255 }  ,opacity=1 ] [align=left] {$\displaystyle U$};
\draw (407,320) node [anchor=north west][inner sep=0.75pt]  [font=\Huge,color={rgb, 255:red, 0; green, 155; blue, 255 }  ,opacity=1 ] [align=left] {$\displaystyle U$};
\draw (199,22) node [anchor=north west][inner sep=0.75pt] [font=\small]  [align=left] {$\displaystyle -\vec{t}$};
\draw (259,42) node [anchor=north west][inner sep=0.75pt] [font=\small]  [align=left] {$\displaystyle -\vec{t}$};
\draw (409,92) node [anchor=north west][inner sep=0.75pt]  [font=\small] [align=left] {$\displaystyle -\vec{t}$};
\draw (135,393) node [anchor=north west][inner sep=0.75pt]   [align=left] {$\displaystyle \vec{s}$};
\draw (309,392) node [anchor=north west][inner sep=0.75pt]  [font=\small] [align=left] {$\displaystyle \vec{0}$};
\draw (31,390) node [anchor=north west][inner sep=0.75pt]   [align=left] {Target $\displaystyle \vec{t}'$};
\draw (161,23) node [anchor=north west][inner sep=0.75pt]  [font=\small] [align=left] {$\displaystyle \vec{v}_{1}$};
\draw (161,43) node [anchor=north west][inner sep=0.75pt]  [font=\small] [align=left] {$\displaystyle \vec{v}_{2}$};
\draw (158,93) node [anchor=north west][inner sep=0.75pt]  [font=\small] [align=left] {$\displaystyle \vec{v}_{n_{1}}$};
\draw (173,63) node [anchor=north west][inner sep=0.75pt]  [rotate=-90] [align=left] {$\displaystyle \cdots $};
\draw (337.55,61.22) node [anchor=north west][inner sep=0.75pt]  [font=\large,rotate=-32.98] [align=left] {$\displaystyle \cdots $};
\draw (337.55,263.45) node [anchor=north west][inner sep=0.75pt]  [font=\large,rotate=-43.81] [align=left] {$\displaystyle \cdots $};
\draw (238,425) node [anchor=north west][inner sep=0.75pt]   [align=left] {$\displaystyle m_{2} +n_{1} m_{1}$};
\draw (461,61) node [anchor=north west][inner sep=0.75pt]   [align=left] {$\displaystyle n_{1}$};
\draw (461,140) node [anchor=north west][inner sep=0.75pt]   [align=left] {$\displaystyle n_{2}$};
\draw (461,210) node [anchor=north west][inner sep=0.75pt]   [align=left] {$\displaystyle n_{2}$};
\draw (461,330) node [anchor=north west][inner sep=0.75pt]   [align=left] {$\displaystyle n_{2}$};
\draw (473.5,262) node [anchor=north west][inner sep=0.75pt]  [rotate=-90] [align=left] {$\displaystyle \cdots $};

\end{tikzpicture}
}
\caption{A pictorial illustration for the construction in Theorem \ref{theorem: gap amplification for mld in BGKM18}.}
    \label{figure for gap amplification in BGKM18}
\end{figure}

\subsubsection{ETH-based Running Time Lower Bound}
Taking a closer look at the reduction from \textsc{$3$-SAT} to \textsc{$k$-VectorSum} in \cite[Theorem 11]{LinRSW22}, we observe that by applying a minor modification, their reduction can actually have soundness condition as:
\begin{itemize}
    \item If $\phi$ is not satisfiable, then for any $\Vec{v}_1\in V_1, \cdots, \Vec{v}_k\in V_k$ and $\alpha_1,\cdots, \alpha_k \in \mathbb{F}_p^+$, $\Sigma_{i=1}^k \alpha_i \Vec{v}_i \neq \vec{t}$.
\end{itemize}

The modification is simply appending a vector $(0^{i-1}\circ 1 \circ 0^{k-i})$ to each vector in $V_i$, for all $1\leq i\leq k$. Then, the target vector is changed from a zero vector to $\vec{t}=0^d\circ 1^k$.
Completeness of their reduction is trivially preserved. For soundness we claim, we note that for any $\Vec{v}_1\in V_1, \cdots, \Vec{v}_k\in V_k$ and $\alpha_1,\cdots, \alpha_k \in \mathbb{F}_p$, if $\Sigma_{i=1}^k \alpha_i \Vec{v}_i = \vec{t}$, then $\alpha_1 = \cdots = \alpha_k = 1$. 

By strengthening the soundness condition in \cite{LinRSW22}, we obtain exactly the restricted version of \textsc{$k$-MLD$_p$} in the previous sections. Combining with their soundness for \textsc{$k$-VectorSum}, we obtain the following hardness result for \textsc{$k\text{-MLD}_p$} as:

\begin{theorem}[Theorem 11 in~\cite{LinRSW22}]\label{theorem: hardness of MLD from ETH}
    Assuming ETH, for any constant integer $p$, \textsc{$k$-MLD$_p$} has no $n^{o(k)}$-time algorithm.
\end{theorem}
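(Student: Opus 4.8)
The plan is to obtain Theorem~\ref{theorem: hardness of MLD from ETH} by composing the known ETH-hardness of colored \textsc{$k$-VectorSum$_p$} from \cite[Theorem 11]{LinRSW22} with the parameter-preserving gadget sketched just above the statement. Recall that the reduction of \cite{LinRSW22} turns a \textsc{$3$-SAT} instance $\phi$ with $N$ variables and $O(N)$ clauses into, for each $k$, a family of $k$ vector sets $V_1,\dots,V_k\subseteq\F_p^d$ each of size $n=2^{O(N/k)}$, together with the target $\vec 0_d$, so that if $\phi$ is satisfiable then some $\vec v_1\in V_1,\dots,\vec v_k\in V_k$ sum to $\vec 0_d$, while if $\phi$ is unsatisfiable no choice of one vector per color sums to $\vec 0_d$. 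Since an $n^{o(k)}$-time algorithm for the target problem would yield a $2^{o(N/k)\cdot k}=2^{o(N)}$-time algorithm for \textsc{$3$-SAT}, contradicting ETH, it suffices to convert this into a \textsc{$k$-MLD$_p$} instance with the same parameter $k$ while preserving completeness and upgrading soundness to the strong form demanded by the YES/NO definition of \textsc{$k$-MLD$_p$}.

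First I would apply the gadget: append to every vector of $V_i$ the block $\vec e_i\in\F_p^k$, pushing the ambient dimension from $d$ to $d+k$, and take the new target to be $\vec t'=\vec 0_d\circ\vec 1_k$. Completeness is immediate, since a colored solution $\vec v_1\in V_1,\dots,\vec v_k\in V_k$ of the original instance gives $\sum_{i\in[k]}(\vec v_i\circ\vec e_i)=\vec 0_d\circ\vec 1_k=\vec t'$, so one vector per color still works.

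The substantive step is soundness. Assume $\phi$ is unsatisfiable and suppose for contradiction that there exist $\ell\le k$ vectors $\vec u_1,\dots,\vec u_\ell$ from the modified union $V_1\cup\dots\cup V_k$ and coefficients $\alpha_1,\dots,\alpha_\ell\in\F_p^+$ with $\sum_{j\in[\ell]}\alpha_j\vec u_j=\vec t'$. Restricting the equation to the last $k$ coordinates, for each color $i$ we obtain $\sum_{j:\,\vec u_j\in V_i}\alpha_j=1\neq 0$; hence every color class is represented among the chosen vectors, which, together with $\ell\le k$, forces $\ell=k$ and exactly one $\vec u_j$ per class. After reindexing so that $\vec u_i\in V_i$, the last-$k$-coordinate identity reads $\alpha_i=1$ for all $i$, and restricting to the first $d$ coordinates gives $\sum_{i\in[k]}\vec v_i=\vec 0_d$ for the original (unappended) vectors $\vec v_i\in V_i$ --- contradicting the soundness of the \textsc{$k$-VectorSum$_p$} reduction. (If coefficients in all of $\F_p$ are allowed, first delete the terms with $\alpha_j=0$.) The gadget runs in time linear in the input, so the running-time lower bound transfers unchanged. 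The only obstacle, which is essentially a matter of bookkeeping, is to confirm from \cite{LinRSW22} that the \textsc{$k$-VectorSum$_p$} instance produced has $n=2^{O(N/k)}$ vectors and that its soundness is genuinely stated for colored solutions, so that after the parameter-preserving gadget the resulting lower bound has exactly the form $n^{o(k)}$.
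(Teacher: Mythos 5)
Your proposal is correct and follows essentially the same route as the paper: the same gadget of appending the color-indicator block $\vec e_i$ to each vector of $V_i$ and replacing the target by $\vec 0_d\circ\vec 1_k$, with the observation that the last $k$ coordinates force one vector per color with all coefficients equal to $1$, so soundness reduces to that of the \textsc{$k$-VectorSum$_p$} reduction of \cite{LinRSW22}. Your version in fact spells out the slightly stronger soundness (arbitrary $\ell\le k$ vectors from the union with nonzero coefficients), which matches the colored-MLD NO-condition; this is a harmless strengthening of the same argument.
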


The parameterized \textsc{MLD} and \textsc{NCP} are equivalent in the sense that there exists reductions preserving the solution size in both direction, see Appendix \ref{appendix: equivalence between MLD and NCP}. Recall that Theorem \ref{theorem: gap creating (main)} showed a reduction from \textsc{$k$-MLD$_p$} to \textsc{$(3/2-\varepsilon)$-Gap-$k'$-MLD$_p$} with $k'=k^2\log k$ and $\varepsilon>0$. Combining running time lower bound in Theorem \ref{theorem: hardness of MLD from ETH}, we have:
\begin{theorem}\label{thm: 3/2-gap mld ETH lower bound}
    Assuming randomized ETH, for any constant integer $p$, constant $1<\gamma<\frac{3}{2}$, \textsc{$\gamma$-Gap-$k$-MLD$_p$} and \textsc{$\gamma$-Gap-$k$-NCP$_p$} has no $O_k(n^{o(\sqrt{k/\log k})})$-time algorithm.
\end{theorem}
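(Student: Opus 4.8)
The idea is to prove the statement by contradiction, composing the randomized gap-creating self-reduction of Theorem~\ref{theorem: gap creating (main)} with the ETH hardness of \emph{exact} \textsc{$k$-MLD$_p$} from Theorem~\ref{theorem: hardness of MLD from ETH}, and then transporting the bound to \textsc{$\gamma$-Gap-$k$-NCP$_p$} through the solution-size-preserving equivalence of Appendix~\ref{appendix: equivalence between MLD and NCP}. Concretely, suppose that for some constant prime power $p$ and constant $\gamma\in(1,\tfrac{3}{2})$ there were a randomized algorithm $\mathcal A$ deciding \textsc{$\gamma$-Gap-$k$-MLD$_p$} in time $O_k(n^{o(\sqrt{k/\log k})})$. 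Put $\varepsilon:=\tfrac{3}{2}-\gamma$, a positive constant.

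Given a \textsc{$k_0$-MLD$_p$} instance of size $n$ and vector dimension $d=n^{O(1)}$, first run the reduction of Theorem~\ref{theorem: gap creating (main)} with this $\varepsilon$. In time $O(d\,2^{O(k_0)}n^{1.01})=O_{k_0}(n^{O(1)})$ it outputs, with high probability over its coins, a \textsc{$\gamma$-Gap-$k_1$-MLD$_p$} instance with $k_1=\Theta(k_0^2\log k_0)$, in which each of the $k_1$ vector sets has $O(n)$ vectors over $\mathbb{F}_p^{D}$ with $D=O(k_1 d+k_1^2 n^{1/k_0})$; in particular its bit-size is $N=n^{O_{k_0}(1)}$. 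The point of the parameter growth being $\Theta(k_0^2\log k_0)$ (rather than $\Theta(k_0^3)$) is that $\log k_1=\Theta(\log k_0)$, so $k_1/\log k_1=\Theta(k_0^2)$ and hence $\sqrt{k_1/\log k_1}=\Theta(k_0)$. Now run $\mathcal A$ on this instance: its running time is $O_{k_1}(N^{o(\sqrt{k_1/\log k_1})})=O_{k_0}(N^{o(k_0)})=O_{k_0}(n^{o(k_0)})$, the last step absorbing the polynomial blow-up $N=n^{O_{k_0}(1)}$ into the little-$o$ exponent.

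It remains to see that the composition decides \textsc{$k_0$-MLD$_p$} with good probability. By part~(i) of Theorem~\ref{theorem: gap creating (main)}, completeness holds unconditionally, so a YES instance is always mapped to a YES instance of the gap problem and $\mathcal A$ answers YES. By part~(ii), whenever the randomly drawn code has collision number exceeding $2k_0$ --- which happens with high probability by Lemma~\ref{lemma: collision number from random code (grouped)} --- a NO instance is mapped to a NO instance of the gap problem and $\mathcal A$ answers NO. Thus the composition is a one-sided-error randomized algorithm for \textsc{$k_0$-MLD$_p$} running in time $O_{k_0}(n^{o(k_0)})$. Since the reduction from \textsc{$3$-SAT} underlying Theorem~\ref{theorem: hardness of MLD from ETH} (i.e., the reduction of~\cite{LinRSW22}) is deterministic, such an algorithm would give a randomized subexponential algorithm for \textsc{$3$-SAT}, contradicting the Randomized ETH. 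Hence $\mathcal A$ cannot exist. The identical bound for \textsc{$\gamma$-Gap-$k$-NCP$_p$} then follows because the equivalence in Appendix~\ref{appendix: equivalence between MLD and NCP} converts a \textsc{$\gamma$-Gap-$k$-NCP$_p$} instance into a \textsc{$\gamma$-Gap-$k$-MLD$_p$} instance in polynomial time while preserving both the parameter $k$ and the multiplicative gap $\gamma$.

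\textbf{Main obstacle.} All the real work is inside Theorem~\ref{theorem: gap creating (main)}; what remains is careful accounting. The points to watch are: (a) confirming that the quadratic-plus-log parameter blow-up is exactly what converts an $n^{o(k_1)}$-type lower bound into the stated $n^{o(\sqrt{k_0/\log k_0})}$ bound, via $\sqrt{k_1/\log k_1}=\Theta(k_0)$ --- a purely polynomial parameter growth would give a strictly weaker exponent; (b) keeping the instance-size growth polynomial so that $N^{o(k_0)}=n^{o(k_0)}$; and (c) handling randomness: since Theorem~\ref{theorem: gap creating (main)} guarantees soundness only with high probability, the composed algorithm has one-sided error, so the contradiction must be drawn against a \emph{randomized}-ETH lower bound for \textsc{$k$-MLD$_p$}, which is legitimate precisely because the \textsc{$3$-SAT}-to-\textsc{MLD} reduction of~\cite{LinRSW22} uses no randomness.
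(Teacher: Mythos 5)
Your proposal is correct and follows essentially the same route as the paper: compose the gap-creating reduction of Theorem~\ref{theorem: gap creating (main)} (with $k'=O(k^2\log k)$, hence $\sqrt{k'/\log k'}=\Theta(k)$) with the $n^{\Omega(k)}$ ETH lower bound for exact \textsc{$k$-MLD$_p$} from Theorem~\ref{theorem: hardness of MLD from ETH}, handle the one-sided error against randomized ETH, and transfer to \textsc{$\gamma$-Gap-$k$-NCP$_p$} via the parameter- and gap-preserving equivalence of Appendix~\ref{appendix: equivalence between MLD and NCP}. The paper states this composition only in a sentence, and your write-up just makes the parameter and running-time accounting explicit.
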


By applying Theorem \ref{theorem: gap amplification for mld in BGKM18} to the gap instance itself $O(\log\log \gamma)$ times, we can obtain the ETH-based time lower bound for approximating parameterized \textsc{MLD} and \textsc{NCP} to any constant factor.

\begin{corollary}
    Assuming ETH, for any constant integer $p$ and constant $\gamma>1$, \textsc{$\gamma$-Gap-$k$-MLD$_p$} and \textsc{$\gamma$-Gap-$k$-NCP$_p$} have no $O_k(n^{o(k^{\epsilon})})$ time algorithm, where $\epsilon = \frac{1}{\textsf{polylog}(\gamma)}$ is a constant.
\end{corollary}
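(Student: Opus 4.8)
The plan is to bootstrap the $(3/2-\varepsilon)$-gap hardness established above up to any constant gap by iterating the gap-amplification reduction of Theorem~\ref{theorem: gap amplification for mld in BGKM18} on itself a constant number of times. For the \textsf{ETH} statement the right starting point is the deterministic counterpart of Theorem~\ref{thm: 3/2-gap mld ETH lower bound}: running the Reed--Solomon ($k'=O(k^3)$) version of Theorem~\ref{theorem: gap creating (main)}, obtained by using Theorem~\ref{thm:collision number in rs code} in place of Lemma~\ref{lemma: collision number from random code (grouped)}, and composing with Theorem~\ref{theorem: hardness of MLD from ETH} gives, under \textsf{ETH}, that \textsc{$\gamma_0$-Gap-$k$-MLD$_p$} has no $O_k(n^{o(k^{1/3})})$-time algorithm, where $\gamma_0:=3/2-\varepsilon_0$ for a fixed $\varepsilon_0$ with $1<\gamma_0<3/2$. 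Concretely, a \textsc{$\kappa$-MLD$_p$} instance of size $n$ is turned, in $g(\kappa)\cdot\mathrm{poly}(n)$ time, into a \textsc{$\gamma_0$-Gap-$k_0$-MLD$_p$} instance $(V_0,\vec t_0)$ with $k_0=\Theta(\kappa^3)$. Feeding $(V_0,\vec t_0)$ to Theorem~\ref{theorem: gap amplification for mld in BGKM18} against itself ($U=V=V_0$, $\vec t=\vec s=\vec t_0$, $k_1=k_2=k^{(i)}$, $\gamma_1=\gamma_2=\gamma^{(i)}$) produces a \textsc{$\gamma^{(i+1)}$-Gap-$k^{(i+1)}$-MLD$_p$} instance with $k^{(i+1)}=k^{(i)}+(k^{(i)})^2$ and $\gamma^{(i+1)}\ge(\gamma^{(i)})^2(1-1/k^{(i)})$; starting from $(\gamma^{(0)},k^{(0)})=(\gamma_0,k_0)$ and iterating $t$ times yields a \textsc{$\gamma^{(t)}$-Gap-$k^{(t)}$-MLD$_p$} instance. (For $\gamma\le\gamma_0$ no amplification is needed, so assume $\gamma>\gamma_0$.)

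The second step is to control the two recursions. From $k^{(i+1)}\le 2(k^{(i)})^2$ one gets $k_0^{2^t}\le k^{(t)}\le 2^{2^t-1}k_0^{2^t}$, so $K:=k^{(t)}=\Theta_\gamma(\kappa^{3\cdot 2^t})$ with constants depending only on $t$. Unrolling the gap recursion gives $\gamma^{(t)}\ge\gamma_0^{2^t}\prod_{i=0}^{t-1}(1-1/k^{(i)})^{2^{t-1-i}}$; since the $k^{(i)}$ grow doubly exponentially while $t=O(\log\log\gamma)$ and the exponents $2^{t-1-i}$ are tiny, the product is bounded below by an absolute constant $c_0\in(0,1]$, so $\gamma^{(t)}\ge c_0\,\gamma_0^{2^t}$ (and each $\gamma^{(i)}>1$, since $k_0$ is large, so the amplification theorem applies at every step). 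Choosing $t=\lceil\log_2(\log(\gamma/c_0)/\log\gamma_0)\rceil=O(\log\log\gamma)$ makes $\gamma^{(t)}\ge\gamma$, and note $2^t=\Theta(\log\gamma)$. Since gap amplification is polynomial time and $t$ is a constant, the composed reduction maps a \textsc{$\kappa$-MLD$_p$} instance to a \textsc{$\gamma$-Gap-$K$-MLD$_p$} instance in $g'(\kappa)\cdot\mathrm{poly}_\gamma(n)$ time.

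Now set $\epsilon:=\tfrac{1}{3\cdot 2^t}$, which equals $\Theta(1/\log\gamma)$ and is in particular of the form $1/\mathsf{polylog}(\gamma)$. Because $K=\Theta_\gamma(\kappa^{3\cdot 2^t})$ and $2^t\epsilon=\tfrac13$, we have $K^{\epsilon}=\Theta(\kappa)$, hence $n^{o(K^{\epsilon})}=n^{o(\kappa)}$; so an $O_K(n^{o(K^{\epsilon})})$-time algorithm for \textsc{$\gamma$-Gap-$K$-MLD$_p$}, composed with the reduction above, would solve \textsc{$\kappa$-MLD$_p$} in $O_\kappa(n^{o(\kappa)})$ time, contradicting Theorem~\ref{theorem: hardness of MLD from ETH} and hence \textsf{ETH}. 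The same lower bound for \textsc{$\gamma$-Gap-$k$-NCP$_p$} follows from the solution-size-preserving, hence gap-preserving, equivalence between parameterized \textsc{MLD} and \textsc{NCP} in Appendix~\ref{appendix: equivalence between MLD and NCP}, and nothing in the chain depends on the field size, so the statement holds for every prime power $p$. (Starting from the randomized-\textsf{ETH} bound of Theorem~\ref{thm: 3/2-gap mld ETH lower bound}, with exponent $\sqrt{k/\log k}$, the same argument gives a correspondingly stronger exponent under randomized \textsf{ETH}.)

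The only real difficulty is quantitative bookkeeping: one must check that the corrections $(1-1/k^{(i)})$ accumulated over the iteration do not push the amplified gap below $\gamma$ — which is exactly where the doubly exponential growth of $k^{(i)}$ is used, to dominate the singly exponential weights $2^{t-1-i}$ — while simultaneously confirming that the parameter only climbs to $\kappa^{\Theta(2^t)}$ (and not, say, to $\kappa^{2^{\Theta(t)}\cdot t}$), so that with $\epsilon=\Theta(1/2^t)$ the exponent $K^{\epsilon}$ stays $\Theta(\kappa)$ and the \textsf{ETH} barrier $n^{o(\kappa)}$ survives the back-substitution with the advertised $\epsilon=1/\mathsf{polylog}(\gamma)$.
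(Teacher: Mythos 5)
Your proposal follows essentially the same route as the paper: start from the $(3/2-\varepsilon)$-gap hardness and apply the gap amplification of Theorem~\ref{theorem: gap amplification for mld in BGKM18} to the instance against itself $O(\log\log\gamma)$ times, tracking the squaring of both the gap and the parameter to obtain $\epsilon=\Theta(1/\log\gamma)$. Your explicit use of the deterministic Reed--Solomon-based variant of the gap-creating reduction ($k'=O(k^3)$, base exponent $k^{1/3}$) as the starting point is exactly what the plain-ETH (rather than randomized-ETH) statement requires, and your bookkeeping of the accumulated $(1-1/k^{(i)})$ losses and of the parameter growth $\kappa^{\Theta(2^t)}$ is correct.
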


\subsection{Minimum Distance Problem}
The reduction from \textsc{Gap-$k$-NCP} to \textsc{Gap-$k$-MDP} in~\cite{BCGR23} is as follows.
\begin{theorem}[\cite{BCGR23}, Theorem 3.1 and 3.3]\label{thm:NCP to MDP in BCGR23}
    For any prime power $p\geq 2$ there is a randomized reduction from \textsc{$(4p)$-Gap-$k$-NCP$_p$} to \textsc{$\frac{4p}{4p-1}$-Gap-$k'$-MDP$_p$} runs in polynomial time with $k'=O(k)$.
\end{theorem}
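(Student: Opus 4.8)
The plan is the classical \emph{homogenization} of \textsc{NCP} into \textsc{MDP}, composed with a \emph{locally dense code} gadget to drown out the spurious homogeneous solutions. Write the input \textsc{$(4p)$-Gap-$k$-NCP$_p$} instance as a generator matrix $G_0=[\vec v_1\,|\,\cdots\,|\,\vec v_n]\in\F_p^{d\times n}$ together with a target $\vec t\in\F_p^d$; the promise is that $\min_{\vec c\in\F_p^n}\|\vec t-G_0\vec c\|_0$ is either $\le k$ or $>4pk$. Naive homogenization takes the code generated by the columns of $G_0$ and by $\vec t$, whose codewords are $G_0\vec c-\beta\vec t$: for $\beta\neq 0$, rescaling by $\beta^{-1}$ does not change the Hamming weight, so such a codeword has weight $\ge\min_{\vec c'}\|\vec t-G_0\vec c'\|_0$ (hence $>4pk$ in the \textsc{NO} case, and $\le k$ for $\beta=1$, $\vec c=\vec c^*$ in the \textsc{YES} case). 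The only bad codewords are the homogeneous ones $G_0\vec c$ with $\vec c\neq\vec 0$, which may have tiny weight; the gadget is there precisely to raise their weight.

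Concretely, I would use a gadget consisting of an \emph{injective} linear code $\mathcal D=\{G_1\vec w:\vec w\in\F_p^{m}\}\subseteq\F_p^{N}$, a center $\vec s\in\F_p^N\setminus\mathcal D$, a radius $r$, and a surjective linear map $\pi:\F_p^{m}\to\F_p^{n}$ onto the coefficient space of the \textsc{NCP} instance, such that (a) $d_{\min}(\mathcal D)>\tfrac{4p}{4p-1}\,k'$, and (b) the messages $\vec w$ whose codeword $G_1\vec w$ lies in the Hamming ball of radius $r$ around $\vec s$ are mapped by $\pi$ \emph{onto all of} $\F_p^{n}$. Given the gadget, the output \textsc{MDP} instance is the code with message $(\vec w,\beta)\in\F_p^{m}\times\F_p$ and codeword
\[
\big(G_0\,\pi(\vec w)-\beta\vec t\big)\ \circ\ \big(G_1\vec w-\beta\vec s\big)\ \in\ \F_p^{d+N},
\]
and I set $k'=k+r$, choosing $r$ as large as the gap budget permits, i.e. $r$ just below $(4p-2)k$, so that $k'=O_p(k)$ while simultaneously $4pk>\tfrac{4p}{4p-1}k'$.

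Correctness then splits into the three cases already isolated. For \textbf{completeness}, take the witness $\vec c^*$ of the \textsc{NCP} \textsc{YES} case; property (b) yields $\vec w^*$ with $\pi(\vec w^*)=\vec c^*$ and $\|G_1\vec w^*-\vec s\|_0\le r$; then $(\vec w^*,1)$ gives a codeword of weight $\le k+r=k'$, nonzero because $\vec s\notin\mathcal D$ forces its second block to be nonzero. For \textbf{soundness with $\beta\neq 0$}, rescale by $\beta^{-1}$: the first block alone has weight $\ge\min_{\vec c'}\|\vec t-G_0\vec c'\|_0>4pk>\tfrac{4p}{4p-1}k'$. For \textbf{soundness with $\beta=0$}, the codeword is $\big(G_0\pi(\vec w)\big)\circ\big(G_1\vec w\big)$ with $\vec w\neq\vec 0$, and injectivity of $G_1$ makes the second block a nonzero codeword of $\mathcal D$, of weight $\ge d_{\min}(\mathcal D)>\tfrac{4p}{4p-1}k'$. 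Hence every nonzero codeword of the output has weight exceeding $\tfrac{4p}{4p-1}k'$, as required, with $k'=O(k)$.

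The real work — and the source of the randomness — is building the gadget $(\mathcal D,\vec s,\pi,r)$ over an arbitrary prime power $p$ with (a) and (b) holding together: $\mathcal D$ must have minimum distance $\Theta_p(k)$, while a Hamming ball of radius $r<(4p-2)k$ about $\vec s$ must contain enough codewords (in general exponentially many in $n$) for a linear map to project them onto all of $\F_p^{n}$. I would take $\mathcal D$, $\vec s$, and $\pi$ at random and verify (i) $d_{\min}(\mathcal D)>\tfrac{4p}{4p-1}k'$ by a first-moment / Gilbert--Varshamov estimate, and (ii) the number of codewords of $\mathcal D$ inside $B(\vec s,r)$ together with the surjectivity of $\pi$ on the corresponding messages by a union-bound / second-moment argument; this is essentially the locally-dense-code construction of \cite{BCGR23} over $\F_p$. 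The delicate part I expect to dominate the effort is the parameter bookkeeping — balancing $r$ against $d_{\min}(\mathcal D)$ so that the loss from input gap $4p$ to output gap $\tfrac{4p}{4p-1}$ is exactly absorbed while keeping $k'=O(k)$, and controlling the gadget length $N$ (hence the output size and running time), which depends on how the number of generators $n$ of the \textsc{NCP} instance compares to the parameter $k$.
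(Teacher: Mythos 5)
First, a point of reference: the paper does not prove this statement at all — it is quoted verbatim from \cite{BCGR23} (their Theorems 3.1 and 3.3) and used as a black box. So your attempt is being compared with the known proof in that reference, whose overall shape (homogenization of \textsc{NCP} plus a randomized covering-code gadget to kill the $\beta=0$ codewords) you have correctly identified, and your three-case analysis and numerology ($k'=k+r$ with $r$ just below $(4p-2)k$, so that $4pk>\tfrac{4p}{4p-1}k'$) are internally consistent.

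The genuine gap is your property (b). You demand a gadget code of length $N$ with minimum distance $\Theta_p(k)$, a center $\vec s$ and radius $r<(4p-2)k$, such that the codewords inside the Hamming ball $B(\vec s,r)$ are mapped by a linear map $\pi$ \emph{onto all of} $\mathbb{F}_p^{n}$. This is information-theoretically impossible with $N=\mathrm{poly}(n)$: any ball of radius $r$ contains at most $\binom{N}{r}(p-1)^{r}\le (N(p-1))^{r}$ codewords, so surjectivity onto $\mathbb{F}_p^{n}$ forces $N\ge p^{\Omega(n/r)}$, which is superpolynomial whenever $k=o(n/\log n)$ — i.e., in exactly the parameterized regime at issue. (In the classical NP-hardness proofs in the Dumer--Micciancio--Sudan style this requirement is harmless because there $r=\Theta(N)$; here $r=O_p(k)$.) So the issue you defer to ``parameter bookkeeping'' about the gadget length $N$ versus $n$ is not bookkeeping: as stated, your reduction cannot run in polynomial time, and no choice of random $\mathcal D,\vec s,\pi$ can satisfy (a) and (b) simultaneously with polynomial $N$. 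The way \cite{BCGR23} (following the $\mathbb{F}_2$ case of \cite{BBE+21}) circumvents this is to never ask for covering of the full coefficient space: one works with the sparse-witness formulation (the \textsc{$k$-MLD}-type form, shown in Appendix~\ref{appendix: equivalence between MLD and NCP} of this paper to be parameter-preservingly equivalent to \textsc{$k$-NCP}), so the gadget only needs to reach coefficient vectors of Hamming weight $O(k)$, of which there are at most $(np)^{O(k)}$ — compatible with a radius-$\Theta_p(k)$ ball in a polynomial-length code — and even this weaker covering guarantee is established only probabilistically via a randomly constructed ``sparse covering code,'' which is where the randomness in the statement actually comes from. Without replacing your condition (b) by such a sparsity-restricted, randomized covering property, the construction does not establish the theorem.
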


Combining with our reduction for \textsc{Gap-$k$-MLD} and \textsc{Gap-$k$-NCP}, we have:
\begin{corollary}
    Assuming randomized ETH, for any prime power $p\geq 2$ and real number $\gamma>1$, \textsc{$\gamma$-Gap-$k$-MDP$_{p}$} has no $O_k(n^{o(k^\epsilon)})$ time algorithm, where $\epsilon=\Theta(\frac{1}{p\log \gamma\cdot\mathsf{polylog} (p)})$.
\end{corollary}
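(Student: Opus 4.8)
The plan is to chain three ingredients: the constant-factor ETH lower bound for \textsc{$k$-NCP$_p$} established in the preceding corollary, the gap-preserving reduction from \textsc{NCP} to \textsc{MDP} of Theorem~\ref{thm:NCP to MDP in BCGR23}, and a gap-amplification step for \textsc{MDP} obtained by tensoring codes (the \textsc{MDP} analogue of Theorem~\ref{theorem: gap amplification for mld in BGKM18}). First I would instantiate the preceding corollary with the constant gap $\gamma_0 = 4p$ (legitimate, since $p$ is a fixed prime power): assuming ETH, \textsc{$(4p)$-Gap-$k$-NCP$_p$} has no $O_k(n^{o(k^{\epsilon_0})})$-time algorithm with $\epsilon_0 = \Theta(1/\mathsf{polylog}(4p)) = \Omega(1/\mathsf{polylog}(p))$. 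Feeding such instances into the randomized polynomial-time reduction of Theorem~\ref{thm:NCP to MDP in BCGR23}, whose parameter growth is only $k' = O(k)$ and which therefore preserves the shape of the exponent, gives that, assuming randomized ETH, \textsc{$\gamma_1$-Gap-$k$-MDP$_p$} has no $O_k(n^{o(k^{\epsilon_0})})$-time algorithm, where $\gamma_1 = \tfrac{4p}{4p-1} = 1 + \Theta(1/p)$.

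It remains to amplify the gap from $\gamma_1$ up to an arbitrary constant $\gamma$. For a \textsc{$\gamma'$-Gap-$k$-MDP$_p$} instance, specified by a linear code $C \subseteq \F_p^\ell$ (the span of the input vectors), I would pass to the tensor square $C \otimes C \subseteq \F_p^{\ell^2}$: it is again linear and efficiently constructible from a spanning set of $C$; a nonzero codeword $c \in C$ of Hamming weight $\le k$ yields the nonzero codeword $c \otimes c$ of weight $\|c\|_0^2 \le k^2$; and the classical identity $d_{\min}(C \otimes C) = d_{\min}(C)^2$ guarantees that if every nonzero codeword of $C$ has weight $> \gamma' k$ then every nonzero codeword of $C \otimes C$ has weight $> (\gamma' k)^2$. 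Hence tensoring is a polynomial-time reduction from \textsc{$\gamma'$-Gap-$k$-MDP$_p$} to \textsc{$(\gamma')^2$-Gap-$(k^2)$-MDP$_p$}, with only a polynomial blow-up of the instance size. Iterating it $t$ times (for the least $t \ge 0$ with $\gamma_1^{2^t} \ge \gamma$) produces a \textsc{$\gamma$-Gap-$(k^{2^t})$-MDP$_p$} instance — here using the trivial monotonicity that a larger gap is no harder, which absorbs the overshoot of the last round — and since $\log \gamma_1 = \log(1 + \Theta(1/p)) = \Theta(1/p)$ we get $2^t = \Theta(\max\{1, p\log\gamma\})$. Tracking the composition, an $O_k(n^{o(k^\epsilon)})$-time algorithm for \textsc{$\gamma$-Gap-$k$-MDP$_p$} would solve \textsc{$(4p)$-Gap-$k$-NCP$_p$} in time $O_k(n^{o(k^{\Theta(2^t)\,\epsilon})})$, so choosing $\epsilon = \Theta(\epsilon_0 / 2^t) = \Theta\!\big(\tfrac{1}{p\log\gamma\cdot\mathsf{polylog}(p)}\big)$ gives the claim; the reduction stays randomized only because of Theorem~\ref{thm:NCP to MDP in BCGR23}, all other steps being deterministic, so the conclusion rests on randomized ETH.

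The only delicate point I expect is the parameter bookkeeping across the composition. Theorem~\ref{thm:NCP to MDP in BCGR23} outputs a gap of only $1 + \Theta(1/p)$, so reaching an arbitrary constant gap forces $t = \Theta(\log(p\log\gamma))$ rounds of tensoring and hence a $k \mapsto k^{\Theta(p\log\gamma)}$ blow-up of the parameter (and a corresponding polynomial blow-up of the instance size, harmless since $p$, $\gamma$ and $t$ are constants); this is exactly the source of the $p$ and $\log\gamma$ factors in the exponent $\epsilon$. The only purely mathematical input beyond the cited theorems is the tensor-code minimum-distance formula, which is standard, so the remaining work is the routine verification that each step produces a well-formed instance of the next problem and that the running-time lower bound transfers cleanly through an $f(k)\,n^{O(1)}$ composition.
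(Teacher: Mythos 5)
Your proposal is correct and follows essentially the same route as the paper: reduce constant-gap \textsc{NCP$_p$} (with gap $4p$, obtained via the BGKM18 amplification hidden in the preceding corollary) to \textsc{$\tfrac{4p}{4p-1}$-Gap-MDP$_p$} by the BCGR23 randomized reduction, then self-tensor the MDP instance until the gap reaches $\gamma$, incurring the $k\mapsto k^{\Theta(p\log\gamma)}$ parameter blow-up that produces $\epsilon=\Theta\bigl(\tfrac{1}{p\log\gamma\cdot\mathsf{polylog}(p)}\bigr)$. The paper merely organizes the tensoring in two phases ($O(\log p)$ rounds to gap $2$, then $O(\log\log\gamma)$ more), which is the same computation as your single-phase count $2^{t}=\Theta(p\log\gamma)$.
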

\begin{proof}
    We first apply Theorem \ref{theorem: gap creating (main)} to obtain a \textsc{$(\frac{3}{2}-\varepsilon)$-Gap-$k_1$-MLD$_p$}, where $k_1=k^2\log k$. Then apply Theorem \ref{theorem: gap amplification for mld in BGKM18} for $\Theta(\log\log p)$ times to amplify the gap to $4p$, this will cause the parameter $k_2$ grows to $k_1^{\textsf{polylog} (p)}=k^{\textsf{polylog} (p)}$. Reduction from \textsc{Gap-$k$-MLD} to \textsc{Gap-$k$-NCP} is trivial as in Theorem \ref{thm:MLDtoNCP}, and it preserves the parameter. Apply Theorem \ref{thm:NCP to MDP in BCGR23}, we obtain a \textsc{$\frac{4p}{4p-1}$-Gap-$k_3$-MDP$_p$} instance with $k_3=O(k_2)=k^{\textsf{polylog} (p)}$. Finally, to obtain any constant factor $\gamma>1$, it suffices to first self-tensor the instance for $O(\log p)$ times to obtain a \textsc{$2$-Gap} instance, causing a parameter growth of $k_4=k_3^{O(p)} = k^{p\cdot\textsf{polylog} (p)}$, then self-tensor it for $O(\log \log \gamma)$ times, with parameter growth $k_5=k_4^{O(\log \gamma)}=k^{p\log\gamma\cdot\textsf{polylog} (p)}$.
    
    Theorem \ref{theorem: hardness of MLD from ETH} showed an $n^{\Omega(k)}$ time lower bound for \textsc{$k$-MLD$_p$}, combining with our reduction, we have that under ETH, there are no $f(k)n^{o(k^\epsilon)}$ time algorithm for \textsc{$\gamma$-Gap-$k$-MDP$_p$} where $\epsilon = \Theta(\frac{1}{p\log \gamma\cdot\textsf{polylog} (p)})$.
\end{proof}

\subsection{Closest Vector Problem}
We need a reduction from \textsc{$(2\gamma)$-Gap-$k$-MLD$_q$} to \textsc{$\gamma$-Gap-$2k$-CVP$_p$} from  \cite{BBE+21}.
\begin{theorem}[\cite{BBE+21}, Theorem 7.2]\label{thm:MLD to CVP in BBE+21}
    For any real numbers $\gamma,p\geq 1$ and a prime number $q>2\gamma$, there is a reduction from \textsc{$(2\gamma)$-Gap-$k$-MLD$_q$} to \textsc{$\gamma$-Gap-$k'$-CVP$_p$} runs in polynomial time, where $k'=2k$.
\end{theorem}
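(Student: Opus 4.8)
The plan is to realize this as an ABSS-style reduction~\cite{ABSS97} from a linear-algebraic problem over $\mathbb{F}_q$ to a lattice problem over $\mathbb{Z}$, exploiting that $q$ is \emph{prime} so that $\{0,1,\dots,q-1\}$ lifts $\mathbb{F}_q$ faithfully: integer addition, reduced modulo $q$, agrees with field addition. (This is exactly where primality, as opposed to prime-power order, is used, since for $q=r^e$ one has $\mathbb{F}_q\neq\mathbb{Z}/q\mathbb{Z}$.) Given an $\textsc{MLD}_q$ instance $(V=\{\vec v_1,\dots,\vec v_n\},\vec t)$ in $\mathbb{F}_q^d$ with parameter $k$, I would build a $\textsc{CVP}_p$ instance of dimension $D=d+n(q-1)+n$ with three kinds of coordinate blocks: a \emph{syndrome block} of size $d$, a \emph{presence block} with one coordinate for each pair $(i,\alpha)$ with $i\in[n]$ and $\alpha\in\mathbb{F}_q^+$, and an \emph{at-most-one block} with one coordinate per $i\in[n]$.

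For each $i$ and each $\alpha\in\mathbb{F}_q^+$ the lattice gets a generator $\vec g_{i,\alpha}$ whose syndrome block is $B\cdot\widetilde{\alpha\vec v_i}$ (the integer lift of $\alpha\vec v_i\in\mathbb{F}_q^d$, scaled by a large integer $B$ with $B^p>2\gamma k$), whose $(i,\alpha)$ presence coordinate is $1$, and whose $i$-th at-most-one coordinate is $1$; in addition the lattice contains $qB\,\vec e_j$ for each syndrome coordinate $j\in[d]$. The target $\vec s$ carries $B\,\widetilde{\vec t}$ on the syndrome block and $\vec 0$ elsewhere, and I would set $k'=2k$. The generating set is full rank (the $qB\vec e_j$ span the syndrome block and the presence coordinates of the $\vec g_{i,\alpha}$ are independent), so after cleanup it is a legitimate basis, and the construction is clearly polynomial time since the blow-up is only a factor $O(q)$ in dimension and in the number of vectors.

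The key steps are then the two directions. \textbf{(Completeness.)} From an $\textsc{MLD}$ solution $\sum_{i\in S}\alpha_i\vec v_i=\vec t$ with $|S|\le k$, I would take $\sum_{i\in S}\vec g_{i,\alpha_i}$ and correct the syndrome block by an integer combination of the $qB\vec e_j$'s, landing exactly on $B\widetilde{\vec t}$; this costs $0$ on the syndrome block, $1$ per used vector on the presence block and $1$ per used vector on the at-most-one block, for total $\ell_p^p$-cost $2|S|\le 2k=k'$. \textbf{(Soundness.)} Conversely, from any lattice vector within $\ell_p^p$-distance $\gamma k'=2\gamma k$ of $\vec s$, the choice of $B$ forces the syndrome block to match exactly, so the induced residues $\beta_i:=\sum_\alpha c_{i,\alpha}\alpha \bmod q$ satisfy $\sum_i\beta_i\vec v_i=\vec t$ over $\mathbb{F}_q$; and because touching any generator attached to index $i$ costs at least $1$ on the presence block and at least $1$ on the at-most-one block (mod-$q$ cancellations among the $c_{i,\alpha}$ can only \emph{increase} the presence-block cost), the number of indices $i$ with $\beta_i\neq 0$ is at most $\tfrac12\cdot 2\gamma k=\gamma k$. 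Contrapositively, a NO instance of $(2\gamma)\text{-Gap-}k\text{-}\textsc{MLD}_q$ maps to a NO instance of $\gamma\text{-Gap-}k'\text{-}\textsc{CVP}_p$; note the whole argument is uniform in $p\ge 1$, using only $|c|^p\ge 1$ for nonzero integers $c$.

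The main obstacle will be getting the weight-counting gadget exactly right: it must charge a cost for ``using $\vec v_i$'' that is \emph{independent of the field coefficient} $\alpha_i$ (essential for completeness, since the coefficients in an $\textsc{MLD}$ YES-instance are adversarial and can be as large as $q-1$), while simultaneously ruling out any ``free'' cancellation modulo $q$ that would let a cheap lattice vector decode to a large-support $\textsc{MLD}$ solution and defeat soundness. The at-most-one block is what pins the per-vector cost to exactly $2$, and is the true source of the $k\mapsto 2k$, $2\gamma\mapsto\gamma$ bookkeeping; the remaining delicate point is matching the hypothesis $q>2\gamma$ to what the soundness count actually needs so that the recovered residues and the distance budget interact correctly. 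Everything else — the choice of $B$, the basis check, the polynomial running time — should be routine once the gadget is fixed.
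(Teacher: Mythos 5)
This statement is not proved in the paper at all: it is imported verbatim from \cite{BBE+21} (their Theorem 7.2) and used as a black box, so there is no in-paper argument to compare yours against; I can only judge your reduction on its own merits. Judged that way, it is essentially correct. The scaled syndrome block together with the $qB\vec e_j$ generators forces exact agreement modulo $q$ (any mismatch costs at least $B^p>\gamma k'$), primality of $q$ is used exactly where you say (so that integer lifts add compatibly with $\mathbb{F}_q$), the per-pair $(i,\alpha)$ generators make the completeness cost $2k$ independent of the field coefficients, and the per-index accounting yields support at most $\gamma k\le 2\gamma k$ in the soundness direction, contradicting the MLD NO promise. Also note that the paper's \textsc{CVP} formulation takes an arbitrary finite set of integer generators, so no basis cleanup is needed (your generators are in fact linearly independent, though they do not span $\mathbb{Z}^D$).

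Two points deserve attention. First, your key counting claim is stated imprecisely: it is \emph{not} true that an index $i$ with $\beta_i\neq 0$ always pays at least $1$ on the presence block and at least $1$ on the at-most-one block — taking $c_{i,\alpha_1}=1$, $c_{i,\alpha_2}=-1$ makes the at-most-one coordinate vanish. What you need, and what is true, is that the total cost attributable to $i$ is at least $2$: either exactly one $c_{i,\alpha}$ is nonzero, in which case both blocks pay at least $|c_{i,\alpha}|^p\ge 1$, or at least two are nonzero, in which case the presence block alone pays at least $2$. Your parenthetical about cancellations gestures at this, but the explicit two-case statement is what closes the soundness argument. Second, the ``delicate point'' you leave open — matching the hypothesis $q>2\gamma$ — does not exist in your construction: nothing in your argument uses $q>2\gamma$, so you actually prove a (slightly stronger) statement with that hypothesis dropped; the hypothesis, the factor $2$ in $k'$, and the halving of the gap are bookkeeping of the cited construction, not of yours, so no further matching is required. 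A last, minor caveat: if the Gap-MLD definition admits NO instances with $\vec t=\vec 0$, the zero lattice vector sits on your target and soundness degenerates, so that case must be excluded or handled separately — the paper's own MLD-to-NCP reduction in the appendix makes the same implicit assumption.
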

For running time lower bound, we shall again analyze the parameter growth as follows.

\begin{corollary}\label{col:CVP lower bound under ETH}
    Assuming ETH, there exists constant $c>0$, for any real numbers $p,\gamma\geq 1$, \textsc{$\gamma$-Gap-$k$-CVP$_{p}$} has no $O_k(n^{o(k^\epsilon)})$ time algorithm, where $\epsilon=\Theta(\frac{1}{\gamma^c})$.
\end{corollary}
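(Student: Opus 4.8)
The plan is to obtain the lower bound by composing four results already established in the paper into a single deterministic FPT reduction from the (restricted, colored) problem \textsc{$k$-MLD$_q$} to \textsc{$\gamma$-Gap-$k$-CVP$_p$}, and then invoking the $O_k(n^{o(k)})$-time ETH hardness of \textsc{$k$-MLD$_q$}. The four ingredients are: (i) Theorem~\ref{theorem: hardness of MLD from ETH} (ETH lower bound for \textsc{$k$-MLD$_q$}, valid over any constant-size field); (ii) Theorem~\ref{theorem: gap creating (main)}, in its \emph{derandomized} form --- with parameter blow-up $O(k^3)$ rather than $O(k^2\log k)$, obtained by feeding Lemma~\ref{lemma: gap creating (warm up)} a Reed--Solomon code via Theorem~\ref{thm:collision number in rs code} --- so that the final statement rests on ETH and not randomized ETH; (iii) the gap-amplification procedure of Theorem~\ref{theorem: gap amplification for mld in BGKM18}; and (iv) the gap-preserving reduction of Theorem~\ref{thm:MLD to CVP in BBE+21} from \textsc{$(2\gamma)$-Gap-$k$-MLD$_q$} to \textsc{$\gamma$-Gap-$2k$-CVP$_p$}. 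Since $\gamma$ is a fixed constant, all of these, once (ii) is derandomized, are deterministic and their composition remains polynomial-time in $n$.

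Concretely, I would proceed as follows. First fix, via Bertrand's postulate, a prime $q$ with $2\gamma<q\le 4\gamma$; as $\gamma$ is constant so is $q$, and Theorem~\ref{theorem: hardness of MLD from ETH} forbids an $O_k(n^{o(k)})$-time algorithm for \textsc{$k$-MLD$_q$}. Second, apply the derandomized Theorem~\ref{theorem: gap creating (main)} over $\F_q$ to produce a \textsc{$(\frac32-\varepsilon)$-Gap-$k_1$-MLD$_q$} instance with $k_1=O(k^3)$, where $\varepsilon\in(0,\tfrac12)$ is chosen small enough that $(\tfrac32-\varepsilon)^2>1+\tfrac{1}{k_1-1}$, so the amplification step below is always applicable. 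Third, self-tensor the instance with itself $t=O(\log\log\gamma)$ times using Theorem~\ref{theorem: gap amplification for mld in BGKM18}; one round sends $(\gamma,k)\mapsto(\gamma^2(1-\tfrac1k),\,k+k^2)$, so after $t$ rounds the gap is $\Theta((\tfrac32)^{2^{t}})\ge 2\gamma$ and the parameter is $k_2=k_1^{O(2^{t})}=k^{O(\log\gamma)}$. Fourth, apply Theorem~\ref{thm:MLD to CVP in BBE+21} (legitimate since $q>2\gamma$ is prime) to the resulting \textsc{$(2\gamma)$-Gap-$k_2$-MLD$_q$} instance, obtaining a \textsc{$\gamma$-Gap-$k_3$-CVP$_p$} instance with $k_3=2k_2=k^{O(\log\gamma)}$.

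Finally, suppose for contradiction that \textsc{$\gamma$-Gap-$k$-CVP$_p$} had an $O_k(n^{o(k^{\epsilon})})$-time algorithm. Running it on the instance above decides \textsc{$k$-MLD$_q$} in time $O_k(n^{o(k_3^{\epsilon})})=O_k(n^{o(k^{\,\epsilon\cdot O(\log\gamma)})})$, the chain being polynomial-time because the number $t$ of amplification rounds is constant. It therefore suffices to pick $\epsilon$ with $\epsilon\cdot O(\log\gamma)<1$; any $\epsilon=\Theta(1/\gamma^{c})$ for a suitable constant $c>0$ does so (indeed $\epsilon=\Theta(1/\log\gamma)$ already works, so the exponent claimed in the corollary is reached with room to spare), and for such $\epsilon$ we would obtain an $O_k(n^{o(k)})$-time algorithm for \textsc{$k$-MLD$_q$}, contradicting Theorem~\ref{theorem: hardness of MLD from ETH}.

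I expect the main work to be the quantitative accounting rather than any new idea. The delicate point is the amplification loop: the parameter follows $k\mapsto k+k^2$ and the gap $\gamma\mapsto\gamma^2(1-1/k)$, and to certify that $t=O(\log\log\gamma)$ rounds genuinely reach a gap $\ge 2\gamma$ one must verify that the product $\prod_i(1-1/k_i)$ of the spurious factors converges to a positive constant, which holds because the $k_i$ grow doubly exponentially. One must also take care to use the \emph{derandomized} gap-creating reduction (otherwise the conclusion would only hold under randomized ETH), and to check that the single prime $q$ chosen at the start simultaneously satisfies $q>2\gamma$ (as demanded by Theorem~\ref{thm:MLD to CVP in BBE+21}) while remaining a constant (as demanded by Theorem~\ref{theorem: hardness of MLD from ETH}). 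The $\ell_p$-norm enters only through Theorem~\ref{thm:MLD to CVP in BBE+21}, which already covers every $p\ge1$, so no norm-dependent work is needed.
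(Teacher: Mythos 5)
Your proposal is correct and follows essentially the same route as the paper's own proof: gap creation via Theorem~\ref{theorem: gap creating (main)}, $\Theta(\log\log\gamma)$ rounds of the amplification in Theorem~\ref{theorem: gap amplification for mld in BGKM18} to reach gap $2\gamma$, the reduction of Theorem~\ref{thm:MLD to CVP in BBE+21} to \textsc{CVP}, and the ETH lower bound of Theorem~\ref{theorem: hardness of MLD from ETH}. Your additional care --- using the Reed--Solomon--based deterministic gap creation with $k_1=O(k^3)$ so the conclusion rests on plain ETH, and fixing a constant prime $q$ with $2\gamma<q\le 4\gamma$ --- only makes explicit what the paper leaves implicit, and your accounting correctly yields an exponent $\Theta(1/\log\gamma)$, which subsumes the $\Theta(1/\gamma^{c})$ claimed in the statement.
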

\begin{proof}
    We first apply Theorem \ref{theorem: gap creating (main)} to obtain a \textsc{$(\frac{3}{2}-\varepsilon)$-Gap-$k_1$-MLD$_p$}, where $k_1=k^2\log k$. Then apply Theorem \ref{theorem: gap amplification for mld in BGKM18} for $\Theta(\log\log \gamma)$ times to amplify the gap to $2\gamma$, this will cause the parameter $k_2$ grows to $k_1^{\textsf{polylog}(\gamma)}=k^{\textsf{polylog}(\gamma)}$. Then, apply Theorem \ref{thm:MLD to CVP in BBE+21}, we obtain a instance of \textsc{$\gamma$-Gap-$2k_2$-CVP$_p$}. Combining with Theorem \ref{theorem: hardness of MLD from ETH}, we obtain the lower bound of $f(k)n^{\Omega(k^\epsilon)}$, where $\epsilon= \Theta(\frac{1}{(\log\gamma)^c})$ and $c$ is a fixed constant independent of $\gamma$ and $p$.
\end{proof}

\subsection{Shortest Vector Problem}
Combining our work with \cite{BCGR23}, we show two ways of obtaining running time lower bound for \textsc{$\gamma$-Gap-$k$-SVP$_p$}. The first way reduces from \textsc{Gap-$k$-CVP$_p$}, obtaining lower bound for only a fixed constant ratio and all $l_p$ norms where $p\geq 1$. The second way reduces from \textsc{Gap-$k$-NCP$_q$}, obtaining lower bound for all constant ratio and all $l_p$ norms except for $l_1$.

\subsubsection{Reduction From \textsc{Gap-$k$-CVP$_p$}}
\begin{theorem}[\cite{BCGR23}, Theorem 4.1 and 4.3, modified]\label{thm:CVP to SVP}
    For any real numbers $p\geq 1$ and $\gamma'\in[1,2)$ there exist a real number $\gamma\geq 1$
    \footnote{$\gamma=(\max\left(12/\varepsilon,\frac{1}{(1+\varepsilon/2)^{1/p}-1}\right))^p$ where $\varepsilon=(\gamma')^{-1}-1/2>0$.}
    and a reduction from \textsc{$\gamma$-Gap-$k$-CVP$_p$} to \textsc{$\gamma'$-Gap-$k'$-SVP$_p$} runs in polynomial time, where $k'\leq \gamma k$.
\end{theorem}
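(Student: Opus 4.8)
The plan is to carry out an $\ell_p$-metric version of the classical gadget reduction from \textsc{CVP} to \textsc{SVP} (following \cite{BCGR23}, after Khot and Micciancio). Write the input $\gamma$-Gap-$k$-CVP$_p$ instance as $(V=\{\vec v_1,\dots,\vec v_n\},\vec t)$ over $\mathbb{Z}^d$, with lattice $\Lambda=\{\sum_i c_i\vec v_i : c_i\in\mathbb{Z}\}$. I would append one ``weight'' coordinate together with a short block of ``gadget'' coordinates, and build the lattice $\Lambda'$ generated by: the shifted generators $(\vec v_i,0,\vec 0)$ for $i\in[n]$; a fixed \emph{locally dense} lattice $\mathbf L$ in the gadget block, carrying a designated deep hole $\vec s$ that has super-polynomially many lattice points within $\ell_p^p$-radius $c_0$ of it; and a single target generator $(\vec t,\ w,\ \vec s)$ for a weight $w$ to be tuned. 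Every vector of $\Lambda'$ is determined by the integer multiplicity $c$ of the target generator and equals $(\vec u+c\vec t,\ cw,\ \vec g+c\vec s)$ with $\vec u\in\Lambda$, $\vec g\in\mathbf L$; the new parameter is $k':=k+w^p+c_0$.

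Completeness is direct: a witness $\vec u\in\Lambda$ with $\|\vec t-\vec u\|_p^p\le k$, together with $c=-1$ and any of the many $\vec g\in\mathbf L$ within $\ell_p^p$-radius $c_0$ of $\vec s$, yields a nonzero vector of $\Lambda'$ of $\ell_p^p$-weight at most $k+w^p+c_0=k'$ — and, importantly, super-polynomially many such vectors. For soundness I would classify a short vector of $\Lambda'$ by its multiplicity $c$. If $|c|\ge 2$, the weight coordinate alone contributes $|cw|^p\ge 2^p w^p$, which exceeds $\gamma'k'$ once $w^p$ is a sufficiently large multiple of $k$ (roughly $\tfrac{\gamma'}{2^p-\gamma'}k$, legitimate since $\gamma'<2\le 2^p$). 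If $c=\pm1$, the first $d$ coordinates read $\pm\vec t-\vec u$ for some $\vec u\in\Lambda$, so in the \textsc{NO} case the $\ell_p^p$-weight is more than $\gamma k+w^p$, which beats $\gamma'k'$ provided $(\gamma-\gamma')k>(\gamma'-1)(w^p+c_0)$. Choosing $w$ so that $w^p$ sits in the resulting window while still keeping $k'=k+w^p+c_0\le\gamma k$ is an elementary computation; what matters is that the window is nonempty only when $\gamma$ is large — it shrinks as $\gamma'\to 2$, and the size of $c_0$ relative to $k$ is dictated by how densely the locally dense lattice can pack witnesses, namely only within an $\ell_p$-dilation $(1+\varepsilon/2)^{1/p}$ of its minimum with $\varepsilon=(\gamma')^{-1}-\tfrac12$. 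This is exactly the mechanism that produces the explicit, non-closed-form bound on $\gamma$ in the statement.

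The genuinely delicate case is $c=0$: a nonzero pure-lattice vector $(\vec u,0,\vec g)$ could spuriously certify \textsc{YES} for \textsc{SVP}. The plan here is Khot-style sparsification: intersect $\Lambda'$ with a random sublattice cut out by one linear constraint modulo a random prime $q=n^{O(1)}$. Any fixed nonzero vector survives with probability $\Theta(1/q)$, so by a union bound all of the ``bad'' $c=0$ vectors of $\Lambda'$ lying below the \textsc{SVP} threshold vanish with probability $1-o(1)$; simultaneously, because the completeness witness occurs in super-polynomially many copies (this is the sole purpose of $\mathbf L$), at least one copy survives, again with probability $1-o(1)$. Making both bounds succeed at once — controlling how many witnesses $\mathbf L$ can supply inside that tiny dilation, and how much headroom the two union bounds require — is what forces the two competing terms $12/\varepsilon$ and $1/\bigl((1+\varepsilon/2)^{1/p}-1\bigr)$ inside the $p$-th power defining $\gamma$, and the use of randomness is why the \textsc{SVP} lower bound recorded in Table~\ref{table: corollaries} is stated under \emph{randomized} ETH.

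I expect this $c=0$ step to be the main obstacle: the other cases are routine bookkeeping with the weight $w$, whereas showing that a single random sparsifying cut is simultaneously gap-preserving and witness-preserving — with success probability bounded away from the failure of either union bound — is precisely where the complicated $\gamma$ of the footnote becomes unavoidable. (For the CVP instances that arise downstream one can often sidestep sparsification by instead arranging that $\Lambda$ itself has no short vectors, making the $c=0$ case vacuous; but the general-purpose reduction needs the step above.) With Theorem~\ref{thm:CVP to SVP} established, composing it with Theorem~\ref{theorem: gap creating (main)}, the amplification of Theorem~\ref{theorem: gap amplification for mld in BGKM18}, and the \textsc{MLD}$\to$\textsc{CVP} reduction of Theorem~\ref{thm:MLD to CVP in BBE+21} (started from a prime-modulus \textsc{MLD$_q$} instance), while tracking $k'\le\gamma k$ at each step, yields the \textsc{$k$-SVP}, $\gamma'\in[1,2)$ line of Table~\ref{table: corollaries} via Theorem~\ref{theorem: hardness of MLD from ETH}.
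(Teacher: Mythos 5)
You are reconstructing an imported result: the paper never proves Theorem~\ref{thm:CVP to SVP} — it is taken from \cite{BCGR23} (Theorems 4.1 and 4.3, ``modified'' only in that the parameter growth $k'\leq\gamma k$ is tracked explicitly), and the paper's sole use of it is to compose it with Corollary~\ref{col:CVP lower bound under ETH}. So there is no internal proof to compare against; what can be judged is whether your sketch would stand on its own as a proof of the stated reduction. Your outline is in the right family (Khot-style homogenization: a weight coordinate, a locally dense gadget, random sparsification; the use of randomness is consistent with the randomized-ETH attribution in Table~\ref{table: corollaries}), but as written it does not constitute a proof, and the gap is exactly at the step you yourself flag as the crux.

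Concretely: (i) the locally dense lattice $\mathbf{L}$ is postulated, never constructed. In the parameterized regime the completeness budget forces the gadget radius to be $c_0=O(k)$ in $\ell_p^p$, and the number of points of any integer lattice of dimension $D$ within such a radius of any center is at most roughly $D^{O(k)}$; producing a gadget whose witness count at radius $c_0$ provably beats the sparsification loss, while keeping its own minimum (and dimension, and constructibility in FPT time) under control, is the actual content of the BCGR23 argument and is missing here. (ii) The $c=0$ analysis does not go through as stated for an \emph{arbitrary} \textsc{$\gamma$-Gap-$k$-CVP$_p$} instance: your union bound needs an upper bound on the number of nonzero vectors $(\vec u,0,\vec g)$ of $\Lambda'$ with $\ell_p^p$-norm at most $\gamma'k'$, and the hypotheses give none — $\Lambda\subseteq\mathbb{Z}^d$ only guarantees each such vector has norm at least $1$, and their count can be of order $n^{\Theta(k')}$, i.e.\ at least comparable to the $n^{O(c_0)}$ witnesses available from (i), so the two competing requirements (every bad vector dies, some witness survives) do not obviously close without a careful constant-level accounting that you do not supply. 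The standard resolution is to reduce from a promise version of CVP in which $\lambda_p(\Lambda)^p$ itself exceeds the SVP threshold, or to exploit that the CVP instances arising upstream (e.g.\ from Theorem~\ref{thm:MLD to CVP in BBE+21}) have this property; you mention this only as an optional aside, but for the theorem as you set it up it is the missing ingredient rather than a sidestep. (iii) Finally, the claim that your bookkeeping yields exactly the footnote's $\gamma=\bigl(\max\bigl(12/\varepsilon,\frac{1}{(1+\varepsilon/2)^{1/p}-1}\bigr)\bigr)^p$ is asserted, not derived; indeed the explicit constraints you do write down ($2^pw^p>\gamma'k'$ and $(\gamma-\gamma')k>(\gamma'-1)(w^p+c_0)$) would by themselves permit factors approaching $2^p$ for $p>1$, so the cap at $\gamma'<2$ must come from the unproven gadget/sparsification step — further evidence that the quantitative heart of the theorem is not established by the sketch.
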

\begin{corollary}
    Assuming randomized ETH, for any real numbers $p\geq 1$ and $\gamma\in[1,2)$, \textsc{$\gamma$-Gap-$k$-SVP$_{p}$} has no $O_k(n^{o(k^\epsilon)})$ time algorithm,where $0<\epsilon<1$ is some constant that depends on $p$ and $\gamma$.
\end{corollary}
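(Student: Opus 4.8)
The plan is to derive this lower bound by composing the ETH-based hardness of approximating \textsc{$k$-CVP$_p$} from Corollary~\ref{col:CVP lower bound under ETH} with the gap-preserving reduction from \textsc{Gap-$k$-CVP$_p$} to \textsc{Gap-$k$-SVP$_p$} of Theorem~\ref{thm:CVP to SVP}, whose only effect on the parameter is a constant-factor blow-up. There is no new combinatorial idea here; the content is purely that the quantities line up.

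First I would fix $p\geq 1$ and the target ratio $\gamma\in[1,2)$ and read off the associated CVP-side ratio from Theorem~\ref{thm:CVP to SVP}: putting $\varepsilon_0:=\gamma^{-1}-\tfrac12\in(0,\tfrac12]$ one gets $\gamma_{\mathrm{CVP}}:=\bigl(\max\bigl(12/\varepsilon_0,\ \tfrac{1}{(1+\varepsilon_0/2)^{1/p}-1}\bigr)\bigr)^p$, which is a finite constant depending only on $p$ and $\gamma$, together with a randomized polynomial-time reduction from \textsc{$\gamma_{\mathrm{CVP}}$-Gap-$k$-CVP$_p$} to \textsc{$\gamma$-Gap-$k'$-SVP$_p$} with $k'\leq\gamma_{\mathrm{CVP}}k$. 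Next I would feed $\gamma_{\mathrm{CVP}}$ into Corollary~\ref{col:CVP lower bound under ETH}, obtaining a constant $\epsilon=\Theta(\gamma_{\mathrm{CVP}}^{-c})$ (with $c$ the absolute constant of that corollary) such that, assuming randomized ETH, no randomized $O_k(n^{o(k^{\epsilon})})$-time algorithm solves \textsc{$\gamma_{\mathrm{CVP}}$-Gap-$k$-CVP$_p$}; the randomized form of this bound is what I need, and it is available because the only randomized ingredient in the proof of Corollary~\ref{col:CVP lower bound under ETH} is the gap-creating reduction of Theorem~\ref{theorem: gap creating (main)}, the remaining reductions all being deterministic.

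The composition is then mechanical. Setting $\epsilon:=\epsilon(p,\gamma)$ equal to the exponent above, I would suppose toward a contradiction that \textsc{$\gamma$-Gap-$k$-SVP$_p$} admitted a randomized $O_k(n^{o(k^{\epsilon})})$-time algorithm $\mathcal A$. On an instance of \textsc{$\gamma_{\mathrm{CVP}}$-Gap-$k$-CVP$_p$} with $n$ vectors, run the reduction of Theorem~\ref{thm:CVP to SVP} to produce a \textsc{$\gamma$-Gap-$k'$-SVP$_p$} instance of size $N=n^{O(1)}$ with $k'\leq\gamma_{\mathrm{CVP}}k$, then call $\mathcal A$; the total running time is $n^{O(1)}+O_{k'}\!\bigl(N^{o((k')^{\epsilon})}\bigr)=O_k\!\bigl(n^{o((\gamma_{\mathrm{CVP}}k)^{\epsilon})}\bigr)=O_k\!\bigl(n^{o(k^{\epsilon})}\bigr)$, the last step using that $\gamma_{\mathrm{CVP}}$ is a fixed constant so a constant factor inside the exponent is absorbed by the $o(\cdot)$. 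This contradicts the CVP lower bound and proves the corollary.

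I expect the only real friction to be bookkeeping rather than mathematics. One has to be comfortable with the fact that $\gamma_{\mathrm{CVP}}$, although given by the ``rather complicated'' expression of Theorem~\ref{thm:CVP to SVP} with no closed form, is nonetheless a constant once $p$ and $\gamma$ are fixed, so that $\epsilon=\Theta(\gamma_{\mathrm{CVP}}^{-c})$ is a legitimate positive constant depending only on $p$ and $\gamma$; and one must observe that the parameter growth $k\mapsto k'\leq\gamma_{\mathrm{CVP}}k$ is only multiplicative by a constant, which is exactly what lets $n^{o((k')^{\epsilon})}$ collapse back to $n^{o(k^{\epsilon})}$. Finally, the hypothesis must be randomized ETH rather than plain ETH precisely because both the CVP-to-SVP reduction of Theorem~\ref{thm:CVP to SVP} and the gap-creating reduction of Theorem~\ref{theorem: gap creating (main)} underlying the CVP bound are randomized; plain ETH would suffice only if a deterministic reduction to \textsc{SVP} were available, which it is not here.
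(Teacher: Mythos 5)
Your proposal is correct and follows essentially the same route as the paper: instantiate Corollary~\ref{col:CVP lower bound under ETH} with the CVP gap $\gamma_0$ dictated by Theorem~\ref{thm:CVP to SVP}, then compose with that CVP-to-SVP reduction, whose parameter blow-up $k'\leq \gamma_0 k=O(k)$ is only a constant factor, yielding the stated $O_k(n^{o(k^{\epsilon})})$ bound with $\epsilon$ depending only on $p$ and $\gamma$. Your extra remarks on the randomized-ETH bookkeeping and the absorption of the constant factor inside the $o(\cdot)$ exponent are consistent with the paper's (more terse) argument.
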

\begin{proof}
    We have shown in Corollary \ref{col:CVP lower bound under ETH} the lower bound of \textsc{$\gamma_0$-Gap-$k$-CVP$_p$} to be $f(k)n^{\Omega(k^{\epsilon_0})}$, where $\epsilon_0=\Theta(\frac{1}{\gamma_0^c})$ with $c>0$ being a global constant. Set $\gamma_0$ to fit the constant in Theorem \ref{thm:CVP to SVP}, which also depends on only $p$ and $\gamma$, we obtain a reduction to \textsc{$\gamma$-Gap-$k'$-SVP$_{p}$} where $k'\leq \gamma k=O(k)$. This gives the lower bound of \textsc{$\gamma$-Gap-$k$-SVP$_{p}$} to be $f(k)n^{o(k^{\epsilon})}$ under ETH, for some constant $\epsilon>0$ depends only on $p$ and $\gamma$.
\end{proof}

\subsubsection{Reduction From \textsc{Gap-$k$-NCP$_2$}}

\begin{theorem}[\cite{BCGR23}, Lemma 5.1 and Theorem 5.2, modified]\label{thm:NCP to SVP}
    There exists a constant real $\mu\geq 1$ such that, for any real numbers $p> 1$ and $\gamma'\geq 1$, 
    there exists a reduction from \textsc{$\mu$-Gap-$k$-NCP$_2$} to \textsc{$\gamma'$-Gap-$k'$-SVP$_p$} runs in polynomial time, where $k'=O(k^{c})$, $c>1$ is a constant only depends on $p$ and $\gamma'$\footnote{
    There are two problems here about the parameter blow-up, one is that $k'\leq (\mu k)^{O(1)}$ due to the Haviv-Regev ``tensoring'' step of \textsc{SVP}, the other is that 
    to achieve final gap $\gamma'$, the gap $\mu$ of \textsc{NCP} needs to satisfy $\frac{\mu}{2^p+1+\alpha \mu}>\gamma'$ for some $1/2+2^{-p}<\alpha<1$, causing a polynomial blow-up of parameter to achieve such $\mu$.}.
\end{theorem}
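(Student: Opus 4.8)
The plan is to obtain this reduction as a two-stage composition of the constructions of Bennett, Cheraghchi, Guruswami and Ribeiro~\cite{BCGR23}, carefully tracking how the parameter grows at each stage; the qualifier ``modified'' refers only to this bookkeeping, i.e.\ phrasing the source problem uniformly as \textsc{$\mu$-Gap-$k$-NCP$_2$} for a single fixed constant $\mu$ and making explicit that the exponent $c$ in $k'=O(k^c)$ depends only on $p$ and $\gamma'$. \textbf{Stage 1 (from \textsc{NCP}$_2$ to a fixed-gap \textsc{SVP}$_p$):} first I would invoke~\cite[Lemma 5.1]{BCGR23}, which embeds a \textsc{$\mu$-Gap-$k$-NCP$_2$} instance into a lattice built from a Haviv--Regev-type locally dense gadget, producing an \textsc{SVP}$_p$ instance whose $\ell_p$-gap is $\gamma_0:=\frac{\mu}{2^p+1+\alpha\mu}$ for a suitable $\alpha\in(1/2+2^{-p},1)$ determined by the gadget. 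Since $\gamma_0\to 1/\alpha>1$ as $\mu\to\infty$, taking $\mu$ to be a sufficiently large \emph{absolute} constant makes $\gamma_0$ a fixed constant strictly greater than $1$ --- this is precisely the inequality $\frac{\mu}{2^p+1+\alpha\mu}>\gamma_0$ mentioned in the footnote. The parameter blow-up here is polynomial, $k_0=(\mu k)^{O(1)}=O(k^{c_0})$, with $c_0$ depending only on $p$ through the local-density parameters of the gadget.

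\textbf{Stage 2 (gap amplification by tensoring):} next I would apply~\cite[Theorem 5.2]{BCGR23}, the tensor-product amplification: for a lattice in the special ``nice'' family for which $\lambda_1^{(p)}$ is exactly multiplicative under tensoring, the $t$-fold tensor power has $\ell_p$-gap $\gamma_0^{\,t}$ in both the completeness and soundness cases. Taking $t=\lceil\log\gamma'/\log\gamma_0\rceil=O_{p,\gamma'}(1)$ pushes the gap beyond $\gamma'$, at the cost of raising the parameter to $k'=k_0^{\,t}=k^{O(c_0 t)}=O(k^c)$ with $c=O(c_0\log\gamma'/\log\gamma_0)$ depending only on $p$ and $\gamma'$, as claimed. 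Both stages run in polynomial time, hence so does the composition.

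\textbf{The main obstacle} is Stage 2: for a generic lattice $\lambda_1^{(p)}(\mathcal{L}^{\otimes t})$ can be far smaller than $\lambda_1^{(p)}(\mathcal{L})^t$, so naive tensoring destroys the NO case. One must therefore verify that the \textsc{SVP}$_p$ instance output by Stage 1 lies in the combinatorial/code-based family carrying an $\ell_p$ analogue of the Haviv--Regev tensoring property, which is exactly what~\cite{BCGR23} establish; on our side the only thing to confirm is that the upstream \textsc{$\mu$-Gap-$k$-NCP$_2$} instances we feed in (obtained from Theorem~\ref{theorem: gap creating (main)} via the amplification of Theorem~\ref{theorem: gap amplification for mld in BGKM18} and the trivial \textsc{MLD}-to-\textsc{NCP} reduction) carry no structural hypotheses beyond being ordinary \textsc{NCP}$_2$ instances. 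Everything else is substitution of parameters.
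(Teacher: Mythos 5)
Your two-stage outline (apply \cite{BCGR23} Lemma~5.1 to get a fixed-gap \textsc{SVP}$_p$ instance, then amplify by the Haviv--Regev-style tensoring of their Theorem~5.2, using that the intermediate instances lie in the family where tensoring behaves) is the same route the paper intends, and your handling of the tensoring obstacle in Stage~2 is fine. The genuine gap is in Stage~1, where you dispose of the constraint $\frac{\mu}{2^p+1+\alpha\mu}>\gamma_0$ by ``taking $\mu$ to be a sufficiently large \emph{absolute} constant.'' For the first stage to output \emph{any} gap $\gamma_0>1$ one needs $\mu(1-\alpha)>2^p+1$ with $\alpha>1/2+2^{-p}$, i.e.\ $\mu>(2^p+1)/(1-\alpha)\gtrsim 2^{p+1}$; this threshold is unbounded as $p$ grows, so no single constant $\mu$ works for all $p>1$ under your plan. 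But the statement quantifies $\mu$ \emph{before} $p$ and $\gamma'$ (``there exists a constant $\mu\ge 1$ such that for any $p>1$ and $\gamma'\ge 1$\dots''), and the corollary that uses it depends on this order: the upstream pipeline produces a $\mu$-gap \textsc{NCP}$_2$ instance for one fixed universal $\mu$, with blow-up depending only on $\mu$, and all dependence on $p$ and $\gamma'$ must be absorbed inside this reduction. As written, your argument proves only the weaker statement with $\mu=\mu(p)$.

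The missing step --- and exactly what the second item of the footnote records --- is that the reduction must \emph{internally} amplify the gap of the given $\mu$-\textsc{Gap}-$k$-\textsc{NCP}$_2$ instance from the fixed absolute constant $\mu$ up to the $p$-dependent level required before Lemma~5.1 of \cite{BCGR23} can be applied (for instance via the \textsc{MLD} gap amplification of Theorem~\ref{theorem: gap amplification for mld in BGKM18} combined with the parameter-preserving \textsc{MLD}/\textsc{NCP} equivalence of Appendix~\ref{appendix: equivalence between MLD and NCP}). This internal amplification is the second source of the polynomial parameter blow-up and is part of why the exponent $c$ in $k'=O(k^c)$ depends on $p$; treating the required gap as a free property of the input, as you do, misreads the footnote and changes the theorem. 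A minor bookkeeping discrepancy in the same spirit: the paper attributes the $(\mu k)^{O(1)}$ blow-up to the Haviv--Regev tensoring step, the pre-tensoring reduction costing only $k'<\mu k$, whereas you already charge a polynomial blow-up to Stage~1; that by itself is harmless, but together with the point above it shows the parameter accounting in your sketch does not yet match the statement being proved.
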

The reduction in Theorem \ref{thm:NCP to SVP} in fact proceeds in two steps: first reduces \textsc{$\mu$-Gap-$k$-NCP$_2$} to \textsc{$\gamma'$-Gap-$k'$-SVP$_p$} for some fixed $\gamma'>1$ with $k'<\mu k$ (while having some additional properties for the second step), then use a tensor technique to amplify the gap to any constant.

\begin{corollary}
    Assuming randomized ETH, for any real numbers $p>1$ and $\gamma\geq 1$, \textsc{$\gamma$-Gap-$k$-SVP$_p$} has no $O_k(n^{o(k^{\epsilon})})$ time algorithm, where $0<\epsilon<1$ is some constant that depends on $p$ and $\gamma$.
\end{corollary}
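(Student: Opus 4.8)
The plan is to compose the reductions assembled above into a single chain that starts from the ETH lower bound for exact \textsc{$k$-MLD$_2$} and ends at \textsc{$\gamma$-Gap-$k$-SVP$_p$}, keeping careful track of the parameter blow-up at each stage so as to read off the exponent $\epsilon$. The overall shape mirrors the proof of the \textsc{$k$-MDP$_p$} corollary, but routed through Theorem~\ref{thm:NCP to SVP} instead of Theorem~\ref{thm:NCP to MDP in BCGR23}.

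First I would invoke Theorem~\ref{theorem: hardness of MLD from ETH}, which rules out $n^{o(k)}$-time algorithms for \textsc{$k$-MLD$_2$} under ETH. Applying the gap-creating reduction of Theorem~\ref{theorem: gap creating (main)} with some fixed $\varepsilon\in(0,\tfrac12)$ yields a \textsc{$(\frac{3}{2}-\varepsilon)$-Gap-$k_1$-MLD$_2$} instance with $k_1=O(k^2\log k)$; since this reduction is randomized, from this point the argument is conditioned on randomized ETH.

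Next I would amplify the gap to the fixed constant $\mu\geq 1$ required by Theorem~\ref{thm:NCP to SVP}: applying the gap amplification of Theorem~\ref{theorem: gap amplification for mld in BGKM18} to the instance against itself $\Theta(\log\log\mu)$ times raises the gap from $\frac{3}{2}-\varepsilon$ to $\mu$ at the cost of blowing the parameter up to $k_2=k_1^{\textsf{polylog}(\mu)}=k^{O(1)}$, the exponent being an absolute constant because $\mu$ is absolute. The parameter-preserving equivalence (Theorem~\ref{thm:MLDtoNCP}) turns this into a \textsc{$\mu$-Gap-$k_2$-NCP$_2$} instance, which I then feed into Theorem~\ref{thm:NCP to SVP} (valid for all $p>1$ and all $\gamma\geq 1$, and which reduces precisely from \textsc{NCP} over $\mathbb F_2$). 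That theorem outputs a \textsc{$\gamma$-Gap-$k_3$-SVP$_p$} instance with $k_3=O(k_2^{c})$, where $c>1$ depends only on $p$ and $\gamma$ and absorbs both the Haviv--Regev tensoring blow-up and the polynomial blow-up needed to attain $\mu$ described in its footnote. Composing everything gives $k_3=k^{c'}$ for a constant $c'=c'(p,\gamma)$, so an $O_k(n^{o(k_3^{1/c'})})$-time algorithm for \textsc{$\gamma$-Gap-$k$-SVP$_p$} would yield an $O_k(n^{o(k)})$-time algorithm for \textsc{$k$-MLD$_2$}, contradicting randomized ETH; hence the claim holds with $\epsilon=1/c'\in(0,1)$, and the composed reduction runs in $f(k)\cdot n^{O(1)}$ time so the lower bound transfers cleanly.

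The main obstacle is bookkeeping rather than conceptual: one must verify that every intermediate transformation composes and that the parameter growth stays \emph{polynomial} (not quasi-polynomial) once $\mu$, $p$, $\gamma$ are fixed --- in particular, that the $\Theta(\log\log\mu)$ self-amplifications of Theorem~\ref{theorem: gap amplification for mld in BGKM18} and the internal two-step tensoring inside Theorem~\ref{thm:NCP to SVP} each only raise the parameter to a constant power. One should also confirm that the ``additional properties'' of the \textsc{SVP} instance that Theorem~\ref{thm:NCP to SVP}'s tensoring step relies on are indeed guaranteed by that theorem's first step (so nothing needs to be threaded through the \textsc{MLD}/\textsc{NCP} reductions), and that the running time of the full composition remains $f(k)n^{O(1)}$ so that the base $n^{\Omega(k)}$ lower bound is preserved.
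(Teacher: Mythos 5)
Your proposal is correct and follows essentially the same route as the paper's proof: obtain a $\mu$-Gap-NCP$_2$ instance with polynomial parameter growth from the ETH-hard \textsc{$k$-MLD$_2$} instance (via the randomized gap creation of Theorem~\ref{theorem: gap creating (main)}, constantly many applications of Theorem~\ref{theorem: gap amplification for mld in BGKM18}, and the parameter-preserving MLD-to-NCP step), then apply Theorem~\ref{thm:NCP to SVP} and read off $\epsilon$ from the resulting polynomial blow-up $k\mapsto k^{c(p,\gamma)}$. The only difference is that you spell out the MLD$\to$NCP$_2$ chain that the paper compresses into one sentence, and you correctly flag the randomized-ETH requirement coming from the randomized gap-creating step.
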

\begin{proof}
     To fit the parameter requirement in the first step, we need a \textsc{$\mu$-Gap-$k_1$-NCP$_2$} instance, where $\mu$ is same as Theorem \ref{thm:NCP to SVP}. Such instances can be reduced from \textsc{$k$-MLD$_2$} with $k_1=O(k^{\epsilon_0})$ where $\epsilon_0$ is a constant depends only on $\mu$. Then, by applying Theorem \ref{thm:NCP to SVP}, we reduce \textsc{$\mu$-Gap-$k_1$-NCP$_2$} to \textsc{$\gamma$-Gap-$k_2$-SVP$_p$} for any $\gamma\geq 1$, and $k_2=O(k_1^{c})=O(k^{1/\epsilon})$, where $\epsilon=\Theta(\frac{1}{c})$ is a constant only depends on $p$ and $\gamma$ (and $\mu$, but omitted since it is a global constant indenpent with $p$ and $\gamma$). Therefore, under ETH, no algorithm can solve \textsc{$\gamma$-Gap-$k$-SVP$_p$} in time $f(k)n^{o(k^{\epsilon})}$.
\end{proof}
\section{Conclusion}
\label{Section: Conclusion}
We have presented new ETH-based lower bounds for approximating parameterized nearest codeword problem and its related problems, improving upon the previous results
from~\cite{BBE+21,BCGR23}. Our reduction technique is also simpler and more straight forward than the one used in~\cite{BBE+21}. However, our results still do not match the lower bound for constant Gap-$k$-NCP based on Gap-ETH~\cite{Man20}. A natural open problem is to close this gap by proving a stronger lower bound under an assumption that is weaker than Gap-ETH, such as constant Gap-$k$-Clique has no $n^{o(k)}$-time algorithm. This would be a key step towards understanding the fine-grained complexity of parameterized nearest codeword problem and its variants.
\begin{openproblem}
    Prove $n^{o(k)}$ time lower bound of approximating \textsc{$k$-NCP$_p$} or its related problems to any constant factor under assumptions weaker than Gap-ETH.
\end{openproblem}
To show such a result, as the comments in~\cite{Man20}, one might need to come up with a better “one-shot proof” that gives arbitrary constant factors without tensoring, and with linear parameter growth.

In this paper, we give a new method of composing \textit{threshold graph} with vector problems to yield hardness of approximation results. We showed the limitation of analyzing collision number of a code from its relative distance in~\cite{KN21,LRSW23}, and improved the analysis to bypass the limitation above. It might be interesting to consider whether this result can be further improved to yield threshold graph with better parameters, or some limitations of our method can be discovered, formally:
\begin{openproblem}
    Give a better construction of strong threshold graph in Section~\ref{Introduction-Technical Overview-Gap Creation} with $h=\Omega(k)$ and $m=O(k)$,     or show that such graphs do not exist.
\end{openproblem}

\bibliographystyle{alpha}
\bibliography{main}

\newcommand{\etalchar}[1]{$^{#1}$}
\begin{thebibliography}{DFVW99}

\bibitem[ABB{\etalchar{+}}23]{ABBG0LPS23}
Divesh Aggarwal, Huck Bennett, Zvika Brakerski, Alexander Golovnev, Rajendra Kumar, Zeyong Li, Spencer Peters, Noah Stephens{-}Davidowitz, and Vinod Vaikuntanathan.
\newblock Lattice problems beyond polynomial time.
\newblock In Barna Saha and Rocco~A. Servedio, editors, {\em Proceedings of the 55th Annual {ACM} Symposium on Theory of Computing, {STOC} 2023, Orlando, FL, USA, June 20-23, 2023}, pages 1516--1526. {ACM}, 2023.

\bibitem[ABGS21]{ABGS21}
Divesh Aggarwal, Huck Bennett, Alexander Golovnev, and Noah Stephens{-}Davidowitz.
\newblock Fine-grained hardness of {CVP(P)} - everything that we can prove (and nothing else).
\newblock In D{\'{a}}niel Marx, editor, {\em Proceedings of the 2021 {ACM-SIAM} Symposium on Discrete Algorithms, {SODA} 2021, Virtual Conference, January 10 - 13, 2021}, pages 1816--1835. {SIAM}, 2021.

\bibitem[ABSS97]{ABSS97}
Sanjeev Arora, L{\'{a}}szl{\'{o}} Babai, Jacques Stern, and Z.~Sweedyk.
\newblock The hardness of approximate optima in lattices, codes, and systems of linear equations.
\newblock {\em J. Comput. Syst. Sci.}, 54(2):317--331, 1997.

\bibitem[Ajt98]{Ajtai98}
Mikl{\'{o}}s Ajtai.
\newblock The shortest vector problem in \emph{L\({}_{\mbox{2}}\)} is \emph{NP}-hard for randomized reductions (extended abstract).
\newblock In Jeffrey~Scott Vitter, editor, {\em Proceedings of the Thirtieth Annual {ACM} Symposium on the Theory of Computing, Dallas, Texas, USA, May 23-26, 1998}, pages 10--19. {ACM}, 1998.

\bibitem[AK14]{AK14}
Per Austrin and Subhash Khot.
\newblock A simple deterministic reduction for the gap minimum distance of code problem.
\newblock {\em {IEEE} Trans. Inf. Theory}, 60(10):6636--6645, 2014.

\bibitem[APY09]{AlonPY09}
Noga Alon, Rina Panigrahy, and Sergey Yekhanin.
\newblock Deterministic approximation algorithms for the nearest codeword problem.
\newblock In Irit Dinur, Klaus Jansen, Joseph Naor, and Jos{\'{e}} D.~P. Rolim, editors, {\em Approximation, Randomization, and Combinatorial Optimization. Algorithms and Techniques, 12th International Workshop, {APPROX} 2009, and 13th International Workshop, {RANDOM} 2009, Berkeley, CA, USA, August 21-23, 2009. Proceedings}, volume 5687 of {\em Lecture Notes in Computer Science}, pages 339--351. Springer, 2009.

\bibitem[AS18]{AS18}
Divesh Aggarwal and Noah Stephens{-}Davidowitz.
\newblock (gap/s)eth hardness of {SVP}.
\newblock In Ilias Diakonikolas, David Kempe, and Monika Henzinger, editors, {\em Proceedings of the 50th Annual {ACM} {SIGACT} Symposium on Theory of Computing, {STOC} 2018, Los Angeles, CA, USA, June 25-29, 2018}, pages 228--238. {ACM}, 2018.

\bibitem[Bar94]{Barg1994}
S~Barg.
\newblock Some new np-complete coding problems.
\newblock {\em Problemy Peredachi Informatsii}, 30(3):23--28, 1994.

\bibitem[BBE{\etalchar{+}}21]{BBE+21}
Arnab Bhattacharyya, {\'{E}}douard Bonnet, L{\'{a}}szl{\'{o}} Egri, Suprovat Ghoshal, {Karthik {C. S.}}, Bingkai Lin, Pasin Manurangsi, and D{\'{a}}niel Marx.
\newblock Parameterized intractability of even set and shortest vector problem.
\newblock {\em J. {ACM}}, 68(3):16:1--16:40, 2021.

\bibitem[BCGR23]{BCGR23}
Huck Bennett, Mahdi Cheraghchi, Venkatesan Guruswami, and Jo{\~{a}}o Ribeiro.
\newblock Parameterized inapproximability of the minimum distance problem over all fields and the shortest vector problem in all $\ell_p$ norms.
\newblock In Barna Saha and Rocco~A. Servedio, editors, {\em Proceedings of the 55th Annual {ACM} Symposium on Theory of Computing, {STOC} 2023, Orlando, FL, USA, June 20-23, 2023}, pages 553--566. {ACM}, 2023.

\bibitem[Ben23]{Bennett23}
Huck Bennett.
\newblock The complexity of the shortest vector problem.
\newblock {\em {SIGACT} News}, 54(1):37--61, 2023.

\bibitem[BGKM18]{BGKM18}
Arnab Bhattacharyya, Suprovat Ghoshal, {Karthik {C. S.}}, and Pasin Manurangsi.
\newblock Parameterized intractability of even set and shortest vector problem from gap-eth.
\newblock In Ioannis Chatzigiannakis, Christos Kaklamanis, D{\'{a}}niel Marx, and Donald Sannella, editors, {\em 45th International Colloquium on Automata, Languages, and Programming, {ICALP} 2018, July 9-13, 2018, Prague, Czech Republic}, volume 107 of {\em LIPIcs}, pages 17:1--17:15. Schloss Dagstuhl - Leibniz-Zentrum f{\"{u}}r Informatik, 2018.

\bibitem[BKN21]{BKN21}
Boris Bukh, {Karthik {C. S.}}, and Bhargav Narayanan.
\newblock Applications of random algebraic constructions to hardness of approximation.
\newblock In {\em 62nd {IEEE} Annual Symposium on Foundations of Computer Science, {FOCS} 2021, Denver, CO, USA, February 7-10, 2022}, pages 237--244. {IEEE}, 2021.

\bibitem[BMvT78]{BMT78}
Elwyn~R. Berlekamp, Robert~J. McEliece, and Henk C.~A. van Tilborg.
\newblock On the inherent intractability of certain coding problems (corresp.).
\newblock {\em {IEEE} Trans. Inf. Theory}, 24(3):384--386, 1978.

\bibitem[BPT22]{BPT22}
Huck Bennett, Chris Peikert, and Yi~Tang.
\newblock Improved hardness of {BDD} and {SVP} under gap-(s)eth.
\newblock In Mark Braverman, editor, {\em 13th Innovations in Theoretical Computer Science Conference, {ITCS} 2022, January 31 - February 3, 2022, Berkeley, CA, {USA}}, volume 215 of {\em LIPIcs}, pages 19:1--19:12. Schloss Dagstuhl - Leibniz-Zentrum f{\"{u}}r Informatik, 2022.

\bibitem[Che08]{Cheng08a}
Qi~Cheng.
\newblock Hard problems of algebraic geometry codes.
\newblock {\em {IEEE} Trans. Inf. Theory}, 54(1):402--406, 2008.

\bibitem[CL19]{CL19}
Yijia Chen and Bingkai Lin.
\newblock The constant inapproximability of the parameterized dominating set problem.
\newblock {\em {SIAM} J. Comput.}, 48(2):513--533, 2019.

\bibitem[CW12]{CW12}
Qi~Cheng and Daqing Wan.
\newblock A deterministic reduction for the gap minimum distance problem.
\newblock {\em {IEEE} Trans. Inf. Theory}, 58(11):6935--6941, 2012.

\bibitem[DFVW99]{DFVW99}
Rodney~G. Downey, Michael~R. Fellows, Alexander Vardy, and Geoff Whittle.
\newblock The parametrized complexity of some fundamental problems in coding theory.
\newblock {\em {SIAM} J. Comput.}, 29(2):545--570, 1999.

\bibitem[Din16]{dinur2016mildly}
Irit Dinur.
\newblock Mildly exponential reduction from gap-3sat to polynomial-gap label-cover.
\newblock In {\em Electronic colloquium on computational complexity ECCC; research reports, surveys and books in computational complexity}, 2016.

\bibitem[DKRS03]{DKRS03}
Irit Dinur, Guy Kindler, Ran Raz, and Shmuel Safra.
\newblock Approximating {CVP} to within almost-polynomial factors is np-hard.
\newblock {\em Comb.}, 23(2):205--243, 2003.

\bibitem[DMS03]{DMS03}
Ilya Dumer, Daniele Micciancio, and Madhu Sudan.
\newblock Hardness of approximating the minimum distance of a linear code.
\newblock {\em {IEEE} Trans. Inf. Theory}, 49(1):22--37, 2003.

\bibitem[Fei98]{Feige98}
Uriel Feige.
\newblock A threshold of ln \emph{n} for approximating set cover.
\newblock {\em J. {ACM}}, 45(4):634--652, 1998.

\bibitem[FM04]{FM04}
Uriel Feige and Daniele Micciancio.
\newblock The inapproximability of lattice and coding problems with preprocessing.
\newblock {\em J. Comput. Syst. Sci.}, 69(1):45--67, 2004.

\bibitem[GMSS99]{GMSS99}
Oded Goldreich, Daniele Micciancio, Shmuel Safra, and Jean{-}Pierre Seifert.
\newblock Approximating shortest lattice vectors is not harder than approximating closest lattice vectors.
\newblock {\em Inf. Process. Lett.}, 71(2):55--61, 1999.

\bibitem[GRS23a]{GRS23}
Venkatesan Guruswami, Xuandi Ren, and Sai Sandeep.
\newblock Baby pih: Parameterized inapproximability of min csp.
\newblock {\em Electron. Colloquium Comput. Complex.}, {TR23-155}, 2023.

\bibitem[GRS23b]{EssentialCodingTheory}
Venkatesan Guruswami, Atri Rudra, and Madhu Sudan.
\newblock Essential coding theory.
\newblock {\em Draft available at https://cse.buffalo.edu/faculty/atri/courses/coding-theory/book/}, 2(1), 2023.

\bibitem[GV05]{GV05}
Venkatesan Guruswami and Alexander Vardy.
\newblock Maximum-likelihood decoding of reed-solomon codes is np-hard.
\newblock {\em {IEEE} Trans. Inf. Theory}, 51(7):2249--2256, 2005.

\bibitem[HR12]{HR12}
Ishay Haviv and Oded Regev.
\newblock Tensor-based hardness of the shortest vector problem to within almost polynomial factors.
\newblock {\em Theory Comput.}, 8(1):513--531, 2012.

\bibitem[IP01]{IP01}
Russell Impagliazzo and Ramamohan Paturi.
\newblock On the complexity of k-sat.
\newblock {\em J. Comput. Syst. Sci.}, 62(2):367--375, 2001.

\bibitem[IPZ01]{IPZ01}
Russell Impagliazzo, Ramamohan Paturi, and Francis Zane.
\newblock Which problems have strongly exponential complexity?
\newblock {\em J. Comput. Syst. Sci.}, 63(4):512--530, 2001.

\bibitem[Kho05]{Khot05}
Subhash Khot.
\newblock Hardness of approximating the shortest vector problem in lattices.
\newblock {\em J. {ACM}}, 52(5):789--808, 2005.

\bibitem[KN21]{KN21}
{Karthik {C. S.}} and Inbal~Livni Navon.
\newblock On hardness of approximation of parameterized set cover and label cover: Threshold graphs from error correcting codes.
\newblock In Hung~Viet Le and Valerie King, editors, {\em 4th Symposium on Simplicity in Algorithms, {SOSA} 2021, Virtual Conference, January 11-12, 2021}, pages 210--223. {SIAM}, 2021.

\bibitem[Lin18]{Lin18}
Bingkai Lin.
\newblock The parameterized complexity of the \emph{k}-biclique problem.
\newblock {\em J. {ACM}}, 65(5):34:1--34:23, 2018.

\bibitem[Lin19]{Lin19}
Bingkai Lin.
\newblock A simple gap-producing reduction for the parameterized set cover problem.
\newblock In Christel Baier, Ioannis Chatzigiannakis, Paola Flocchini, and Stefano Leonardi, editors, {\em 46th International Colloquium on Automata, Languages, and Programming, {ICALP} 2019, July 9-12, 2019, Patras, Greece}, volume 132 of {\em LIPIcs}, pages 81:1--81:15. Schloss Dagstuhl - Leibniz-Zentrum f{\"{u}}r Informatik, 2019.

\bibitem[LRSW22]{LinRSW22}
Bingkai Lin, Xuandi Ren, Yican Sun, and Xiuhan Wang.
\newblock On lower bounds of approximating parameterized k-clique.
\newblock In Mikolaj Bojanczyk, Emanuela Merelli, and David~P. Woodruff, editors, {\em 49th International Colloquium on Automata, Languages, and Programming, {ICALP} 2022, July 4-8, 2022, Paris, France}, volume 229 of {\em LIPIcs}, pages 90:1--90:18. Schloss Dagstuhl - Leibniz-Zentrum f{\"{u}}r Informatik, 2022.

\bibitem[LRSW23]{LRSW23}
Bingkai Lin, Xuandi Ren, Yican Sun, and Xiuhan Wang.
\newblock Constant approximating parameterized \emph{k}-setcover is w[2]-hard.
\newblock In Nikhil Bansal and Viswanath Nagarajan, editors, {\em Proceedings of the 2023 {ACM-SIAM} Symposium on Discrete Algorithms, {SODA} 2023, Florence, Italy, January 22-25, 2023}, pages 3305--3316. {SIAM}, 2023.

\bibitem[Man20]{Man20}
Pasin Manurangsi.
\newblock Tight running time lower bounds for strong inapproximability of maximum \emph{k}-coverage, unique set cover and related problems (via \emph{t}-wise agreement testing theorem).
\newblock In Shuchi Chawla, editor, {\em Proceedings of the 2020 {ACM-SIAM} Symposium on Discrete Algorithms, {SODA} 2020, Salt Lake City, UT, USA, January 5-8, 2020}, pages 62--81. {SIAM}, 2020.

\bibitem[Mic00]{Mic00}
Daniele Micciancio.
\newblock The shortest vector in a lattice is hard to approximate to within some constant.
\newblock {\em {SIAM} J. Comput.}, 30(6):2008--2035, 2000.

\bibitem[Mic01]{Mic01}
Daniele Micciancio.
\newblock The hardness of the closest vector problem with preprocessing.
\newblock {\em {IEEE} Trans. Inf. Theory}, 47(3):1212--1215, 2001.

\bibitem[Mic14]{Mic14}
Daniele Micciancio.
\newblock Locally dense codes.
\newblock In {\em {IEEE} 29th Conference on Computational Complexity, {CCC} 2014, Vancouver, BC, Canada, June 11-13, 2014}, pages 90--97. {IEEE} Computer Society, 2014.

\bibitem[MR16]{manurangsi2016birthday}
Pasin Manurangsi and Prasad Raghavendra.
\newblock A birthday repetition theorem and complexity of approximating dense csps.
\newblock {\em arXiv preprint arXiv:1607.02986}, 2016.

\bibitem[Raz98]{Raz98}
Ran Raz.
\newblock A parallel repetition theorem.
\newblock {\em {SIAM} J. Comput.}, 27(3):763--803, 1998.

\bibitem[Reg04]{Regev04}
Oded Regev.
\newblock Improved inapproximability of lattice and coding problems with preprocessing.
\newblock {\em {IEEE} Trans. Inf. Theory}, 50(9):2031--2037, 2004.

\bibitem[Reg09]{Reg09JACM}
Oded Regev.
\newblock On lattices, learning with errors, random linear codes, and cryptography.
\newblock {\em J. {ACM}}, 56(6):34:1--34:40, 2009.

\bibitem[Reg10]{Reg10}
Oded Regev.
\newblock The learning with errors problem (invited survey).
\newblock In {\em Proceedings of the 25th Annual {IEEE} Conference on Computational Complexity, {CCC} 2010, Cambridge, Massachusetts, USA, June 9-12, 2010}, pages 191--204. {IEEE} Computer Society, 2010.

\bibitem[Ste93]{Ste93}
Jacques Stern.
\newblock Approximating the number of error locations within a constant ratio is np-complete.
\newblock In G{\'{e}}rard~D. Cohen, Teo Mora, and Oscar Moreno, editors, {\em Applied Algebra, Algebraic Algorithms and Error-Correcting Codes, 10th International Symposium, AAECC-10, San Juan de Puerto Rico, Puerto Rico, May 10-14, 1993, Proceedings}, volume 673 of {\em Lecture Notes in Computer Science}, pages 325--331. Springer, 1993.

\bibitem[Var97]{Vardy97}
Alexander Vardy.
\newblock The intractability of computing the minimum distance of a code.
\newblock {\em {IEEE} Trans. Inf. Theory}, 43(6):1757--1766, 1997.

\end{thebibliography}

\appendix
\section{Gap-preserving Reductions Between Parameterized MLD and NCP}\label{appendix: equivalence between MLD and NCP}
We first present a direct reduction from \textsc{$\gamma$-Gap-$k$-MLD$_{p}$} to \textsc{$\gamma$-Gap-$k$-NCP$_{p}$}.
\begin{theorem}
\label{thm:MLDtoNCP}
There is a deterministic FPT reduction that on input a \textsc{$\gamma$-Gap-$k$-MLD$_{p}$} instance with vector number $n$ and length $l$, output a \textsc{$\gamma$-Gap-$k$-NCP$_{p}$} instance with vector number $n$ and length $\gamma kl+n$.
\end{theorem}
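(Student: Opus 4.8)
The plan is to give a direct, padding-based reduction that converts the ``minimum weight of a solution to $A\vec x=\vec t$'' objective of \textsc{MLD} into the ``minimum distance of $\vec t$ to a codeword'' objective of \textsc{NCP}. Given the input \textsc{$\gamma$-Gap-$k$-MLD$_p$} instance $(V,\vec t)$ with $V=\{\vec v_1,\dots,\vec v_n\}\subseteq\F_p^l$, let $A\in\F_p^{l\times n}$ be the matrix whose columns are the $\vec v_i$, set $d=\lceil\gamma k\rceil$, and output the \textsc{NCP} instance whose generator $B$ is the vertical stack of $d$ copies of $A$ on top of $I_n$, with target $\vec t'=(\vec t,\dots,\vec t,\vec 0_n)$ ($d$ copies of $\vec t$). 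Concretely the $i$-th output vector is $\vec u_i=(\vec v_i,\dots,\vec v_i,\vec e_i)$, so all $n$ output vectors are distinct (the $I_n$-block separates them) and live in $\F_p^{dl+n}$, matching the claimed length up to the rounding of $\gamma k$; the reduction is deterministic, runs in polynomial (hence \FPT) time, and leaves the parameter $k$ untouched. The engine of the whole analysis is the identity
\[
\|\vec t'-B\vec c\|_0 \;=\; d\cdot\|A\vec c-\vec t\|_0+\|\vec c\|_0 \qquad\text{for every }\vec c\in\F_p^n,
\]
which holds because the stacked blocks do not interact.

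\textbf{Completeness.} I would take a \textsc{YES} witness of \textsc{MLD}, i.e.\ $k$ distinct indices and nonzero coefficients giving $\vec c$ with $A\vec c=\vec t$ and $\|\vec c\|_0=k$; then $\|\vec t'-B\vec c\|_0=\|\vec c\|_0=k$, so the \textsc{NCP} instance is \textsc{YES}. The point of the $I_n$-block is exactly to make the residual weight equal to $\|\vec c\|_0$ rather than $0$, and distinctness of the chosen indices is what forces $\|\vec c\|_0=k$ (rather than something smaller).

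\textbf{Soundness.} Assume the \textsc{MLD} instance is \textsc{NO}, so no $\vec c$ with $1\le\|\vec c\|_0\le\gamma k$ solves $A\vec c=\vec t$, and in particular $\vec t\neq\vec 0$. For an arbitrary $\vec c$ I would check three cases: if $\|\vec c\|_0>\gamma k$ the residual weight is already $>\gamma k$; if $1\le\|\vec c\|_0\le\gamma k$ then $A\vec c\neq\vec t$ forces $\|A\vec c-\vec t\|_0\ge 1$, so the residual weight is $\ge d+\|\vec c\|_0>\gamma k$; and if $\vec c=\vec 0$ the residual weight is $d\cdot\|\vec t\|_0\ge d\ge\gamma k$. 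The last bound is strict except in the corner case where $\gamma k$ is an integer and $\|\vec t\|_0=1$, which I would dispatch by a trivial preprocessing step — replace each $\vec v_i$ by $(\vec v_i,\vec v_i)$ and $\vec t$ by $(\vec t,\vec t)$, which preserves the \textsc{MLD} instance while forcing $\|\vec t\|_0\ge 2$ (doubling $l$) — or, even more cheaply, by taking $d=\lceil\gamma k\rceil+1$ copies. In all cases every $\vec c$ yields residual weight $>\gamma k$, so the \textsc{NCP} instance is \textsc{NO}.

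The reduction is elementary and I do not expect a genuine obstacle; the only step that needs care is the weight amplification in the soundness direction — repeating $A$ about $\gamma k$ times is precisely what makes a single mismatched coordinate of $A\vec c-\vec t$ already overshoot the threshold $\gamma k$ — together with the separate (and easy) treatment of the all-zero combination $\vec c=\vec 0$ noted above.
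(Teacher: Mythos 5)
Your reduction is exactly the one in the paper: pad each vector with $\lceil\gamma k\rceil$ copies of itself plus an identity block $\vec e_i$, take $\vec t'=\vec t^{\lceil\gamma k\rceil}\circ\vec 0_n$, and use the additive split of the residual weight into the repeated-$A$ part and the $\|\vec c\|_0$ part for completeness and soundness. The argument is correct and essentially identical; if anything, your explicit treatment of $\vec c=\vec 0$ and of the strictness corner case when $\gamma k$ is an integer is more careful than the paper's write-up.
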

\begin{proof}
    Given the instance of \textsc{$\gamma$-Gap-$k$-MLD$_{p}$} with $n$ vectors, each of length $l$, we mark them as $V=\{v_1,\cdots, v_n\}$, and let the target vector be $t$. We construct a new instance with vector set $V'=\{v_1',\cdots, v_n'\}$ and target $t'$ as follows.
    \begin{itemize}
        \item For each $i\in[n]$, $v_i'=(v_i)^{\lceil \gamma k\rceil}\circ (0^{i-1}10^{n-i})$.
        \item $t'=t^{\lceil \gamma k\rceil}\circ 0^n$.
    \end{itemize}

    We now show completeness and soundness of this reduction.
    
        \noindent\textbf{Completeness.}
        Assume there exists $k_0\leq k$ vectors $v_{i_1},\cdots, v_{i_{k_0}} \in V$ and their corresponding coefficients $c_{i_1}, \cdots, c_{i_{k_0}}$ that $\Sigma_{j\in[k_0]} c_{i,j}v_{i,j}=t$. We show that $||\Sigma_{j\in[k_0]} c_{i,j}v_{i,j}'-t'||_0 =k_0\leq k$. This is trivial since $\Sigma_{j\in[k_0]} c_{i,j}v_{i,j}'-t'=t^{\lceil \gamma k\rceil}\circ (\Sigma_{i\in[k_0]}\vec{e}_{i})-t'=0^{\lceil\gamma kl\rceil}\circ (\Sigma_{i\in[k_0]}\vec{e}_{i})$.

        \noindent\textbf{Soundness.}
        Assume that any set of vectors that spans a vector space containing $t$ must have cardinality at least $\gamma k$. Fix any set of vectors $S'=\{v'_{i_1},\cdots, v'_{i_|S|}\}\subseteq V'$, each vector $v'_{i_j}$ in $S'$ is associated with a non-zero coefficient $c_{j}$, if $|S'|\geq \gamma k$, then
        \[
        ||\Sigma_{j\in [|S'|]} c_{j} v'_{i_j} - t||_0 \geq ||\Sigma_{j\in [|S'|]} c_{j}\vec{e}_{i_{j}} - 0^{n}||_0 =\gamma k.
        \]
        If $|S'|<\gamma k$, then $\Sigma_{j\in [|S'|]} c_j v_{i_j}\neq t$, and we have
        \[
        ||\Sigma_{j\in [|S'|]} c_{j} v'_{i_j} - t||_0 \geq \lceil \gamma k \rceil||\Sigma_{j\in [|S'|]} c_j v_{i_j}-t||_0\geq \lceil \gamma k \rceil\geq \gamma k.
        \]
\end{proof}

Now we show a reduction from \textsc{$\gamma$-Gap-$k$-NCP$_{p}$} to \textsc{$\gamma$-Gap-$k$-MLD$_{p}$}. Without loss of generality, we can assume the vector length $m$ is no less than vector number $n$ in \textsc{NCP} instance.
\begin{theorem}
    There is a deterministic polynomial time reduction that on input a \textsc{$\gamma$-Gap-$k$-NCP$_p$} instance with vector number $n$ and length $m$, output a \textsc{$\gamma$-Gap-$k$-MLD$_p$} instance with vector number $m$ and length $m-n$.
\end{theorem}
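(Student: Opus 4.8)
The plan is to exploit the classical duality between the generator-matrix view of a linear code, used implicitly by \textsc{NCP}, and the parity-check / syndrome-decoding view, used by \textsc{MLD}. Write $G=[\vec v_1\mid\cdots\mid\vec v_n]\in\mathbb{F}_p^{m\times n}$ for the matrix whose columns are the input vectors, and let $\mathcal{C}=\operatorname{colspan}(G)\subseteq\mathbb{F}_p^m$ be the associated code. We may assume without loss of generality that $\vec v_1,\dots,\vec v_n$ are linearly independent: discarding redundant columns changes neither $\mathcal{C}$ nor the \textsc{NCP} answer, and then $\dim\mathcal{C}=n$, so there is a full-row-rank parity-check matrix $H\in\mathbb{F}_p^{(m-n)\times m}$ with $HG=\vec 0$ and $\ker H=\mathcal{C}$, computable in polynomial time by Gaussian elimination. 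First I would output the \textsc{$\gamma$-Gap-$k$-MLD$_p$} instance $(W,\vec s)$, where $W=\{\vec h_1,\dots,\vec h_m\}$ is the multiset of columns of $H$ — so the vector number is $m$ and the length is $m-n$ — and $\vec s:=H\vec t\in\mathbb{F}_p^{m-n}$, with the parameter $k$ unchanged.

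The heart of the argument is the weight-preserving correspondence $\{\vec e\in\mathbb{F}_p^m : H\vec e=\vec s\}=\vec t+\mathcal{C}=\{\vec t-G\vec c : \vec c\in\mathbb{F}_p^n\}$, which holds because $H\vec e=\vec s=H\vec t$ iff $\vec e-\vec t\in\ker H=\mathcal{C}$. Note that the \textsc{NCP} ``error vector'' $\vec w=\vec t-\sum_i c_i\vec v_i=\vec t-G\vec c$ is precisely such an $\vec e$, with $\|\vec w\|_0$ unchanged. Moreover, writing a vector through its support, $\vec e=\sum_{i\in\operatorname{supp}(\vec e)}\vec e[i]\,\mathbf 1_i$ with $\mathbf 1_i$ the $i$-th standard basis vector, the identity $\vec s=H\vec e=\sum_{i\in\operatorname{supp}(\vec e)}\vec e[i]\,\vec h_i$ displays $\vec s$ as a combination of exactly $\|\vec e\|_0$ vectors of $W$ with nonzero coefficients $\vec e[i]\in\mathbb{F}_p^+$; conversely, any combination $\sum_j\alpha_j\vec h_{i_j}=\vec s$ with distinct indices and $\alpha_j\in\mathbb{F}_p^+$ yields $\vec e=\sum_j\alpha_j\mathbf 1_{i_j}$ with $H\vec e=\vec s$ and $\|\vec e\|_0$ equal to the number of summands.

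Completeness and soundness then follow directly. For completeness, a YES instance of \textsc{$\gamma$-Gap-$k$-NCP$_p$} gives $\vec c$ with $\|\vec t-G\vec c\|_0\le k$; the corresponding $\vec e$ supplies at most $k$ vectors of $W$ summing (with nonzero coefficients) to $\vec s$, hence a YES instance of \textsc{$k$-MLD$_p$}. For soundness I would argue the contrapositive: if $(W,\vec s)$ is not a NO instance, then some $\ell\le\gamma k$ vectors of $W$ combine to $\vec s$, which produces $\vec e$ with $\|\vec e\|_0=\ell\le\gamma k$ and thus $\vec c$ with $\|\vec t-G\vec c\|_0\le\gamma k$, so $(V,\vec t)$ is not a NO instance. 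The linear-algebraic core here is entirely routine; the only points needing a word of care — and the closest thing to an obstacle — are the normalization justifying ``length exactly $m-n$'' (the linear-independence reduction, together with the standing assumption $m\ge n$) and reconciling the ``at most $k$'' versus ``exactly $k$'' readings of the \textsc{MLD} YES condition, both of which are dispatched by the standard padding remarks already used for these problems.
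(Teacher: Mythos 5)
Your reduction is correct and is essentially the paper's own: the paper passes to the code's systematic form, and its MLD vector set $\{\vec v_1,\dots,\vec v_n\}\cup\{\vec e_1,\dots,\vec e_{m-n}\}\subseteq\mathbb{F}_p^{m-n}$ with target $\vec t'$ is exactly (up to signs, which are irrelevant since MLD allows arbitrary nonzero coefficients) the set of columns of a parity-check matrix together with the syndrome of $\vec t$ --- precisely what you construct abstractly via $H$ with $\ker H=\operatorname{colspan}(G)$ and $\vec s=H\vec t$. Your weight-preserving coset/support correspondence, the WLOG linear-independence and $m\ge n$ normalizations, and the relaxed treatment of ``exactly $k$'' versus ``at most $k$'' in the MLD YES condition all mirror the paper's argument.
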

\begin{proof}
    Without loss of generality assume the input instance contains $n$ linear independent vectors $\Vec{u}_1,\cdots, \Vec{u}_n$ and a target vector $\Vec{t}_0$ from $\mathbb{F}_p^m$. Let $\vec{e}_i\in \mathbb{F}_p^m$ be the unit vector having $1$ in the $i$-th entry, then for all $i\in[n]$, there exists $\vec{v}_i\in\mathbb{F}_p^{m-n}$ such that $\vec{e}_i \circ \vec{v}_i$ is the linear combination of $\Vec{u}_1,\cdots, \Vec{u}_n$. Also, there exists $\vec{t}'\in\mathbb{F}_p^{m-n}$ satisfies $(0^{n}\circ \vec{t'})-\vec{t}$ is the linear combination of $\{\vec{e}_i \circ \vec{v}_i\}_{1\leq i\leq n}$ (and also $\Vec{u}_1,\cdots, \Vec{u}_n$). 

    We show that this transformation preserves the completeness and soundness. Let $a_1,\cdots, a_n\in\mathbb{F}_p$ and $\vec{w}\in \mathbb{F}_p^m$ satisfy that $\Sigma_{1\leq i\leq n}a_i\vec{u}_i=\vec{t}+\vec{w}$. Then, there must exists $c_1,\cdots, c_n\in\mathbb{F}_p$ that $\Sigma_{1\leq i\leq n}c_i(\vec{e}_i \circ \vec{v}_i)=(0^{n}\circ \vec{t'})+\vec{w}$ since $\{\Vec{u}_1,\cdots, \Vec{u}_n\}$ and $\{\vec{e}_i \circ \vec{v}_i\}_{1\leq i\leq n}$ can linearly represent each other. Therefore, the new instance is equivalent to the original instance.

    Now consider an \textsc{$k$-MLD$_p$} instance having vector set $\{\vec{v}_1,\cdots, \vec{v}_n\}\cup \{\vec{e}_1,\cdots, \vec{e}_{m-n}\}$ (here each $\vec{e}_i$ is in $\mathbb{F}_p^{m-n}$) and target vector $\vec{t}'$. The completeness comes from that there exists $c_1,\cdots, c_n\in\mathbb{F}_p$ satisfying
    \begin{align*}
        ||\Sigma_{1\leq i\leq n}c_i(\vec{e}_i \circ \vec{v}_i) - (0^{n}\circ \vec{t'})||_0\leq k
    \end{align*}
    which means
    \begin{align*}
        ||\Sigma_{1\leq i\leq n}c_i\vec{e}_i - 0^{n}||_0+ ||\Sigma_{1\leq i\leq n}c_i\vec{v}_i - \vec{t'}||_0\leq k
    \end{align*}
    indicating that there are at most $k$ of $c_1,\cdots, c_n$ are not zero, and $||\Sigma_{1\leq i\leq n}c_i\vec{v}_i - \vec{t'}||_0\leq k-||(c_1,\cdots, c_n)||_0$. To simplify the notation, we assume that the non-zero coefficients are $c_1,\cdots, c_{k_0}$ where $k_0\leq k$. Then, in the \textsc{$k$-MLD$_p$} instance we choose $c_1,\cdots, c_{k_0}$ to be the coefficients of $\vec{v}_1,\cdots, \vec{v}_{k_0}$, and choose $(k-k_0)$ vectors in $\{\vec{e}_1, \cdots, \vec{e}_{m-n}\}$ that corresponds to $(k-k_0)$, each with the corresponding coefficient. The sum of them is exactly $\vec{t}'$ and the vector number we choose is exactly $k$.

    For soundness, we have $||\vec{w}||_0>\gamma k$ no matter what the coefficients $c_1,\cdots, c_n$ are. Assume we have a solution to the \textsc{$k$-MLD$_p$} instance with size $k'\leq \gamma k$, and without loss of generality assume that it is $\{c_1\vec{v}_1,\cdots, c_{k_0}\vec{v}_{k_0},d_1\vec{e}_1, \cdots, d_{k'-k_0}\vec{e}_{k'-k_0}\}$ (coefficients attached). Let $c_{k_0+1}=\cdots = c_n=0$, we immediately have $||\Sigma_{1\leq i\leq n} c_i\vec{v}_i - \vec{t}_0||_0=k'-k_0$. Then, we have
    \begin{align*}
        &||\Sigma_{1\leq i\leq n}c_i(\vec{e}_i \circ \vec{v}_i) - (0^{n}\circ \vec{t'})||_0\\
        =&||\Sigma_{1\leq i\leq n}c_i\vec{e}_i - 0^{n}||_0+ ||\Sigma_{1\leq i\leq n}c_i\vec{v}_i - \vec{t'}||_0\\
        =&k_0+(k'-k_0)\\
        =&k'\leq \gamma k,
    \end{align*}
    a contradiction. Hence, we have showed that the new instance is a \textsc{$\gamma$-Gap-$k$-MLD$_p$} instance. Finally, the reduction clearly runs in polynomial time and preserves the parameter $k$.
\end{proof}

\end{document}